\newtheorem{theorem}{Theorem}[section]   
\newtheorem{lemma}[theorem]{Lemma}
\newtheorem{proposition}[theorem]{Proposition}
\theoremstyle{definition}
\numberwithin{equation}{section}
\numberwithin{theorem}{section}
\newtheorem{note}[theorem]{Note}
\newcommand{\rme}{\mathrm{e}}
\newcommand{\rmi}{\mathrm{i}}
\newcommand{\rmd}{\mathrm{d}}
\newcommand{\eps}{\varepsilon}
\newcommand{\lav}{M_f^{loc}}
\newcommand{\dif}{\mathrm{d}}
\newcommand{\cL}{\mathcal{L}_{gl}}
\newcommand{\ccL}{\mathcal{L}_{loc}}
\newcommand{\R}{\mathbb{R}}
\newcommand{\gk}{\mathcal{G}_k}
\newcommand{\Tt}{\mathbb{T}}
\newcommand{\be}{\begin{equation}}
\newcommand{\ee}{\end{equation}}
\newcommand{\lv}{\left\vert}
\newcommand{\rv}{\right\vert}
\newcommand{\pa}{\partial}
\newcommand{\app}{\zeta}
\newcommand{\Fx}{F_{\xi}}
\newcommand{\rel}{\Re(\lambda)}
\newcommand{\iml}{\mathrm{Im}(\lambda)}
\newcommand{\dk}{d_k}
\newcommand{\ck}{c_k}
\newcommand{\rk}{r_k}
\DeclareMathOperator{\atan}{atan}
\title{Non-mean-field Vicsek-type models for collective behaviour}
\author[1]{ P. Butt\`a}
\author[2]{ B. Goddard}
\author[3]{ T. M. Hodgson} 
\author[4]{ M. Ottobre}
\author[5]{ K. J. Painter}
\affil[1]{Dipartimento di Matematica, SAPIENZA Universit\`a di Roma, P.le Aldo Moro 5, 00185 Rome, Italy,  butta@mat.uniroma1.it}
\affil[2]{School of Mathematics and Maxwell Institute for Mathematical Sciences, University of
Edinburgh, Edinburgh EH9 3FD, UK. bgoddard@ed.ac.uk}
\affil[3]{Maxwell Institute for Mathematical Sciences and Mathematics Department, Heriot-Watt University, Edinburgh EH14 4AS, UK. tmh2@hw.ac.uk}
\affil[4]{Maxwell Institute for Mathematical Sciences and Mathematics Department, Heriot-Watt University, Edinburgh EH14 4AS, UK. m.ottobre@hw.ac.uk}
\affil[5]{DIST, Politecnico di Torino, Torino, Italy, Viale Pier Andrea Mattioli, 39, 10125.  kevin.painter@polito.it}
\date{}
\begin{document}

\maketitle

\begin{abstract}
	We consider interacting particle dynamics with Vicsek type interactions, and their macroscopic PDE limit,  in the non-mean-field regime; that is, we consider the case in which each  particle/agent in the system interacts only with a prescribed subset of the particles in the system (for example, those within a certain distance). In this non-mean-field regime the  influence between agents (i.e. the interaction term) can be normalised either by the total number of agents in the system (\textit{global scaling}) or by the number of agents with which the particle is effectively interacting  (\textit{local scaling}). We compare the behaviour of the globally scaled and the locally scaled systems in many respects, considering for each scaling both the PDE and the corresponding particle model.  In particular we observe that both the locally and globally scaled particle system exhibit pattern formation (i.e. formation of travelling-wave-like solutions) within certain parameter regimes, and generally display similar dynamics. The same is not true of the corresponding PDE models. Indeed,   while both PDE models have multiple stationary states, for the globally scaled PDE such (space-homogeneous) equilibria are unstable for certain parameter regimes, with the instability leading to travelling wave solutions, while they are always stable for the locally scaled one, which never produces travelling waves. This observation is based on a careful numerical study of the model, supported by further analysis.
	\smallskip

	{\bf Keywords.} Interacting Particle dynamics, kinetic PDEs, Pseudospectral methods, travelling wave solutions, pattern formation, Vicsek model.
\end{abstract}

\section{Introduction}
The study of Interacting Particle Systems (IPSs) and related kinetic equations  has attracted the interest of the mathematics and physics communities for decades. Such interest  is kept alive by the continuous successes of this framework in modelling of a vast range of phenomena, in diverse fields such as biology, social sciences, control engineering, economics, statistical sampling and simulation \cite{pareschiInteractingMultiagentSystems2013, bianchiDynamicsBiologicalSystems2019}. Recent applications to e.g. neural networks \cite{sirignanoMeanFieldAnalysis2020} are only some of the examples of how the study of IPSs is proving able to be exploited in a growing number of research areas.   While such a large body of research has undoubtedly produced significant progress over the years, many important questions in this field remain open.

One is often not interested in a detailed description of the IPS itself, but rather in its collective behaviour. The standard methodology in statistical mechanics is to look for simplified equations, obtained by letting the number $N$ of particles in the system tend to infinity; in the limit one finds a  Partial Differential Equation (PDE) for the density of particles.\footnote{Depending on the nature of the IPS at hand, the macroscopic limit might of course lead to a Stochastic PDE, but we will not consider such models in this paper.}  Hence, when modelling IPSs, one has typically two levels of description: one given by the original IPS, also commonly referred to as the {agent-based} model, and the other given by the macroscopic PDE model. In this context, particle systems with mean-field interactions, i.e. `all-to-all' type interactions,  have been extensively studied and the standard paradigm refers in fact to this case (though even in the mean-field case several technical challenges are still to be solved, with some long standing ones having been only recently brought to a close \cite{breschMeanfieldLimitQuantitative2020}).
A  question which naturally arises in applications concerns the case in which each particle interacts only with certain other particles in the system, for example with those within a given range of interaction. This non-mean-field regime\footnote{In this paper we say that the interaction between particles is of mean-field type if every particle interacts with every other particle, so that, in the limit, every particle is interacting with infinitely many other particles. We say that the interaction is non-mean-field when each particle interacts only with a subset of the particles in the system and we emphasize that, in the limit,  each agent might still be interacting with infinitely many other agents. A better name for this regime would be `local mean field'; we do not use it only to avoid confusion with the local/global scaling nomenclature which we will introduce. } is the one on which we focus in this paper. In different scenarios from the one we consider here the study of the non-mean-field regime has also been tackled in \cite{barreFastNonmeanfieldNetworks2021, bayraktarStationarityUniformTime2022, motschNewModelSelforganized2011,krauseDiscreteNonlinearNonautonomous2000, canutoEulerianApproachAnalysis2012, goddardNoisyBoundedConfidence2022}, without any claim to completeness of references.

Let us explain the questions  tackled in this paper for a toy model (we defer the full presentation of the models that we will consider to Section \ref{sec:model}); to this end, consider a standard  model for $N$ diffusing particles,   of the form
$$
	\dif X_t^{i,N} = \frac{1}{N}\sum_{j=1}^N K(|X_t^{i,N}-X_t^{j,N}|) \dif t + \sqrt{2\sigma} \,\dif B_t^i \, ,
$$
where $X_t^{i,N} \in \R^d$ denotes the position of the $i$-th particle (or agent) at time $t$, $|\cdot|$ is the Euclidean norm,   $K:\R_+ \rightarrow \R^d$ is the so-called interaction kernel (here taken to depend on the distance between particles just to fix ideas, but could be more general, and indeed this fact is not enforced in our model), the $B^i_t$'s are $d$-dimensional independent standard Brownian motions and $\sigma>0$ is the diffusion constant. If $K$ is non-zero (almost) everywhere (and,  say,  bounded) then every particle  interacts with every other particle and we are in the mean-field case;  if  interested in the many particles regime, we would let $N\rightarrow \infty$ and the pre-factor $1/N$ is the natural and correct correct scaling needed to ensure a well posed limit exists,  otherwise the first term on the RHS may simply explode \cite{motschNewModelSelforganized2011}. However, when the interaction function is not supported on the whole $\R_+$  -- for example, if it is of the form $\mathbf{1}_{[0,R]}(|x|)K(|x|)$ where $\mathbf{1}_{[0,R]}$ is the characteristic function of the interval $[0,R]$ -- then each particle only interacts with other particles  within distance $R$. In this case it is no longer obvious how the interaction term should be normalised: one could either still normalise by $1/N$ ({\em global scaling}) or instead normalise by the number of particles contained in the  ball of radius $R$ and centered at $X^{i,N}_t$, i.e. consider the following model
\begin{equation}\label{intro:re}
	\dif X_t^{i,N} = \frac{1}{\# \{j: |X^{i,N}_t-X^{i,N}_t|\leq R\}}\sum_{j=1}^N \mathbf{1}_{[0,R]}(|X^{i,N}_t- X^{j,N}_t|){K}(|X^{i,N}_t- X^{j,N}_t|) \dif t + \sqrt{2\sigma} \dif B_t^i\,.
\end{equation}
This time the interaction is scaled by the number of particles with which particle $i$ is effectively interacting  at time $t$ ({\em local scaling}).
From a strictly mathematical point of view, under appropriate assumptions on $K$ both normalizations make sense and one can obtain a well-posed limiting PDE either way (though in general the rigorous justification of the limit in the local scaling case is a lot harder \cite{debusscheExistenceMartingaleSolutions2020}). However, as noted in \cite{motschNewModelSelforganized2011} in the context of Cucker-Smale type models, from a modelling perspective the global scaling has the drawback that the motion of every agent is influenced by the total number $N$ of agents even when, in principle, each agent should only be influenced by a subset of the other particles (in our running example those within distance $R$).  In   \cite[Section 2.1]{motschNewModelSelforganized2011} the authors give a nice heuristic explanation of this fact, on which we  will elaborate  in Note \ref{note:heuristics}.  Simplistically,  a starling in Edinburgh would be influenced by a starling in Rome, when it should only be influenced by its fellow flock members.

In models with the local scaling many questions become generally quite difficult to answer analytically,   and this is mostly due to the fact that the scaling factor, i.e. the denominator in front of the sum on the RHS of \eqref{intro:re} is time-dependent and, in general, complicated to keep track of analytically; we will return to this point for the specific model that we will consider, but we point out for now that this is usually circumvented by considering interaction functions with fast decay rather than with compact support (see e.g. \cite{motschNewModelSelforganized2011}).

In this paper we consider an interacting particle system of Vicsek-type \cite{vicsekNovelTypePhase1995}, under both global and local scaling; moreover, for each scaling, we consider the corresponding PDE dynamics as well. We use the acronyms LS IPS/GS IPS and LS PDE/GS PDE for Locally Scaled/Globally Scaled Interacting Particle system and PDE, respectively.
The GS IPS  we consider here has already been studied in \cite{garnierMeanFieldModel2019}; there the authors simulated the GS IPS and found that, in certain parameter regimes, the GS IPS exhibits pattern formation, i.e. the particles organise in  non space-homogeneous configurations (in particular they observed the formation of travelling bands). Linear stability analysis on the macroscopic GS PDE model yielded  sufficient conditions under which the steady states of the PDE -- which are homogeneous in space -- are stable. Such sufficient conditions were shown in \cite{garnierMeanFieldModel2019} to be compatible with simulations of the GS IPS, in the sense that  the GS IPS generates  travelling bands in the parameter regime in which the sufficient condition for the stability of the (space-homogeneous) steady states of the PDE is violated. Summarising,  the results of \cite{garnierMeanFieldModel2019} imply that the formation of travelling bands in the GS IPS is due to an instability of the  steady states (of the PDE) --  which are, we iterate, space-homogeneous.

In this paper we don't repeat the simulations of the GS IPS and  for such simulations we rely on \cite{garnierMeanFieldModel2019}. However we do simulate both the LS IPS and the GS and LS PDEs. Let us emphasise that the macroscopic models we consider are kinetic PDEs; they are not in gradient form, non-elliptic and non-linear, and, due to the nature of the non-linearity and to the lack of gradient flow structure, the analytical study of the long-time behaviour of such models (the LS PDE in particular, see Section \ref{sec:background and literature}) is known to be difficult \cite{benedettoKineticEquationGranular1997}.
So, to avoid simplifying the nature of the interaction function we first rely, in the first part of this paper, on a careful numerical study of the mentioned PDEs and IPSs. The simulation of the GS and LS PDEs is itself a non-trivial task (because of all the difficulties mentioned above), demanding state of the art pseudospectral methods (comments on the choice of numerical scheme can be found in Section \ref{sec:preliminaries}).
The results of our simulations show that while the behaviour of the LS/GS particle systems is deceptively similar, at least in the sense that both IPSs seem to not homogenise in space in certain parameter regimes, the behaviour of the LS PDE is qualitatively quite different from the behaviour of the GS PDE. In particular, within the Vicsek-type models that we consider, both PDEs exhibit multiple stationary states (and indeed the same set of stationary states, modulo some caveats, see Section \ref{sec:model}) but, while in the GS PDE such stationary states are unstable for certain parameter regimes, with the instability leading to formation of travelling wave solutions,  in the LS PDE model the stationary states are always observed to be stable, irrespective of the choice of parameter regime. In other words, while both the LS and the GS IPS exhibit pattern formation as a result of the instability of the space-homogeneous distribution, such instability is retained (in the limit $N\rightarrow\infty$) by the GS PDE but is lost in the LS PDE (see Figure \ref{fig:kde_plots} for a visual recap). This is coherent with the generally  known fact that one should be careful about deducing properties of PDE models from properties of the corresponding IPS or vice versa, see e.g. \cite{dawsonCriticalDynamicsFluctuations1983, arielLocustCollectiveMotion2015}.

To further understand these numerical results we first performed a linear stability analysis of the LS PDE, yielding a sufficient condition for the stability of the (space-homogeneous) steady states. However, in contrast to what happens with the GS models, even in the region of parameter space in which this stability condition  is broken we never observed any travelling waves or other space-inhomogeneous patterns. To get some further understanding we then performed a formal calculation of the spectrum of the (linearised) LS and GS PDEs. Such a calculation shows that the spectrum of the (linearised) GS PDE can possess positive and negative eigenvalues (again, depending on parameter choices), while the real part of the eigenvalues of the linearised LS PDE is never positive. We will be more precise in stating this result in Section \ref{sec:spectrum}.

Finally, as we have already mentioned,  \textit{ modulo some caveats}, the GS and LS PDEs have the same set of stationary states. A brief comment on this: while the set of stationary states for the GS PDE is relatively simple to determine analytically, the same is not true for the LS PDE. In particular, it is easy to see that the steady states of the GS PDE are also steady states of the LS PDE, but it is difficult to prove that for the LS PDE these are the only steady states \cite{buttaNonlinearKineticModel2019, garnierMeanFieldModel2019}. One of the purposes of this paper is to create numerical evidence for this conjecture and hence inform further analytical developments in this direction. More comments on this in Section \ref{sec:background and literature}.

\begin{figure}
	\centering
	\begin{minipage}{0.25\linewidth}
		\centering
		\includegraphics[width=\linewidth]{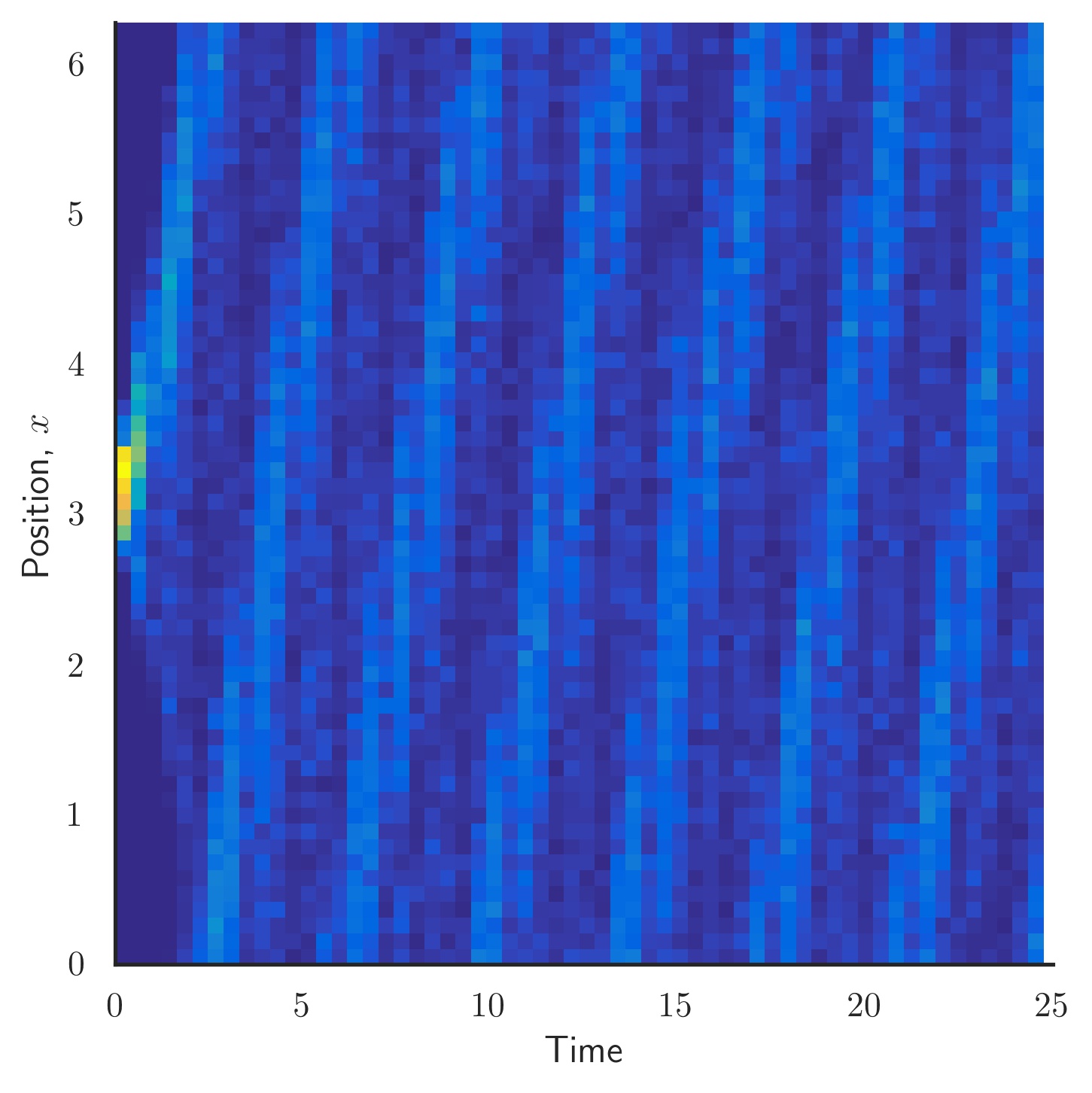}
		\subcaption{GS IPS}
	\end{minipage}%
	\begin{minipage}{0.25\linewidth}
		\centering
		\includegraphics[width=\linewidth]{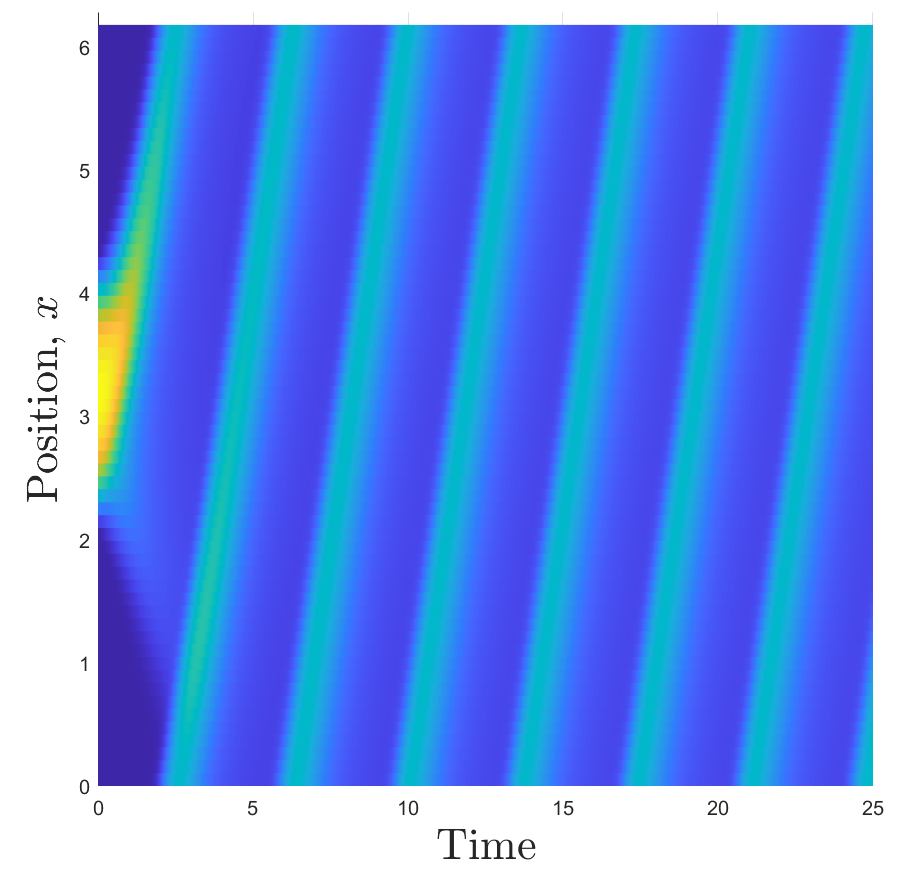}
		\subcaption{GS PDE}
	\end{minipage}\\
	\begin{minipage}{0.25\linewidth}
		\centering
		\includegraphics[width=\linewidth]{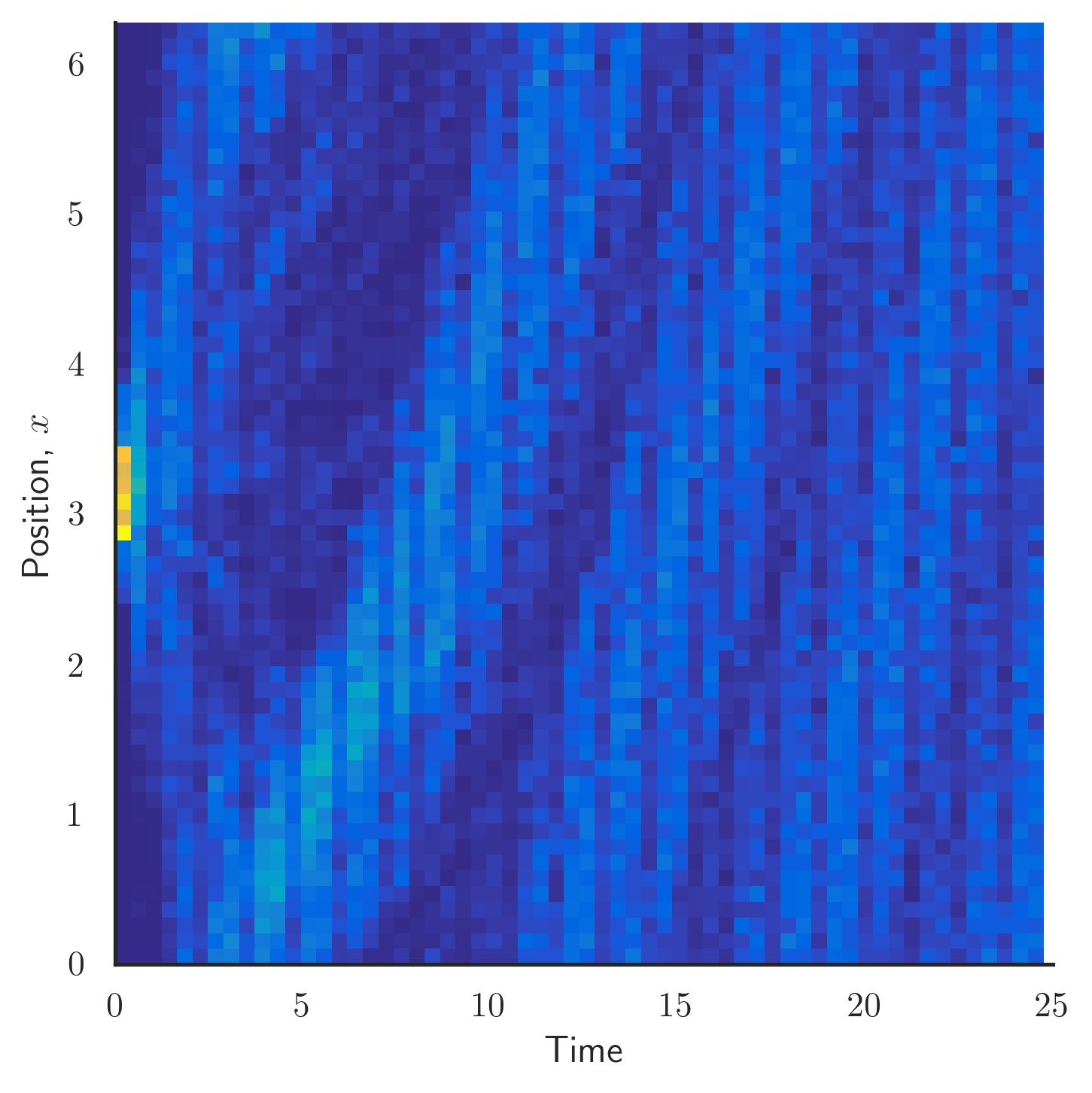}
		\subcaption{LS IPS}
	\end{minipage}%
	\begin{minipage}{0.25\linewidth}
		\centering
		\includegraphics[width=\linewidth]{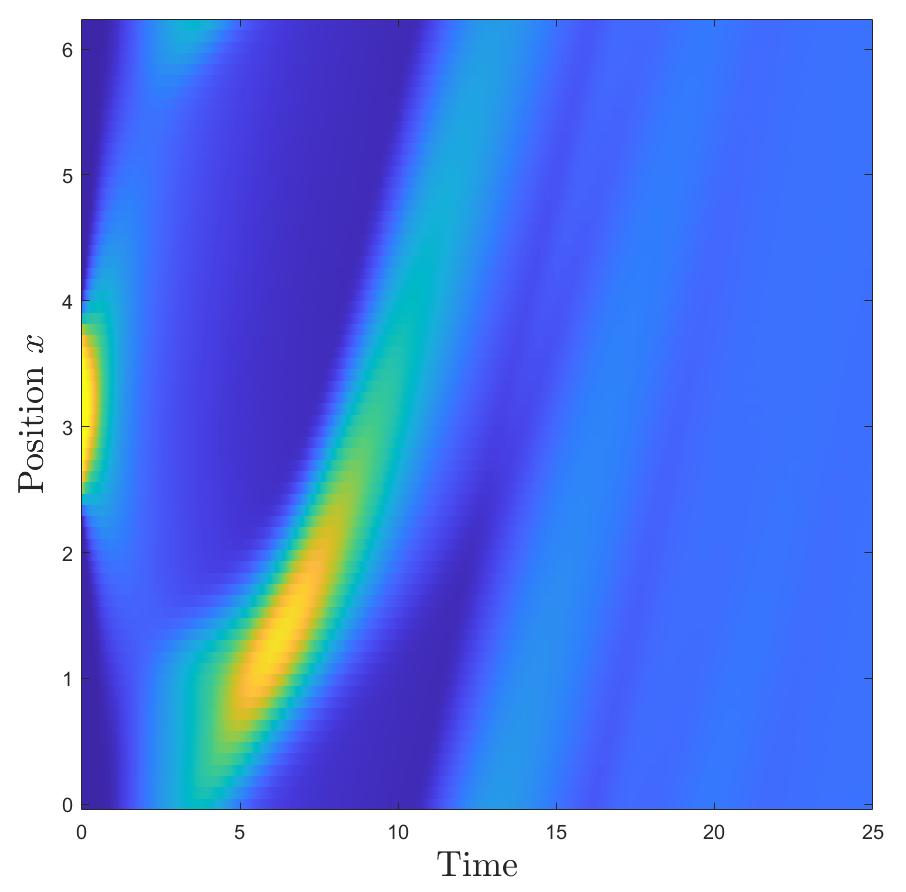}
		\subcaption{LS PDE}
	\end{minipage}
	\caption{For appropriate parameter regimes, the GS IPS \eqref{Garnierparsys1}-\eqref{Garnierparsys2}, LS IPS \eqref{parsys1}-\eqref{parsys2} and GS PDE \eqref{nlGarnier} exhibit pattern formation (travelling-wave-like behaviour) while the LS PDE \eqref{nl} homogenises in space. }
	\label{fig:kde_plots}
\end{figure}

The literature on IPSs and their associated PDEs is incredibly vast; a full literature review is out of reach hence in the above discussion we have mentioned only the works which are closest to this paper. For further works on emergence of travelling bands in various types of consensus formation models see e.g. \cite{aceves-sanchezLargeScaleDynamicsSelfpropelled2020, bouinTravellingWavesCane2014, calvezNonlocalCompetitionSlows2022, eftimieModelingGroupFormation2007, saragostiMathematicalDescriptionBacterial2010, hendersonPropagationSolutionsFisherKPP2016, wangNoisyHegselmannKrauseSystems2017,krukTravelingBandsClouds2020, eigentlerMetastabilityCoexistenceMechanism2019, eigentlerSpatialSelforganisationEnables2020} and references therein, to mention just a few.

The paper is organised as follows.
In Subsection \ref{subs:model} we describe the model considered in this paper and in Subsection \ref{sec:background and literature} we state our main results and put them in the context of existing literature.
In Section \ref{sec:preliminaries} we gather the necessary preliminary information regarding the numerical methods of choice; Section \ref{sec:ParticleSystem} and Section \ref{sec:TimeDependentPDESim}, respectively,  are  devoted to simulations of the involved particle system and PDEs, respectively. Section \ref{sec:linearstabilityanalysis} is devoted to the linear stability analysis for the LS PDE and, finally, Section \ref{sec:spectrum} is devoted to a formal calculation of the spectrum of the linearization of both  the GS and LS PDEs. We emphasize that  the analysis of Section \ref{sec:spectrum} is not a complete spectral analysis, but a calculation of (part of) the spectrum of the involved operators.  We give full details of why this is the case at the beginning of Section \ref{sec:spectrum} and in Note \ref{note:spectrumchoiceofspace}.

\section{Model and Main Results}\label{sec:model}
In Subsection \ref{subs:model} we present the models that we consider; in Subsection \ref{sec:background and literature}  we give more background on the introduced models,  recalling what is known and what are the main challenges in studying them, present our main results and put them in context of relevant literature.
\subsection{Description of the model}\label{subs:model}
We consider a  system of $N$ interacting particles evolving on the one-dimensional torus $\mathbb T$ of length $L$.  For each $i\in \{1, \dots, N\}$,  we denote by  $x_t^{i,N} \in \mathbb{T}$ and $v_t^{i,N} \in \mathbb{R}$, respectively, the position and velocity of the $i$-th particle at time $t$; the dynamics  of  $x_t^{i,N}$ and $v_t^{i,N}$ is described by the following system of SDEs
\begin{align}
	\mathrm{d} {x}_t^{i,N} & = v_t^{i,N} \, \mathrm{d} t\,,  \label{parsys1} \\\
	\mathrm{d} {v}_t^{i,N} & = - v_t^{i,N} \mathrm{d} t +
	G\left(\frac{ \sum_{j\neq i}^N \varphi (x_t^{i,N} - x_t^{j,N}) v_t^{j,N}}{ \sum_{j\neq i}^N \varphi (x_t^{i,N} - x_t^{j,N})}\right) \mathrm{d} t + \sqrt{2\sigma} \mathrm{d} B_t^{i}\,,\label{parsys2}
\end{align}
where the $B_t^i$'s are independent one-dimensional standard Brownian motions and  $\sigma>0$ is a given parameter. The functions $\varphi: \mathbb T \rightarrow \R_+$ and $G:\R \rightarrow \R$, are the so-called {\em interaction function} and  {\em herding function}, respectively.

When $\varphi$ is identically equal to one or strictly positive this is a mean-field type model; that is, every particle interacts with every other particle on the torus.  If the support of $\varphi$ is not the whole torus then the interaction is local, i.e.  particle $i$ interacts only with the particles $j$ within the support of $\varphi$. We will explain the modelling role of $G$ and $\varphi$ in more detail below; for the time being we clarify that, while the (support of the) function $\varphi$ determines the ``neighbourhood" (in space) over which each particle interacts, the function $G$ describes the way in which particle $i$ adjusts its velocity to the average velocity of its neighbours and, ultimately, it is responsible for consensus formation. Consistently with  the physics literature \cite{buhlDisorderOrderMarching2006, czirokCollectiveMotionSelfPropelled1999} we  assume that $G$ is of the form
\begin{equation}
	\label{herdingG}
	G(u) = - G(-u)\,, \qquad \left\{\begin{array}{ll} G(u) > u & \textrm{ if } 0<u<1\,, \\ G(u) < u & \textrm{ if } u>1\, , \end{array}\right.
\end{equation}
and in particular that $G$ has fixed points at $\pm 1$, i.e. $G(\pm 1) = \pm1$. {Functions of this form may have a jump at zero; to avoid this we take smooth versions of $G$, which have a third fixed point at the origin (this is not very dynamically relevant, we will explain later why).}

In \cite{buttaNonlinearKineticModel2019} it was shown that, as $N\rightarrow \infty$, the empirical measure of the particle system \eqref{parsys1}-\eqref{parsys2} converges to the continuum evolution for the particles' density $f(t,x,v)=f_t(x,v): \R_+\times \mathbb T \times \R \rightarrow \R$ described by the following nonlinear PDE\footnote{Throughout the paper we will deal with time dependent quantities. The notations $f(t)$ and $f_t$ will be used interchangeably to denote time-dependence.}
\begin{equation}
	\label{nl}
	\partial_t f_t(x,v)=-  v\,\partial_x f_t(x,v) - \partial_v\big\{\big[G(M_{f_t}(x)) - v\big] f_t(x,v) \big\} + \sigma\,\partial_{vv}f_t(x,v)\,,
\end{equation}
where the nonlinearity $M_f$ is given by
\begin{equation}
	\label{Mnl}
	M_f(x) :=  \frac{\int_{\mathbb{T}}\!\mathrm{d} y \int_{\mathbb{R}}\!\mathrm{d} w\, f(y,w)\,\varphi(x-y)\, w}{\int_{\mathbb{T}}\!\mathrm{d} y \int_{\mathbb{R}}\!\mathrm{d} w\, f(y,w)\,\varphi(x-y)}\,.
\end{equation}
The model \eqref{nl} was introduced in \cite{buttaNonlinearKineticModel2019} and it constitutes a continuum variant of the model introduced in the seminal works
\cite{buhlDisorderOrderMarching2006,czirokCollectiveMotionSelfPropelled1999}. Moreover,  the PDE \eqref{nl} can also be seen as the Fokker-Planck equation for a non-linear Stochastic Differential Equation (SDE), where the non-linearity is intended in the sense of McKean, see \cite{buttaNonlinearKineticModel2019}.

The evolution of the dynamics \eqref{parsys1}-\eqref{parsys2} (or, analogously, \eqref{nl}-\eqref{Mnl}),  is governed by friction, velocity-alignment and noise (the three terms on the RHS of equation \eqref{parsys2}, respectively).  When $G = 0$  the model reduces to many independent Langevin dynamics \cite{ottobreAsymptoticAnalysisGeneralized2011};  in this case it is well known that noise, while  directly acting only  on the particles' velocities, is transmitted to the particles' positions (by hypoelliptic effect \cite{ottobreAsymptoticAnalysisGeneralized2011}) so that the dynamics admits a unique invariant measure, namely
\be\label{invzero}
\mu_0(x,v) \dif x \dif v = \mathcal{U}_{\mathbb T}(x) \, \otimes \,  \frac{1}{\sqrt{2 \pi \sigma }}e^{-v^2/(2 \sigma)},
\ee
where $\mathcal{U}_{\mathbb T}(x)$ is the uniform distribution on the torus, and furthermore convergence to the measure $\mu_0$ takes place irrespective of the initial condition \cite{ottobreAsymptoticAnalysisGeneralized2011}. This implies that in stationarity the particles will be uniformly spread around the torus with velocities pointing in every direction (as $\mu$ has mean zero in velocity). In this sense, when $G=0$ the stationary state is a {\em disordered state}.

When $G \neq 0$ the second term on the RHS of \eqref{parsys2} forces each particle to align its velocity to the average velocity of its neighbours (i.e. to  the average velocity of the particles which, at time $t$, are in the support of $\varphi$) and in particular to slow down/accelerate if the neighbours are slower/faster. Because of the choice of $G(s)$ (which has fixed points at $s=1$ and  $-1$), the particles tend to align their velocities to either plus one or minus one (see Subsection \ref{sec:background and literature}), corresponding to clockwise/counterclockwise rotation.
We emphasize that in this model the particles  tend to align only their velocities -- hence the name {\em herding function} for $G$ --  and there is no explicit  space-aggregation effects. Because of this heuristic presentation, one expects that, after a transient phase, the particles will be found to be rotating either clockwise or anticlockwise (i.e. with average velocity $+1$ or $-1$).  This  is also corroborated by plenty of experimental evidence \cite{bertrandClusteringOrderingCell2020a,vicsekNovelTypePhase1995} and it is due to the fact that the particles system's (unique) invariant measure  has bimodal velocity-marginal (see Figure \ref{fig:switch}), with mean zero (see Appendix \ref{app:PSMeanZero})  and with modes concentrated around the values $\pm 1$.  Consequently, the {\em ordered states} corresponding to  clockwise and anticlockwise rotation are metastable states for the IPS, leading to  the particles  switching direction of motion at random times; these switches of average velocity are rare events and the mean time separating them is exponentially large in $N$, see \cite{garnierMeanFieldModel2019} and references therein. Hence, one expects these velocity switches to disappear in the PDE model (obtained  after taking the limit $N \rightarrow \infty$), and that the dynamics \eqref{nl}-{\eqref{Mnl}}  will have two distinct stationary states, namely $\mu_{\pm}$, with densities
\begin{equation}\label{invmeas}
	{\mu}_{\pm}(x,v)= \mathcal{U}_{\mathbb T}(x) \otimes \frac{1}{\sqrt{2\pi \sigma}} e^{-\frac{(v\mp 1)^2}{2 \sigma}} \,.
\end{equation}
The densities\footnote{We are abusing notation and denoting the measure and its density with respect to Lebesgue with the same letter. Note also that when we consider smoothed out versions of $G$ such that also zero is a fixed point of $G$ then a third invariant measure is expected, namely the measure \eqref{invzero}, which has mean zero in velocity. We don't discuss this measure as this is always unstable and indeed never observed in simulations.} $\mu_{\pm}$ have average velocity $\pm 1$, respectively, and they are both space-homogeneous.  One of the purposes of this paper is to produce numerical evidence that such densities are indeed the only stationary solutions of \eqref{nl},  we will come back to this in the next subsection.

In this paper we investigate the behaviour of the agent based model \eqref{parsys1}-\eqref{parsys2} and of its continuum counterpart \eqref{nl}; in particular we compare the behaviour of such locally scaled dynamics with their globally scaled analogues, i.e. with the following particle system
\begin{align}
	\mathrm{d} {x}_t^{i,N} & = v_t^{i,N} \, \mathrm{d} t\,,  \label{Garnierparsys1} \\\
	\mathrm{d} {v}_t^{i,N} & = - v_t^{i,N} \mathrm{d} t +
	G\left(\frac1N { \sum_{j\neq i}^N \varphi (x_t^{i,N} - x_t^{j,N}) v_t^{j,N}}\right) \mathrm{d} t + \sqrt{2\sigma} \mathrm{d} B_t^{i}\,,\label{Garnierparsys2}
\end{align}
and corresponding continuum description,
\begin{equation}\label{nlGarnier}
	\partial_t f_t(x,v)=-  v\,\partial_x f_t(x,v) - \partial_v\big\{\big[G(\tilde{M}_{f_t}(x)) - v\big] f_t(x,v) \big\} + \sigma\,\partial_{vv}f_t(x,v)\,,
\end{equation}
where this time the nonlinearity $\tilde{M}_f$ is given by
\begin{equation}
	\label{MnlGarnier}
	\tilde{M}_f(x) :=  {\int_{\mathbb{T}}\!\mathrm{d} y \int_{\mathbb{R}}\!\mathrm{d} w\, f(y,w)\,\varphi(x-y)\, w}\,.
\end{equation}
For both models we will always consider interaction functions (IF) $\varphi$ such that $\varphi\geq 0$ and $\frac 1L\int_{\mathbb T} \varphi =1$.

\subsection{Background,  relation with literature and main results}\label{sec:background and literature}

The difference between the two models introduced in the previous subsection  boils down to a different choice of normalization in the argument of $G$: in \eqref{nl}-\eqref{Mnl}  particle $i$ interacts with all the other particles within the support of $\varphi$ and the strength of the interaction is normalised by the number of particles with which agent $i$ is interacting (at time $t$); in contrast, in \eqref{nlGarnier}-\eqref{MnlGarnier} the strength of the interaction is always normalised by the total number of particles.

In  \cite{garnierMeanFieldModel2019} it is shown that the measures \eqref{invmeas} are the only stationary solutions of the global model \eqref{nlGarnier}-\eqref{MnlGarnier}, irrespective of the choice of $\varphi$ (even if $\varphi$ has compact support).

One can easily see that the measures \eqref{invmeas} are also stationary solutions of the local model \eqref{nl}-\eqref{Mnl}; however neither  the authors of \cite{buttaNonlinearKineticModel2019} nor those of \cite{garnierMeanFieldModel2019} were  able to prove that these are the {\em only} stationary solutions of \eqref{nl}-\eqref{Mnl}. The difficulty in proving this result analytically comes from the lack of gradient structure of equation \eqref{nl}.
The fact that the measures \eqref{invmeas} are the only stationary solutions of \eqref{nl}-\eqref{Mnl} has been proven analytically in \cite{buttaNonlinearKineticModel2019} only when the function $\varphi$ is chosen to be a small oscillation around  the constant function $\varphi \equiv 1$, i.e.
$$
	\varphi(x)= 1+ \lambda \psi(x),   \quad \mbox{where } 0<\lambda \ll 1, \int_{\mathbb T} \psi(y) \, \dif y =0 \,,
$$
or in the much simplified setting in which one considers only space-homogeneous solutions of \eqref{nl}. The latter scenario is quite helpful for general understanding, so we briefly recall what is known in that case.
If we consider only space-homogeneous solutions of \eqref{nl}, i.e. solutions of the form $f_t(v)$, then the LS PDE  \eqref{nl}-\eqref{Mnl} reduces to the (much) simpler dynamics\footnote{Note also that in this case the GS PDE and the LS PDE coincide. }
\begin{equation}\label{lin}
	\partial_t f_t(v)=- \partial_v\big\{\big[G(M(t)) - v\big] f_t(x,v) \big\} + \sigma\,\partial_{vv}f_t(x,v)\,,
\end{equation}
where $M(t):= \int_{\R}v\,  f_t(v) \, dv$ solves the one-dimensional ODE
\begin{equation}\label{Mlin}
	\dot M (t)= G(M(t))-M(t) \,.
\end{equation}
In other words, the nonlinearity $M_{f_t}(x)$ reduces to being the average velocity $M(t)$; because one can close an equation on $M(t)$, one can now regard $M(t)$ as being a time-dependent coefficient; hence, in this space-homogeneous case,  the nonlinear dynamics \eqref{nl} reduces to a linear, non-autonomous PDE. The equation  \eqref{Mlin}  satisfied by $M(t)$ is a simple one-dimensional autonomous ODE, hence its behaviour is straightforward to understand:  because of the assumption \eqref{herdingG} on the herding function (and in particular because $\pm 1$ are the only fixed points of $G$, i.e. $G(\pm 1)=\pm 1$,  so that $\pm 1$ are the only stationary points of \eqref{Mlin}),  from \eqref{Mlin} one can easily see that
\begin{align}\label{meanasymp}
	  & M(0)>0 \,\, \Rightarrow \,\, M(t)\stackrel{t\rightarrow \infty}{\longrightarrow} 1 \nonumber          \\
	  & M(0)<0 \,\, \Rightarrow \,\, M(t)\stackrel{t\rightarrow \infty}{\longrightarrow} -1 \,. \footnotemark
\end{align}
\footnotetext{ Again, if also $G(0)=0$ then $M(0)=0 \Rightarrow M(t)=0$ for every $t>0$. From now on we stop discussing this case.}
In this space-homogeneous case it is easy to prove (see \cite{buttaNonlinearKineticModel2019}) that the measures \eqref{invmeas} are the only stationary solutions of \eqref{lin}.
Note that such measures are the only  stationary solutions of the dynamics \eqref{lin}-\eqref{Mlin}, \textit{ irrespective of the value of $\sigma$}. This feature is unusual for these kind of models (if compared e.g. to McKean-Vlasov like equations, where the number of stationary solutions changes depending on the strength of the noise $\sigma$, see e.g. \cite{dawsonCriticalDynamicsFluctuations1983, carrilloLongTimeBehaviourPhase2019,herrmannNonuniquenessStationaryMeasures2010} and references therein), and it is sometimes referred to as {\em unconditional flocking}.  Moreover, again in the simplified case \eqref{lin},  if the initial profile $f_0$ has positive (negative, respectively) mean, i.e. if $M(0)>0$ ($M(0)<0$, respectively) then the dynamics converges exponentially fast to the stationary measure with positive mean, $\mu_+$ (negative mean, $\mu_-$, respectively), coherently with \eqref{meanasymp}, see \cite{buttaNonlinearKineticModel2019}. In other words, in this simplified case, knowing the sign of the initial mean  velocity is sufficient to determine the asymptotic behaviour of the dynamics. This is certainly not the case for the fully non-linear PDE \eqref{nl}-\eqref{Mnl}; we show this fact numerically in Section \ref{sec:TimeDependentPDESim}.

While the basin of attraction changes if we consider the space-homogeneous model \eqref{lin} versus the non-space-homogeneous  PDE \eqref{nl}, in this paper we conjecture that the stationary states of \eqref{nl} are the same as those of the models \eqref{lin} and \eqref{nlGarnier}, namely they are given by the measures \eqref{invmeas}. However, while such stationary states are unstable (in an appropriate parameter regime) for the GS PDE \eqref{nlGarnier}, we numerically observe them to be always stable for the LS PDE \eqref{nl} (and they are known to be stable for the linear equation \eqref{lin}, see \cite{buttaNonlinearKineticModel2019}). Let us comment on these two facts in turn, i.e. on understanding the set of steady states for the LS PDE and on determining their stability, starting from the first matter.

One way of showing that the measures \eqref{invmeas} are the only stationary states for \eqref{nl},  is demonstrating that the dynamics \eqref{nl}-\eqref{Mnl} always homogenizes in space (we refer to this as ``mixing in space"). Indeed, if the stationary solution of \eqref{nl} is space-homogeneous, then it can only be one of the measures \eqref{invmeas} (this is straightforward, see  Lemma \ref{lem:whatweknow}.) For this reason many of our simulation scenarios are aimed at showing that the LS PDE will always mix in space. Indeed, according to the heuristic description of the local model given in the previous section,  one does not expect the formation of travelling-wave-like patterns in the LS PDE: one expects that clusters (profiles that are compactly supported in space) will not persist because of the effect of the noise, which tends to break them by making the particles' velocities diffuse in every direction,  so that initial data with compact support in space will eventually homogenize.
The same intuition does not hold for the globally scaled model and indeed, as explained in the Introduction,  our numerical experiments show that the GS PDE does exhibit travelling wave solutions in certain parameter regimes (and this is true both for the GS IPS \eqref{Garnierparsys1}-\eqref{Garnierparsys2} and for the GS PDE  \eqref{nlGarnier}). We give an intuition of why this is the case in Note \ref{note:heuristics} below.
\begin{note}\label{note:heuristics}\textup{
	To give an intuitive explanation as to why travelling waves  emerge in the GS PDE, but not the LS PDE, one can  use a variation of the heuristic reasoning used in
	\cite{motschNewModelSelforganized2011} to explain the drawbacks of the global scaling.  To this end, consider  the local average velocity, $\lav(t,x)$, namely
	\[
		\lav(t,x):= \int_{\R} v \, f_t(x,v) \, \dif v \,
	\]
	(local because it  is the average velocity at $x$ rather than over the whole torus).
	Multiplying \eqref{nl} by $v$, integrating in the velocity variable and (formally) integrating by parts,  \footnote{These integrations by parts can be fully justified, see \cite{buttaNonlinearKineticModel2019}.} one obtains
	\begin{equation}\label{motsch1}
		\pa_t \lav(t,x) =
		- \lav(t,x) +
		G \left( \frac{\iint f_t(y,w) \varphi(x-y) w \, \dif w \, \dif y}{\iint f_t(y,w) \varphi(x-y)  \, \dif w \, \dif y} \right)  -\sigma \, \pa_x \!\!\left( \int_{\R} v^2 \, f_t(x,v) \, \dif v\right) \,.
	\end{equation}
	Now suppose the initial space-distribution is given by two clusters, widely separated, and one containing most of the mass (i.e., most of the particles); suppose also the `bigger cluster' starts with higher average initial velocity than the smaller. To fix ideas, say  $f_0(x,v) = f_0^{(a)}(x) \pi_0^{(a)}(v)+ f_0^{(b)}(x)\pi_0^{(b)}(v)$ with $f_0^{(a)}(x)$ and $f_0^{(b)}(x)$ such that their support is disjoint, $\int_{\mathbb{T}} f_0^{(a)}(x)\dif x= 1-\epsilon$,  $\int_{\mathbb{T}} f_0^{(b)}(x) \dif x = \epsilon$, for some small $\epsilon$ and $\int_{\R}\dif v \, \pi_0^{(b)}(v) = \int_{\R}\dif   v \,  \pi_0^{(a)}(v) = 1$ but $0< v_b:=\int_{\R}\dif v  \, \pi_0^{(b)}(v) v< \int_{\R}\dif v \,  \pi_0^{(a)}(v) v =: v_a$.   Suppose also that the support of $\varphi$ is much smaller than the initial distance between the two clusters, so that such clusters will not initially interact. If we look at the variation of the  local average velocity at time zero at a point $x_2$ in the support of the smaller cluster $f_0^{(b)}$, from \eqref{motsch1} we obtain
	\begin{align}
		\pa_t\lav(t,x_2) \vert_{t=0} & =
		- \lav(0,x_2) +
		G \left( \frac{ v_b\int f_0^{(b)}(y) \varphi(x_2-y)   \, \dif y}{\int f_0^{(b)}(y) \varphi(x_2-y) \, \dif y} \right)  - \sigma \, \pa_x \!\!\left( \int_{\R} v^2 \, f_0(x,v) \, \dif v\right)(x_2) \, \label{motsch2}\\
		                             & =
		- \lav(0,x_2) +
		G \left( v_b\right)  - \sigma \, \pa_x \!\!\left( \int_{\R} v^2 \, f_0(x,v) \, \dif v\right)(x_2) \,. \nonumber
	\end{align}
	Had we used the normalization \eqref{MnlGarnier} appearing in \eqref{nlGarnier} rather than the local normalization \eqref{Mnl}, the second term on the RHS of the above would have been
	$$
		G \left( {v_b\int f_0^{(b)}(y) \varphi(x_2-y)   \, \dif y} \right).
	$$
	Since $f_0^{(b)} \simeq \epsilon$,  with the local scaling one has
	$$
		G \left( \frac{v_b\int f_0^{(b)}(y,w) \varphi(x_2-y)  \, \dif y}{\int f_0^{(b)}(y,w) \varphi(x_2-y)  \, \dif y} \right)  \simeq G (O(1))  \simeq O(1),
	$$
	versus
	$$
		G \left( {v_b\int f_0^{(b)}(y,w) \varphi(x_2-y) w \, \dif y} \right) \simeq G(O(\epsilon)) \simeq O(\epsilon),
	$$
	with the global scaling. The consequence of this is as follows: under the global scaling the contribution to the local average velocity that comes from the nonlinear term, for a point in the smaller cluster, is negligible and, in particular, is insufficient to counterbalance the effect of the damping term. Hence, in appropriate parameter regimes, it can happen that the small cluster tends to slow, at least until the bigger and faster cluster catches up with it. When the two clusters interact, the slower one can accelerate briefly, but, if the parameters are such that the interaction is not strong enough, it fails to keep pace with the faster cluster and it is left behind. This repeats every time the two clusters meet, giving rise to the periodic, travelling-wave like solutions that we observe.
	On the other hand, under the local scaling,  the contribution to the average velocity for a point in the smaller cluster coming from the nonlinear term is of order one.\footnote{According to this heuristic description one might think that if $G$ was non smooth, but still satisfying \eqref{herdingG} (to fix ideas take $G(u)=(u+1)/2$ for $u>0$ and $G(u)=(u-1)/2$ for $u<0$), then the difference between the two models should disappear. In other words one might think that the difference in behaviour is an artifact of the fact that we chose $G$ to be smooth (which gives $G(O(\epsilon)) \simeq O(\epsilon)$).  According to preliminary simulations, this is not the case and the difference between the two PDE models persists irrespective of the regularity of $G$. We do not show simulations with $G$ non smooth, as for them to be thorough they would require different numerical methods from the ones used here (both for the PDEs and for the IPSs) and we keep this point for further future investigation.} Consequently the smaller, slower cluster, increases its velocity and eventually catches up with the bigger one. The agglomerated cluster is eventually broken by noise, homogenising in space -- at least in the LS PDE, for some parameter choices inhomogeneity can persist in the LS IPS as well. The latter fact is due to the `finite particle count':  if the small cluster contains $N_b$ particles and the large one contains $N_a$ particles, say all of them started deterministically with velocities $v_b$ and $v_a$, respectively, when there are finitely many particles, the order of the ratios of interest according to the above heuristic, namely  $N_b v_b/(N_b+N_a)$ (for the GS IPS) and  $N_b v_b/N_b$ (for the LS IPS), may not differ significantly.}
	\hfill{$\Box$}
\end{note}

To better understand our numerical results on the GS and LS PDE  we use linear stability methods and a formal spectral analysis. Let us summarise the findings of our stability analysis here (see Proposition \ref{lemma:modestability} below); complete proofs and details are postponed to Section \ref{sec:linearstabilityanalysis}.

We linearise the dynamics \eqref{nl} around the two equilibria $\mu_{\pm}$, see equation \eqref{blablu}; we then Fourier transform the linear equation both in space and velocity. When doing so one can see that the Fourier modes of the solution of the linearised equation decouple, see  \eqref{eqnforgo}; in view of this fact we say that the system is  {\em linearly stable} around $\mu_{\pm}$, if and only if  all the Fourier modes of the linearised equation are stable (in a sense that we make precise in \eqref{stabilitycondition}). We give full details of such a procedure in Section \ref{sec:linearstabilityanalysis}, here we just report the result, which gives a simple condition to check linear stability.

\begin{proposition}
	\label{lemma:modestability}
	Let $\xi=\pm 1$ and, correspondingly, let $\mu_{\xi}$ denote $\mu_{\pm}$. The $0$-th Fourier mode of the linearization around $\mu_{\xi}$ of the nonlinear Fokker Planck equation \eqref{nl} is stable (in the sense defined in \eqref{stabilitycondition}) iff $G'(\xi) \varphi_0 < 1$. If
	$$
		\lv G'(\xi)\varphi_k \rv  a_k <1 \quad \mbox{for  } k\neq 0
	$$
	where

	$$
		a_k := \left(1+ \frac{3 \sqrt{\pi}}{\sqrt{\sigma} \lv D_k\rv}+ \frac{3 \lv \xi \rv }{\sigma \lv D_k\rv} + \frac{e^{-1}}{1+\sigma \lv D_k\rv^2 } + \frac{3}{\sigma \lv D_k\rv} \right), \quad D_k:=\frac{2\pi k}{L}\,,
	$$
	$\varphi_k:=  \frac 1L\int_0^L \varphi(x) e^{i 2\pi k x/L} \dif x$ is the $k$-th Fourier mode of $\varphi$,  then also the $k$-th mode is stable.
	Moreover, since $a_k\leq a_1$, if $G'(\xi) \varphi_0 <1$ and
	\begin{equation}\label{eq:localStabilityInequality}
		\lv G'(\xi)\varphi_k \rv a_1 < 1
	\end{equation}
	then all the modes are stable and we say that the equilibrium $\mu_{\xi}$ is linearly stable.
\end{proposition}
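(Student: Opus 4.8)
The plan is to linearise \eqref{nl} around $\mu_\xi$, exploit spatial translation invariance to reduce the linearised problem to a decoupled family (indexed by the spatial Fourier mode $k$) of one-dimensional evolution equations, and then treat each such equation by the method of characteristics in the velocity-Fourier variable followed by a Laplace transform in time.

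First I would write $f_t=\mu_\xi+g_t$, with $\mu_\xi(x,v)=\tfrac1L m_\xi(v)$, $m_\xi$ the Gaussian of mean $\xi$ and variance $\sigma$. Using $\tfrac1L\int_{\mathbb T}\varphi=1$ and $\int w\,m_\xi(w)\,\dif w=\xi$, one gets $M_{\mu_\xi+g}(x)=\xi+\big(N_g(x)-\xi D_g(x)\big)+O(g^2)$ with $N_g(x)=\iint g(y,w)\varphi(x-y)w\,\dif w\,\dif y$ and $D_g(x)=\iint g(y,w)\varphi(x-y)\,\dif w\,\dif y$; in particular $N_g-\xi D_g$ is the \emph{single} linear functional $\iint g(y,w)\varphi(x-y)(w-\xi)\,\dif w\,\dif y$ of $g$. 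Since $G(\xi)=\xi$ and $\mu_\xi$ is stationary, the linearisation \eqref{blablu} of \eqref{nl} reads
\[
\pa_t g_t=-v\,\pa_x g_t-\pa_v\big\{(\xi-v)g_t\big\}+\sigma\,\pa_{vv}g_t-G'(\xi)\big(N_{g_t}(x)-\xi D_{g_t}(x)\big)\,\pa_v\mu_\xi ,
\]
and, since $N_g,D_g$ are convolutions in $x$, its spatial Fourier modes decouple: the $k$-th mode $g_k(t,v)$ obeys equation \eqref{eqnforgo}, i.e.\ the Ornstein--Uhlenbeck part $-\pa_v\{(\xi-v)\,\cdot\,\}+\sigma\pa_{vv}$ plus the streaming term, with a source proportional to $G'(\xi)\varphi_k$ times the single scalar moment $\rho_k(t):=\int(v-\xi)g_k(t,v)\,\dif v$ (a combination of $\gko$ and $\gpko$).

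Then I would Fourier transform in $v$. The streaming and Ornstein--Uhlenbeck parts become a first-order transport operator $-(D_k+\zeta)\pa_\zeta$ plus the multiplier $b(\zeta):=-i\xi\zeta-\sigma\zeta^2$, and $\pa_v\mu_\xi$ becomes proportional to $\zeta\,\widehat{m_\xi}(\zeta)$ with $\widehat{m_\xi}(\zeta)=e^{-i\xi\zeta-\sigma\zeta^2/2}$; thus $\widehat{g_k}(t,\zeta)$ solves an inhomogeneous linear transport equation whose only feedback is through $\rho_k(t)$. For $k=0$ the characteristic through $\zeta=0$ is stationary, $\widehat{g_0}(t,0)$ is conserved (and $0$ for mass-preserving perturbations), and differentiating once in $\zeta$ at $\zeta=0$ gives the closed scalar ODE $\dot\rho_0=(G'(\xi)\varphi_0-1)\rho_0$; hence the $0$-th mode is stable in the sense of \eqref{stabilitycondition} \emph{iff} $G'(\xi)\varphi_0<1$, the only exact characterisation in the statement. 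For $k\neq0$ I would solve the transport equation by Duhamel along the explicit characteristics ending at $(\tau,\zeta)$, namely $\zeta(r)=-D_k+(\zeta+D_k)e^{r-\tau}$: this writes $\widehat{g_k}(t,\cdot)$ as the free evolution of the datum plus $G'(\xi)\varphi_k\int_0^t\rho_k(s)\,\big(S(t-s)\phi\big)(\cdot)\,\dif s$, with $\phi(\zeta)=\zeta\,\widehat{m_\xi}(\zeta)$ and $\big(S(\tau)\psi\big)(\zeta)=\psi(\zeta_{-\tau})\exp\!\big(\int_0^\tau b(\zeta(r))\,\dif r\big)$. Applying $(i\pa_\zeta-\xi)\big|_{\zeta=0}$ and collecting terms yields a scalar convolution Volterra equation
\[
\rho_k(t)=F_k(t)-i\,G'(\xi)\varphi_k\int_0^t\mathcal{K}_k(t-s)\,\rho_k(s)\,\dif s,\qquad \mathcal{K}_k(\tau):=(i\pa_\zeta-\xi)\big|_{\zeta=0}\big(S(\tau)\phi\big),
\]
whose Laplace transform gives the dispersion relation $1+iG'(\xi)\varphi_k\,\widetilde{\mathcal{K}}_k(\lambda)=0$; the $k$-th mode is stable provided this has no root with $\Re\lambda\ge0$, and since $\widetilde{\mathcal{K}}_k$ is analytic for $\Re\lambda>0$, continuous up to the axis, and bounded there by $\int_0^\infty|\mathcal{K}_k(\tau)|\,\dif\tau$, it suffices that $|G'(\xi)\varphi_k|\int_0^\infty|\mathcal{K}_k(\tau)|\,\dif\tau<1$.

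The remaining step, which I expect to be the only real labour, is the estimate $\int_0^\infty|\mathcal{K}_k(\tau)|\,\dif\tau\le a_k$. With $\zeta_{-\tau}=-D_k Q'(\tau)$, $Q(\tau):=\tau-1+e^{-\tau}$, $Q'(\tau)=1-e^{-\tau}$, one has the clean identities $|\widehat{m_\xi}(\zeta_{-\tau})|\,\big|\exp\!\big(\int_0^\tau b(\zeta(r))\,\dif r\big)\big|=e^{-\sigma D_k^2 Q(\tau)}$ and $\int_0^\tau b'(\zeta(r))e^{r-\tau}\,\dif r=-i\xi Q'(\tau)+\sigma D_k Q'(\tau)^2$, so $\mathcal{K}_k$ splits into a handful of elementary pieces. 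Then $\int_0^\infty Q'(\tau)e^{-cQ(\tau)}\,\dif\tau=1/c$, the substitution $s=|D_k|Q'(\tau)$ (so $e^{-\tau}\,\dif\tau=\dif s/|D_k|$), the inequality $Q\ge Q'^2/2$, and the Gaussian bounds $\sup_s se^{-\sigma s^2/2}=(e\sigma)^{-1/2}$, $\sup_s\sigma s^2e^{-\sigma s^2/2}=2e^{-1}$ turn each piece into one of the summands of $a_k$: the $\sigma D_k Q'^2$ part of $b'$ gives (after $Q'^3\le Q'$) the leading $1$; the substitution applied to the pieces carrying a spare $e^{-\tau}$ or $Q'$ gives $\tfrac{3\sqrt\pi}{\sqrt\sigma|D_k|}$, $\tfrac{3|\xi|}{\sigma|D_k|}$ and $\tfrac{3}{\sigma|D_k|}$; and the competition between $e^{-\tau}$ and the tail of $e^{-\sigma D_k^2 Q(\tau)}$ gives $\tfrac{e^{-1}}{1+\sigma|D_k|^2}$. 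The difficulty here is purely the bookkeeping needed to land exactly on this $a_k$ (together with a check that the Volterra/Laplace argument is valid on the relevant space of perturbations); no individual estimate is deep. Finally, every summand of $a_k$ except the constant $1$ is decreasing in $|D_k|=2\pi|k|/L$, hence $a_k\le a_1$ for all $k\neq0$; combined with the above, the hypotheses $G'(\xi)\varphi_0<1$ and $|G'(\xi)\varphi_k|a_1<1$ for all $k\neq0$ make every Fourier mode stable, which is exactly linear stability of $\mu_\xi$.
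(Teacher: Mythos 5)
Your proposal is correct and follows essentially the same route as the paper's proof: linearise around $\mu_\xi$, decouple the spatial Fourier modes, Fourier-transform in velocity, solve the resulting transport equation along characteristics, and reduce stability of the $k$-th mode to boundedness of the scalar moment $\omega_k$ (your $\rho_k$) via a Volterra convolution equation whose kernel is bounded in $L^1$ by precisely the summands of $a_k$ (the paper imports the first three summands from the analysis of \cite{garnierMeanFieldModel2019} and computes only the extra term $3/(\sigma|D_k|)$ coming from the additional kernel $R_k^*$ produced by the local scaling, while you propose to re-derive all of them). Your Laplace-transform/dispersion-relation framing is a cosmetic variant rather than a different method, since the condition you actually invoke, $|G'(\xi)\varphi_k|\int_0^\infty|\mathcal{K}_k(\tau)|\,\dif\tau<1$, yields boundedness directly by the same sup-norm Volterra estimate used in the paper (and for the $k=0$ mode your closed ODE for $\rho_0$ is equivalent to the paper's moment system for $m_0,m_1$).
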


\begin{note}\textup{ According to the above proposition, given the herding function $G$ and the IF $\varphi$,  for $\sigma$ large enough the equilibria $\mu_{\pm}$  are  linearly stable for the LS PDE.
		A similar stability analysis was carried out in \cite{garnierMeanFieldModel2019} for the GS PDE \eqref{nlGarnier},  giving (in our notation) the following:  the $0$-th mode is linearly stable iff $\varphi_0 G'(\xi)< 1$. If
		\begin{equation}\label{eq:globalStabilityInequality}
			\lv G'(\xi)\varphi_k \rv  \tilde{a}_k <1 \quad \mbox{for every } k\neq 0
		\end{equation}
		where
		$$
			\tilde{a}_k = \left(1+ \frac{3 \sqrt{\pi}}{\sqrt{\sigma} \lv D_k\rv}+ \frac{3 \lv \xi \rv }{\sigma \lv D_k\rv} + \frac{e^{-1}}{1+\sigma \lv D_k\rv^2 } \right)
		$$
		then also the $k$-th mode is stable, for every $k\neq 0$.
		We emphasize that the stability conditions \eqref{eq:localStabilityInequality} and \eqref{eq:globalStabilityInequality} are sufficient conditions; that is, for parameter choices satisfying \eqref{eq:localStabilityInequality}/\eqref{eq:globalStabilityInequality}  the LS PDE/GS PDE is linearly stable. However such conditions don't imply any stability statement about the region of parameter space where they are violated. In other words, when   \eqref{eq:localStabilityInequality}/ \eqref{eq:globalStabilityInequality} is violated, we can't say anything about the stability or instability of the LS PDE/GS PDE.
		With this in mind, let $\sigma_c^l$ be the critical value of $\sigma$ above which stability of the local model is guaranteed and $\sigma_c^g$ the analogous value for the global model. First of all note that such values, as obtained through the linearization procedure, are only estimates of the real critical values. Nonetheless,   because $\tilde{a}_k<a_k$, we have $\sigma_c^l>\sigma_c^g$. In other words, the estimated stability region for the LS PDE is ``smaller" than the one for the GS PDE. This is  counter intuitive and certainly not satisfactory in view of our numerical results, which show that the equilibria of the LS PDE are always stable (so that one would expect $\sigma_c^l=0$ or at the very least $\sigma_c^l<\sigma_c^g$).   Because   \eqref{eq:localStabilityInequality} and  \eqref{eq:globalStabilityInequality} are just sufficient conditions, the fact that $\sigma_c^l>\sigma_c^g$, which is obtained by using \eqref{eq:localStabilityInequality} and  \eqref{eq:globalStabilityInequality},   does not contradict any numerical findings, it simply does not help to give a satisfactory explanation of them. As we already said in the Introduction, this motivates the formal spectral analysis of Section \ref{sec:spectrum}. Figure \ref{figPDE:StabilityRegion} illustrates the regions where inequalities \eqref{eq:localStabilityInequality} and \eqref{eq:globalStabilityInequality} are satisfied.}
	\hfill{$\Box$}
\end{note}

We remark here that, as frequent for these models, while both the LS IPS \eqref{parsys1}-\eqref{parsys2}   and the GS IPS \eqref{Garnierparsys1}-\eqref{Garnierparsys2} admit a unique invariant measure, the corresponding  limiting PDEs have multiple stationary solutions. Hence the behaviour of the particle system can be misleading if one is interested in understanding the behaviour of the limiting PDE (see for example Figure \ref{fig:oneclustervarygamma} vs Figure \ref{figPDE:E1_VaryRatio_Local_lowNoise}). For this reason we simulate here both the PDE and the related particle system. What emerges in this case is that while both the GS IPS and the GS PDE exhibit pattern formation, the LS IPS seems to retain the pattern formation, while the limiting LS PDE seems to always converge to space-homogeneous solutions, as we have already discussed.

Besides studying stability and pattern formation, from a modelling perspective, we numerically investigate how the range of the interaction function $\varphi$ affects the dynamics. In particular, we aim to answer the following question: in order  for the dynamics to converge to equilibrium (i.e. to start flocking/create consensus) as fast as possible, is it more advantageous for every agent to interact with every other agent  or is it possible to reach consensus faster by purely local interactions? I.e. is convergence to equilibrium faster in the mean-field case or in the non-mean-field scenario? More importantly, does the answer to this question  change depending on whether we consider the locally scaled dynamics \eqref{nl}-\eqref{Mnl} versus the globally scaled one \eqref{nlGarnier}-\eqref{MnlGarnier}?  These questions are addressed in Section \ref{sec:ParticleSystem} and Section \ref{sec:TimeDependentPDESim}.

Finally, we mention for completeness that when $\varphi$ is uniformly positive, i.e. $\varphi(x)>\varepsilon$ for some constant $\varepsilon>0$, the well-posedness of the PDE \eqref{nl} has been studied in \cite{buttaNonlinearKineticModel2019}. When $\varphi$ is compactly supported well posedness is more involved, and it can be achieved via tools similar to those employed in \cite{debusscheExistenceMartingaleSolutions2020}. We don't do this in this paper and assume well-posedness even when $\varphi$ is compactly supported.

\section{Preliminaries on numerics} \label{sec:preliminaries}
To simulate the particle system \eqref{parsys1}-\eqref{parsys2} we use the explicit Euler-Maruyama method, namely,
\begin{align*}
	x^{i,N}_{n+1} & = x^{i,N}_n + v^{i,N}_n\Delta t \mod{2\pi}                                                                                                                                                                               \\
	v^{i,N}_{n+1} & = v^{i,N}_n + \left[G\left(\frac{ \sum_{j\neq i}^N \varphi (x_n^{i,N} - x_n^{j,N}) v_n^{j,N}}{ \sum_{j\neq i}^N \varphi (x_n^{i,N} - x_n^{j,N})}\right)-v^{i,N}_n \right]\Delta t + \sqrt{2\sigma\Delta t} \Delta B^i_n,
\end{align*}
where $x_n^{i,N}$ and $v_n^{i,N}$ are the numerical approximations of the positions and velocities of particle $i$ at time step $t_n = n\Delta t$, respectively,  $\Delta B^i_n \sim \mathcal{N}(0,1)$ is a standard Brownian increment, and $n=0,1, \dots$.  The time step $\Delta t$ depends on the system being simulated and is specified in the caption of each figure. The main concern when choosing $\Delta t$ is to resolve the interactions correctly; we calibrated the time step such that halving it did not appreciably change the results (as a rule of thumb, we observed that it is enough to let at least 10 timesteps occur during any particle interaction).

When $\varphi$ has compact support, the denominator of the interaction term can be zero. In this case the numerator is also zero (as the interaction function $\varphi$ is non-negative). To prevent division error a small positive number ($10^{-15}$) is always added to the denominator so that the interaction term can be calculated.
Aside from these small clarifications, the simulation of the LS IPS is straightforward. On the other hand, simulating the PDEs \eqref{nl} and \eqref{nlGarnier} is delicate, in particular due to the non-local, non-linear nature of the interactions and the (formally) infinite velocity domain. To simulate these PDEs  we have chosen a Pseudospectral Method (PSM); in particular, we have implemented the numerical scheme introduced in \cite{noldPseudospectralMethodsDensity2017}, with some (minor) modifications. Details are provided in Appendix \ref{app:PDEComp}; here we just motivate our choice and describe essential features of the scheme.

In implementing our PSM we have used the software \textit{2DChebClass} in MATLAB \cite{goddard2DChebClass2017}, which was originally designed for equations arising in Dynamic Density Functional Theory (DDFT). Such equations are typically highly non-linear, stiff and non-local.  Equation \eqref{nl} shares many of these features, and so the scheme is readily transferable to our case. Finite element methods (FEMs), while robust for these equations, are computationally expensive (primarily because the non-local terms destroy the sparsity of the resultant matrices). For problems with smooth solutions (which is the case here, see \cite{debusscheExistenceMartingaleSolutions2020, buttaNonlinearKineticModel2019}), PSMs provide exponential accuracy in the number of grid points, meaning significantly fewer points are required than in FEMs.
Furthermore, we observe that PDEs \eqref{nl} and \eqref{nlGarnier} preserve both mass and positivity; our simulation scheme preserves mass and positivity within a very low tolerance, for appropriate choices of the grid, which serves as a consistency check. See also \cite{russoFinitevolumeMethodFluctuating2021} and references therein.
We also recall that equation \eqref{nl}  is non-elliptic and indeed noise acts only on the velocity variable. Care must therefore be taken not to introduce any artificial diffusion into the space variable--a common pitfall of many standard techniques, such as upwind schemes in Finite Difference Methods (FDM). Doing so would surely `break' any clusters that may form under the dynamics, and hence mixing in space would be an artifact of the scheme. Figure \ref{fig:ArtificialDiffusion} illustrates a test case showing that our scheme does not introduce any artificial diffusion, compared to other standard schemes.   

Typical criticisms of PSMs, such as their loss of accuracy on complex geometries or in the presence of non-smooth solutions, do not apply here as the solution is smooth and the geometry is relatively simple \cite{trefethenSpectralMethodsMATLAB2000}.
The unbounded velocity domain, \(\R\), will be truncated to the finite domain \([-L_v,L_v]\) where \(L_v\) is chosen large enough to minimise mass loss. Note that \textit{2DChebClass} also allows for computation on an unbounded domain; in Appendix \ref{app:PDEComp} we will justify our choice to truncate the velocity domain nonetheless.

The numerical scheme used to simulate the dynamics \eqref{nl} depends on several parameter, all of which are described in Appendix \ref{app:PDEComp}. We call $\zeta$ the vector containing all such parameters (such as number of grid points, choice of truncation, etc.) and, correspondingly, if $f_t(x,v)$ is the analytical solution of \eqref{nl}, then $f^{\app}_t(x,v)$ is the approximation produced via our scheme of choice. The number of mesh points will be chosen so that the initial condition is well-interpolated and mass and positivity is conserved to high accuracy (50 points in each of position and velocity is usually sufficient).

\textbf{Parameter choices.} In the physics literature \cite{czirokCollectiveMotionSelfPropelled1999}, a typical choice of herding function $G$  is
\begin{equation}\label{eq:Gstep}
	G(u) = \begin{cases}\frac{u+1}{2} & \text{ for }  u>0    \\
		\frac{u-1}{2} & \text{ for } u<0 \,.\end{cases}
\end{equation}
The above has a jump at the origin, which could potentially cause numerical instabilities. As such we consider a smoothed version of the above, which preserves property \eqref{herdingG}, namely
\begin{equation} \label{eq:Gsmooth}
	G_{\alpha}(u) = \frac{\atan(\alpha u)}{\atan(\alpha)}, \qquad \alpha>0.
\end{equation}
In all simulations, unless specified otherwise in the caption, we choose $\alpha=1$ and fix the length $L$ of the torus to be $L=2 \pi$.
As far as the IF $\varphi$ is concerned, we will work either with the following compactly supported function
\begin{equation}\label{eq:phiIndicator}
	\varphi^{\gamma}(x) = \frac{1}{2\gamma} \mathbf{1}_{[0,2\pi\gamma]}(\|x\|), \qquad 0\leq\gamma\leq \frac{1}{2},
\end{equation}
where $\|\cdot\|$ denotes the distance on the torus, that is,
$$ \|x\| := \min(|x|, 2\pi-|x|), \quad \mbox{for any } x\in\mathbb{T} \,, $$
or with
the following strictly positive `bump function',
\begin{equation}\label{eq:phiBump}
	\varphi(x) = \mathcal{Z}^{-1} \exp\left(-\frac{1}{1 - \frac{\|x\|^2}{\pi^2}}\right),
\end{equation}
where $\mathcal{Z}$ is the appropriate normalising constant (such that $\int_0^{2\pi} \varphi = 2\pi$). Analogously,
in equation \eqref{eq:phiIndicator},
the prefactor in front of the indicator function is chosen so that   $\int_0^{2\pi} \varphi^{\gamma}(x) \,\dif x = 2 \pi$; for later purposes  we also note that  the $k$-th Fourier mode of $\varphi^{\gamma}$, i.e. $\varphi^{\gamma}_k :=  \frac{1}{2\pi}\int_0^{2\pi} \varphi^{\gamma}(x)\mathrm{e}^{i2 \pi kx/L} \,\dif x$, is given  by

\begin{align*}
	\varphi_k^{\gamma} & =  \frac{1}{2\pi}\int_0^{2\pi} \varphi^{\gamma}(x) \rme^{\rmi kx} \rmd x
	= \frac{1}{4\pi\gamma}\int_{-2\pi\gamma}^{2\pi\gamma} \rme^{\rmi kx}\rmd x\\
	                   & = \frac{1}{4\pi\gamma}\frac{1}{\rmi k}\left( \rme^{2\pi\rmi \gamma k} -  \rme^{-2\pi\rmi \gamma k}\right)
	= \frac{1}{2\pi \gamma k}\frac{1}{2\rmi} \left( \rme^{2\pi\rmi \gamma k} -  \rme^{-2\pi\rmi \gamma k}\right)\\
	                   & = \frac{\sin(2\pi\gamma k)}{2\pi \gamma k} = \operatorname{sinc} (2\pi\gamma k) \, ,
\end{align*}
where $\operatorname{sinc}(x)=\sin(x)/x$.
The parameter $\gamma$ is related to the \emph{interaction radius} $R$ by
$$ R=2\pi\gamma;$$
because we fix the length of the torus to $L=2 \pi$ in every simulation,  $2\gamma$ gives the fraction of the torus that can be seen by each particle.  The choice  $\gamma = 0.5$ corresponds to every particle interacting with every other particle on the torus and is equivalent to $\varphi\equiv 1$. We note here that, because of the lack of smoothness of the interaction function  \eqref{eq:phiIndicator},  one should be mindful of numerical instabilities developing in PDE simulations. We show in Appendix \ref{app:PDEComp} that, because of our choice of numerical scheme and the form of our equation, no such issues arise.
For the purpose of making fair comparisons, the same herding and interaction functions will be used in corresponding particle/PDE simulations.
{\bf Initial configurations.}  We gather here choices of  initial configurations. We describe both the discrete random initial conditions  used for the IPS and the analogous deterministic continuous version for the PDEs. We mostly used initial configurations in product form (i.e. space and velocity are decoupled).   First, the position distributions: one or multiple clusters centred at $a_1, \dots a_n$, each of width $2\pi w$. In the particle system, each cluster is simply
\begin{align*}\tag{IC1}\label{ic:PSonecluster}
	x \sim \mathcal{U}\left[a- \pi w, a+\pi w\right]\, ,
\end{align*}
i.e. a uniform distribution over the interval $\left[a- \pi w, a+\pi w\right]$.
For PDE simulations, discontinuities in the initial density pose a problem, due to constraints of the solver. To avoid this, to represent cluster-like initial data in the PDE we use either a `bump function' which is strictly positive or one with compact support, namely
\begin{align*}\tag{IC2a}\label{ic:PDEbump}
	h(x) = \exp\left(\frac{-1}{1-\frac{\|x-a\|^2}{(\pi w)^2}}\right)
\end{align*}
or
\begin{align*}  \tag{IC2b}\label{ic:PDEflatcluster}
	h_{a,w}(x) = \begin{cases}
		\exp\left(\frac{-1}{1-\frac{\|x-a\|^2}{(\pi w)^2}}\right)        & \text{ for } |x-a-\pi w| <\frac{\pi}{10} \text{ and }  |x-a+\pi w| <\frac{\pi}{10} \\[1.5em]
		\exp\left(\frac{-1}{1-\left(-1 + \frac{1}{10 w}\right)^2}\right) & \text{ for } |x - a| \leq \pi w - \frac{\pi}{10}                                   \\[1.5em]
		0                                                                & \text{ elsewhere, }
	\end{cases}
\end{align*}
respectively.
Here \(a\in\mathbb{T}\) gives the centre of the cluster and \(0\leq w \leq\frac{1}{2}\) controls the width.
In velocity we consider Gaussians and superpositions thereof.  Finally, we study one non-product distribution (in Figure  \ref{figPDE:E2_TwoClusters_SteepHerding}). This is chosen to mimic two clusters in position, of unequal density and with opposite velocities, namely
\begin{equation*}\tag{IC3}\label{PDEic:twoclusters}
	f_0(x,v) = \mathcal{Z}^{-1}\left[ \frac{1}{3}h_{\pi/2, 0.2}(x) \otimes e^{-\frac{(v+0.2 )^2}{2 \times 0.4}} + \frac{2}{3}h_{3\pi/2, 0.2}(x) \otimes e^{-\frac{(v-1.8)^2}{2 \times 0.4}} \right],
\end{equation*}
where $\mathcal{Z}$ is an appropriate normalising constant.

{\bf Simulation Metrics.}\label{subs:simmetrics}
To monitor convergence of the particle system/PDEs we will use various measures of convergence. These metrics are necessarily expressed differently depending on the system at hand (PDE or particle system),  we give here both formulations.
\begin{enumerate}[label=\textnormal{[C\arabic*]},ref={[C\arabic*]}]
	\item \label{C1}  Average velocity across all particles at each time, namely
	      \begin{equation}\label{Mn}
		      M^N(t) := \frac{1}{N}\sum_{j=1}^N v_t^{j,N},
	      \end{equation}
	      and the average of the above  across realizations,  denoted by \(\bar{M}^N(t)\). In the continuous setting this corresponds to the approximate average velocity $M_1^\app$ (`approximate' because calculated using the numerical solution $f_t^\app$),
	      \[M_1^\app(t) := \frac{\int_{\mathbb{T}}\int_{-L_v}^{L_v} v f_t^\app(x,v) \mathrm{d} x \mathrm{d} v}{\int_{\mathbb{T}}\int_{-L_v}^{L_v} f_t^\app(x,v) \mathrm{d} x \mathrm{d} v}. \]
	      Since the scheme is essentially mass preserving, the denominator will be independent of time to very high accuracy. It is noted here that the same observation holds for all the different quantities below, but we do not repeat this every time.

	\item \label{C2} In the continuous setting, we consider the velocity-variance, defined as \[ \operatorname{Var}(f_t^\app) := \frac{\int_{\mathbb{T}}\int_{-L_v}^{L_v} v^2 f_t^\app(x,v) \mathrm{d} x \mathrm{d} v}{\int_{\mathbb{T}}\int_{-L_v}^{L_v} f_t^\app(x,v) \mathrm{d} x \mathrm{d} v} - (M_1^\app(t))^2. \]

	\item \label{C3}  The $\ell^1$ distance between the uniform distribution and the empirical position distribution defined as
	      \begin{equation}\label{eq:l1metric}
		      \ell_1^N(t,x) = \sum_{j\in J} \left \vert \frac{\#\lbrace x_t^{i,N} :  \eta_j \leq \, x_t^{i,N} < \eta_{j+1}, 1\leq i \leq N \rbrace  }{N} - \frac{\eta_{j+1}-\eta_j}{2\pi} \right\vert,
	      \end{equation}
	      where \(\lbrace \eta_j\rbrace_{j\in J}\) forms a partition of \([0,2\pi]\). In practice this will be an equispaced partition such that \( \eta_{j+1} - \eta_j = 2\pi / 120 , \, \eta_0 = 0\).  If the particles are indeed distributed uniformly, this value will not be exactly zero except in the limit as $N\to \infty$. In figures containing this metric, a grey dashed line will show the $\ell^1$ distance for $N$ uniformly drawn random samples on the torus, where $N$ is the number of particles simulated. As in the velocity, we will also consider an averaged version of the metric. The averaged $\ell^1$ distance is the mean of the $\ell^1$ distance taken across many realisations of the particle system and is denoted \(\bar{\ell}^1\). In the continuous setting, the \(\ell^1\) distance between $f_t^\app$ and the uniform distribution on the torus is given by
	      \[\ell^1(f_t^\app,\mathcal{U}[0,2\pi]) :=  \int_{\mathbb{T}} \left| \int_{-L_v}^{L_v} f_t^\app(x,v)\mathrm{d}v - \frac{1}{2\pi}\right|\mathrm{d}x. \]

\end{enumerate}

The  metrics \ref{C1} and \ref{C2} are more suited to monitor convergence in the velocity variable, while \ref{C3} quantifies any non-uniformity in position distribution, i.e. mixing in space.

\section{Particle system}
\label{sec:ParticleSystem}
In this section we present and  discuss  simulations of  the LS IPS \eqref{parsys1}-\eqref{parsys2}.

{\bf Basic reality checks. }For each $N>0$ fixed, the LS IPS  has a unique  invariant measure.\footnote{For each fixed $N$, existence of the invariant measure is trivial as \eqref{parsys1}-\eqref{parsys2} is a bounded perturbation of a (stable) Ornstein-Uhlenbeck process. The process is, however, hypoelliptic so uniqueness of the measure is not a given. To prove uniqueness one needs to show that the process is irreducible and this can be done with observations completely analogous to those of \cite[Lemma 3.4]{mattinglyGeometricErgodicityHypoElliptic2002} or \cite[Lemma 3.4]{mattinglyErgodicitySDEsApproximations2002}, assuming $G$ and $\varphi$ are smooth and globally Lipschitz and  $\varphi$ is uniformly bounded below.} When $\varphi \equiv 1$ it is easy to check that such a measure is homogeneous in space and has mean zero average velocity (we do this in  Appendix \ref{app:PSMeanZero}) so
one expects to see the particles converging towards a space-homogeneous configuration, with the  average velocity $M^N(t)$ of the particles  flicking between the (metastable) values $\pm 1$ at random times.  Hence (as long as $G$  and $\varphi$ are antisymmetric/symmetric, respectively, which will always be the case in this paper) one expects that  time averages of $M^N(t)$, as well as averages over realizations of $M^N(t)$, i.e. $\bar{M}^N(t)$, will converge to zero. One further expects that away from the mean-field regime (that is, when $\varphi \not\equiv 1$), when analytic checks become difficult, the invariant measure of the LS IPS will still  have mean zero in velocity, hence in particular  that  $\bar{M}^N(t)$ will still converge (in time) to zero, irrespective of the choice of $\varphi$ . This is corroborated by the simulations in Figure \ref{fig:switch}. Note that the particle count in Figure \ref{fig:switch} is low, $N=100$, so switches in average velocity are still relatively frequent. In subsequent figures of this section the particle count is substantially higher, so these switches are not evident, as they will occur on longer time scales than those shown in the figures (we do not show the longer time behaviour in velocity in the experiments of this section because this is not our main concern here).  We observe that all the figures of this section as well as numerical experiments carried out corroborate the intuition that the IPS `thermalises' in position much faster than in velocity.  Convergence in space to the uniform distribution is a more delicate matter, on which we come to comment.

{\bf Space-Mixing (and apparent lack there of).}
In  Figure \ref{fig:ClusterConvergence}, we pick two different interaction functions, first strictly positive but non constant and then with compact support (hence, first in the mean-field regime and then outside it),  fix the value of the noise and control space-mixing under various initial data. Besides the observation that initial data farther from the uniform distribution take longer to converge in space, space-mixing always occurs in this set up.

In Figure \ref{fig:oneclustervarygamma} and Figure \ref{fig:oneclustervarynoise} we fix the initial datum and show parameter choices for which space-mixing does not occur.  More precisely,  in Figure \ref{fig:oneclustervarygamma} we fix the value of the noise and the initial datum and we consider the interaction function \eqref{eq:phiIndicator} for various values of $\gamma$; the noise is fixed at $\sigma=0.25$ and then at $\sigma=0.5$, the initial datum is a cluster centered at 0 with width $\pi/2$ and $\gamma$ is varied so that the interaction radius $R$ is $1/6, 1/3, 1/2,2/3,1$ and $4/3$ of the initial cluster size. In Figure \ref{fig:oneclustervarygamma}, lower interaction radii seem to result in persistence of a slightly non-uniform distribution.  To understand whether this phenomenon reveals actual new behaviour -- that is, non-convergence to the uniform in space distribution --  we increased the particle count and simulation time (to at least $t=10^4$, not shown in the figures) but the lack of mixing persisted (though becoming less evident as particle count increases, coherently with the observation at the end of Note \ref{note:heuristics}). However, Figure \ref{figPDE:E1_VaryRatio_Local_lowNoise} simulates the PDE \eqref{nl} (that is, the $N \rightarrow \infty$ limit) and suggests that this apparent lack of mixing is only present in the particle approximation: that is, with the same parameter choices as in Figure \ref{fig:oneclustervarygamma} the LS PDE does converge to a uniform distribution in space. This is an important difference with respect to the globally scaled model \eqref{nlGarnier}, where we will show that this lack of mixing, observed in \cite{garnierMeanFieldModel2019} in simulations of the GS IPS \eqref{Garnierparsys1}-\eqref{Garnierparsys2}, persists also in the GS PDE. We will come back to this in the next section.
In Figure \ref{fig:oneclustervarynoise} we fix  the interaction radius to the value which caused the most persistent non-uniformity in Figure \ref{fig:oneclustervarygamma}, keep the initial datum as in Figure \ref{fig:oneclustervarygamma} but vary the value for the noise.  For lower values of $\sigma$, the non-uniform space  distribution is persistent, while higher  noise produces fast convergence to the expected space-homogeneous equilibrium.
We note that, except when $\gamma$ is fixed to $\gamma=0.5$ in Figure \ref{fig:oneclustervarygamma}, all simulations of Figure \ref{fig:oneclustervarygamma} and Figure \ref{fig:oneclustervarynoise} are performed for parameter values outside of the stability region  \eqref{eq:localStabilityInequality} of the LS PDE, which is where we expect that space-mixing might fail. And again, to summarise the above, when the interaction radius $R$ is small space-mixing fails; when $R$ is large enough space-mixing always occurs, even if we are outside of the stability region (this relates to the fact that the stability condition \eqref{eq:localStabilityInequality} is only sufficient, not necessary).

\begin{figure}
	\centering
	\includegraphics[width=\linewidth]{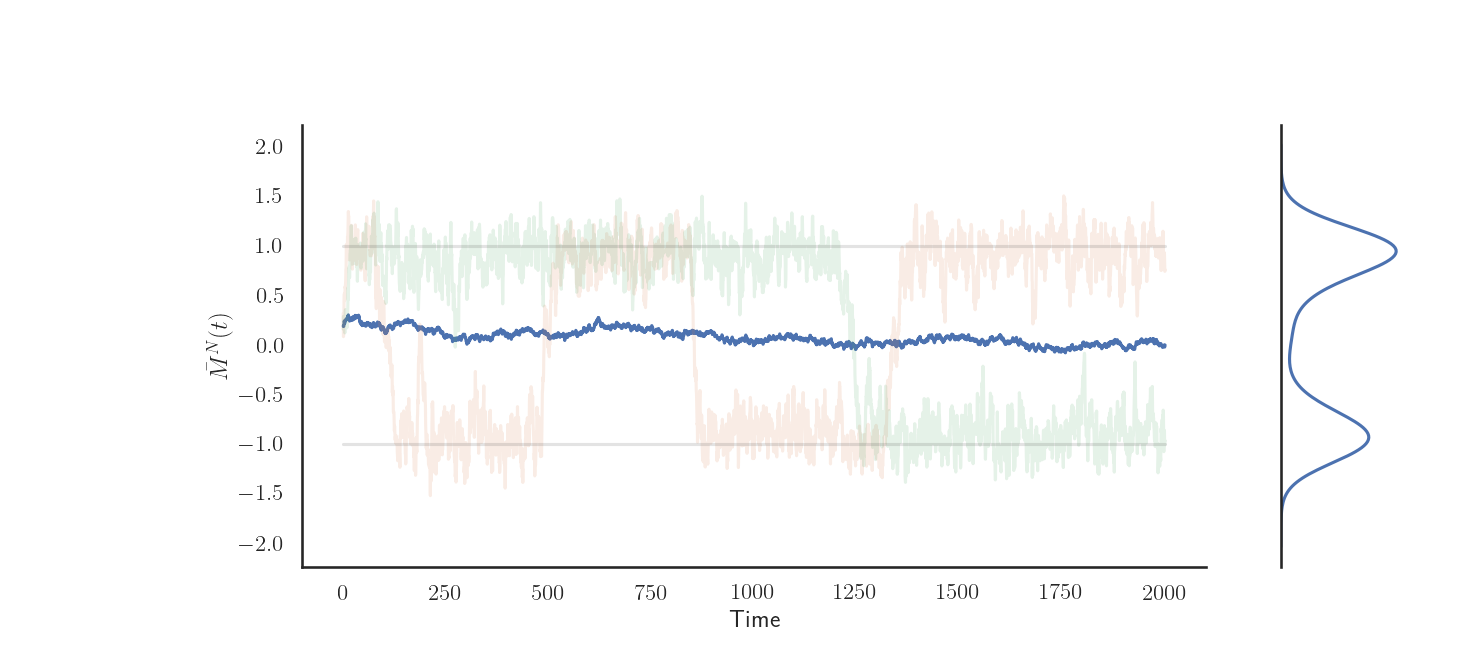}
	\caption{{\bf Bimodality of velocity distribution for the LS IPS \eqref{parsys1}-\eqref{parsys2}, when $\varphi \not\equiv 1$. } We show sample paths of the average velocity $M^N(t)$ (two of them shown in orange and green) and average across realizations $\bar M^N$ (blue line). Also away from the mean-field case the values $\pm 1$ are metastable states for the  single realizations of  $M^N(t)$  and $\bar M^N$  converges to zero.  The above results from simulating the LS IPS  with $N=100, \sigma=1, \Delta t = 0.01$,  IF \eqref{eq:phiIndicator} with $\gamma = 0.1$;   initial conditions  $(x^{i,N}_0, v^{i,N}_0 ) \sim \mathcal U \left[0,2\pi \right] \times \mathcal N(0.2, 2)$ for every  $1\leq i\leq N$.  The grey solid lines are at $\pm1$. The right panel shows the shape of a kernel density estimate at $t=1950$ across 100 realisations, showing expected bimodality in velocity distribution.}
	\label{fig:switch}
\end{figure}

\begin{figure}
	\centering
	\begin{minipage}{\linewidth}
		\centering
		\includegraphics[width=\linewidth]{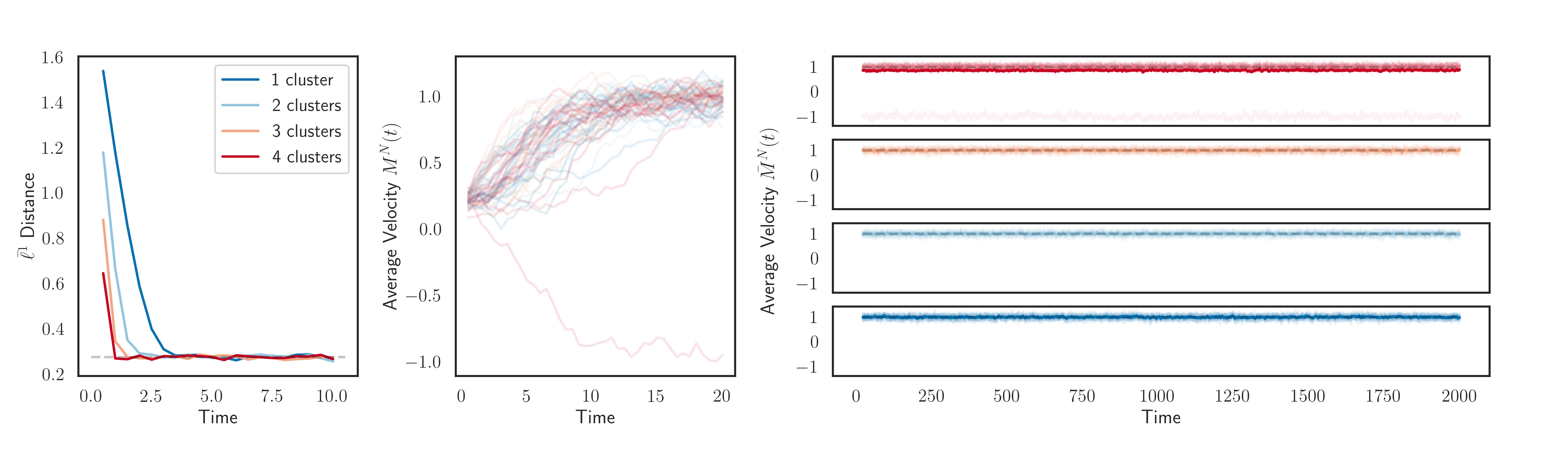}
		\subcaption{\(\varphi\) as in \eqref{eq:phiBump} }
	\end{minipage}\\
	\begin{minipage}{\linewidth}
		\centering
		\includegraphics[width=\linewidth]{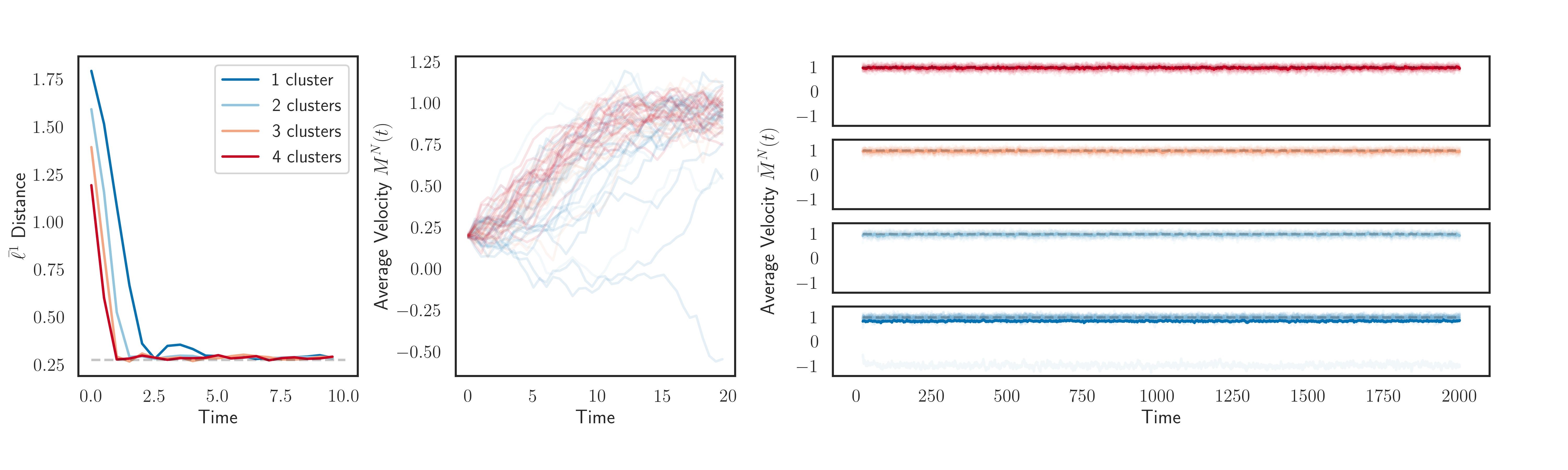}
		\subcaption{\(\varphi^{\gamma} (x), \gamma = 0.05 \)}
	\end{minipage}

	\caption{{\bf LS IPS \eqref{parsys1}-\eqref{parsys2} for various initial data, and two different interaction functions, all other parameters are fixed.} In this parameter regime, irrespective of initial data, space mixing always occurs.  We show 15 realizations for 4 initial conditions of  the  LS IPS \eqref{parsys1}-\eqref{parsys2}, evolved with $\sigma = 1$, $\Delta t = 0.01$; choices of IFs are in the subcaptions, in particular in row (a) the IF is positive everywhere, in row (b) it is  compactly supported.  We plot: in the left column $\bar{\ell}^1$; realizations of average velocity $M^N(t)$ up to time $t=20$ in the middle column; and both $M^N(t)$ and $\bar{M}^N(t)$ up to $t=2000$ in the right column. Different colours  distinguish different initial conditions:  the initial positions are clusters evenly spaced across the torus,  as in \eqref{ic:PSonecluster} and  all particles have deterministic velocity $0.2$. The clusters always break and the empirical distribution of positions converges quickly towards the uniform distribution. Note that the  $\ell^1$ error will never converge exactly to $0$ as we only simulate a finite number of particles, the grey dashed line indicates the expected $\ell^1$ error when drawing 480 uniform random samples on $[0,2\pi]$. As intuitive, the more clusters we start with,  the quicker space-mixing occurs.
	}
	\label{fig:ClusterConvergence}
\end{figure}

{\bf Speed of convergence}. Above we have commented on parameter values in Figure \ref{fig:oneclustervarygamma} and Figure \ref{fig:oneclustervarynoise} for which convergence to the space homogeneous distribution does not occur. Let us now focus on parameter values for which convergence to equilibrium does occur and comment on how the choice of interaction radius $R$ and noise $\sigma$ affects speed of convergence.  As we have already stated, in Figure \ref{fig:oneclustervarygamma}  $\gamma$ is varied so that the interaction radius $R$ is $1/6, 1/3, 1/2,2/3,1$ and $4/3$ of the initial cluster size. For  $R$ large enough, space mixing always occurs; however, while convergence in velocity is monotone in $R$, being fastest in the mean-field case, speed of space mixing is not monotone in $R$. In particular, for lower value of the noise ($\sigma=0.25$), the fastest convergence occurs when the interaction radius is as large as the initial cluster size; for the higher noise value ($\sigma=0.5$) this is no longer the case and the fastest space-mixing occurs when $R$ is larger than the initial cluster size. An analogous fact is observed for the LS PDE, as we will show in the next section.

\begin{figure}
	\centering
	\begin{minipage}{\linewidth}
		\centering
		\includegraphics[width=0.8\linewidth]{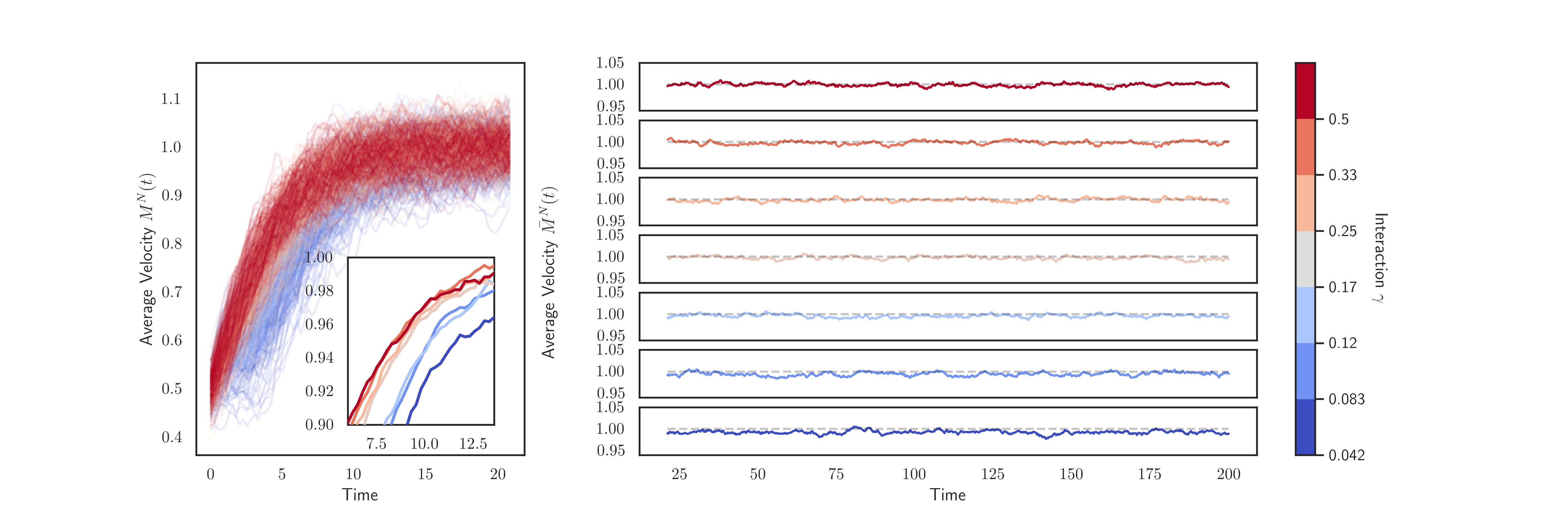}
	\end{minipage}\\
	\begin{minipage}{\linewidth}
		\centering
		\includegraphics[width=0.8\linewidth]{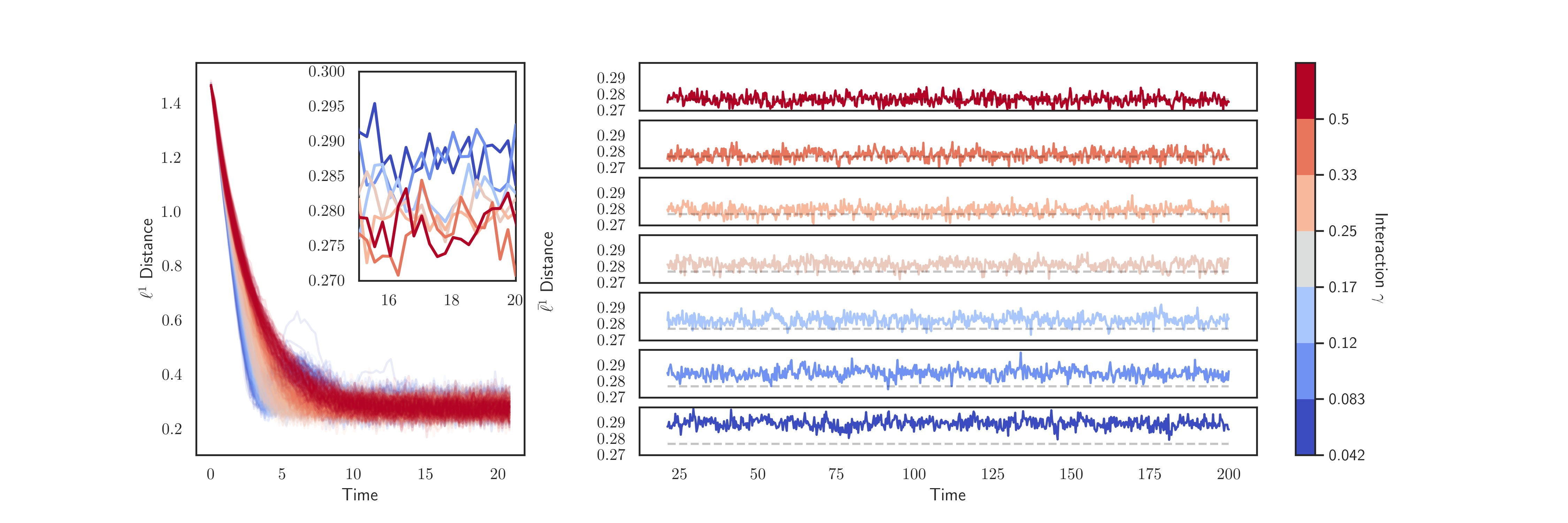}
		\subcaption{$\sigma = 0.25$}
	\end{minipage}\\
	\begin{minipage}{\linewidth}
		\centering
		\includegraphics[width=0.8\linewidth]{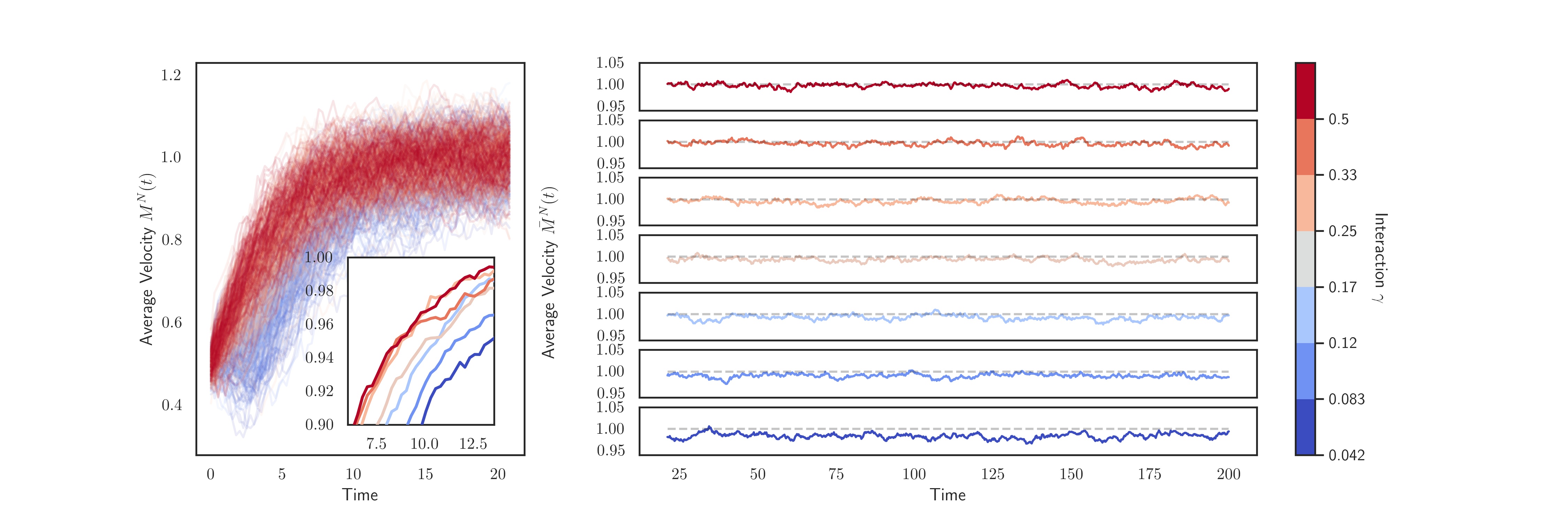}
	\end{minipage}\\
	\begin{minipage}{\linewidth}
		\centering
		\includegraphics[width=0.8\linewidth]{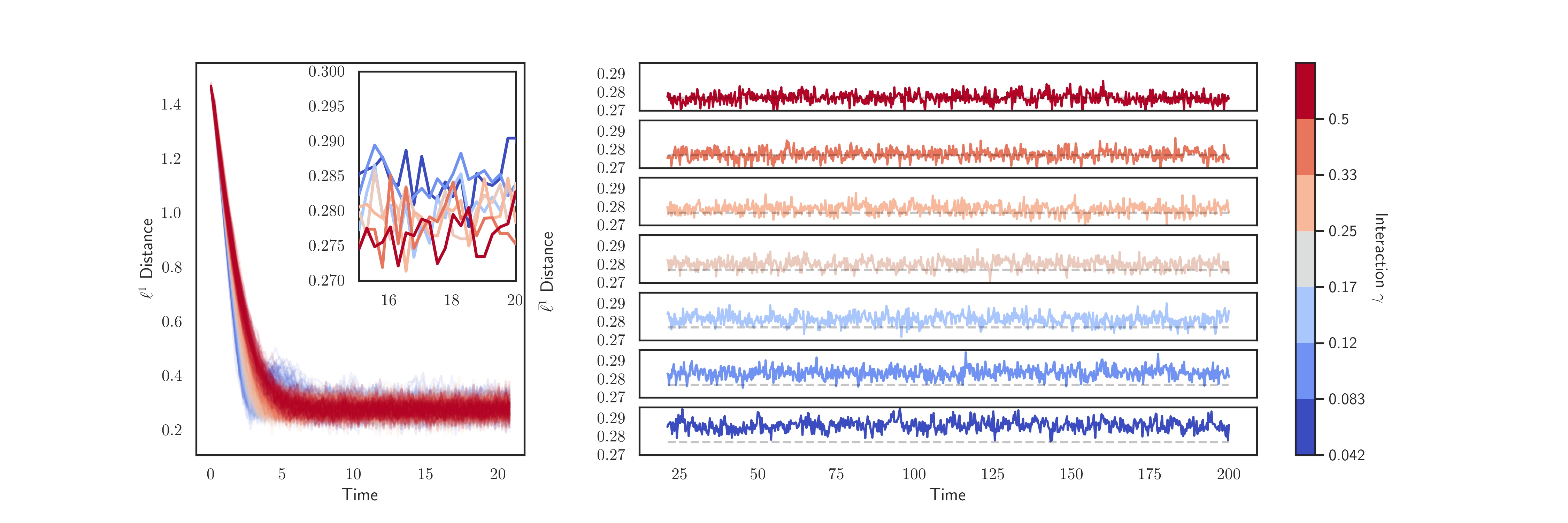}
		\subcaption{$\sigma = 0.5$}
	\end{minipage}
	\captionsetup{width=1.2\linewidth}
	\caption{{\bf LS IPS \eqref{parsys1}-\eqref{parsys2}: effect of interaction radius on space-mixing}. We evolve the LS IPS  with  $N=500$, $\sigma=0.25$, $\Delta t = 0.005$. At time $t=0$ the position distribution is one cluster,  \eqref{ic:PSonecluster} with  $a=0$, $w = \frac{1}{4}$,  and  velocity $\mathcal N(\frac{1}{2}, 0.4)$-distributed. The IF is \eqref{eq:phiIndicator}  and the radius $R= 2\pi \gamma$ of interaction is varied so that it is \(1/6, 1 / 3, 1 / 2, 2 / 3, 1, 4 / 3\) of the support of the initial condition. Mean-field case (\(\gamma = 0.5\)) is also shown. Colour indicates the size of $\gamma$,    as indicated in the colour bar on the right.  The bottom and top row show  behaviour of the $\ell^1$ distance \eqref{eq:l1metric} and of the average velocity, respectively. The left and right columns shows short and long time behaviour, respectively, of such quantities.  In particular, outer boxes on the left column contain plots of realizations, inner boxes contain plots of  average across 100 realisations of the respective quantities. Lower interaction radii lead to more clustered distributions after the initial faster dispersal and, for the lowest $\gamma$ value, the system is not mixing in space (even if the IPS is simulated for longer and for higher particle count, which we do not show here).   Except for $\gamma=0.5$, all other parameter choices fall outside of the stability region \eqref{eq:localStabilityInequality}.
	}
	\label{fig:oneclustervarygamma}
\end{figure}

In Figure \ref{fig:oneclustervarygamma}, for stronger noise levels, we  see the average velocity deviate farther from $+1$, reflecting that the system switches more readily between metastable states in velocity.
We also notice that, irrespective of the choice of $\gamma$, after a quick initial dispersion, the distribution of positions becomes less uniform for a short time (of course this is particularly evident for lower $\gamma$ values). This is due to the fact that at the beginning some particles will have positive velocity  while others will have negative velocity. When those moving clockwise and those moving counterclockwise collide, they form an apparent cluster. After this first collision, they either align velocities and then mix in space (as with higher interaction radii), or they are able to continue without alignment and collide again, causing a second rise in the $\ell^1$ error.

\begin{figure}
	\begin{minipage}{\linewidth}
		\centering
		\includegraphics[width=\linewidth]{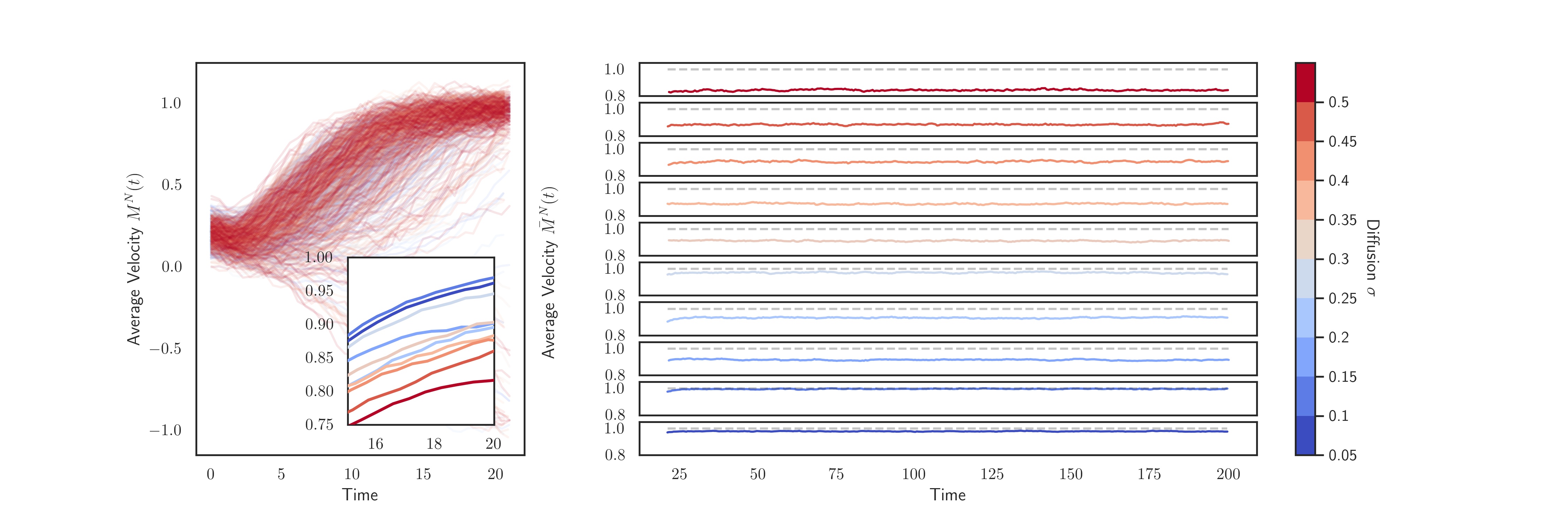}
	\end{minipage}\\
	\begin{minipage}{\linewidth}
		\centering
		\includegraphics[width=\linewidth]{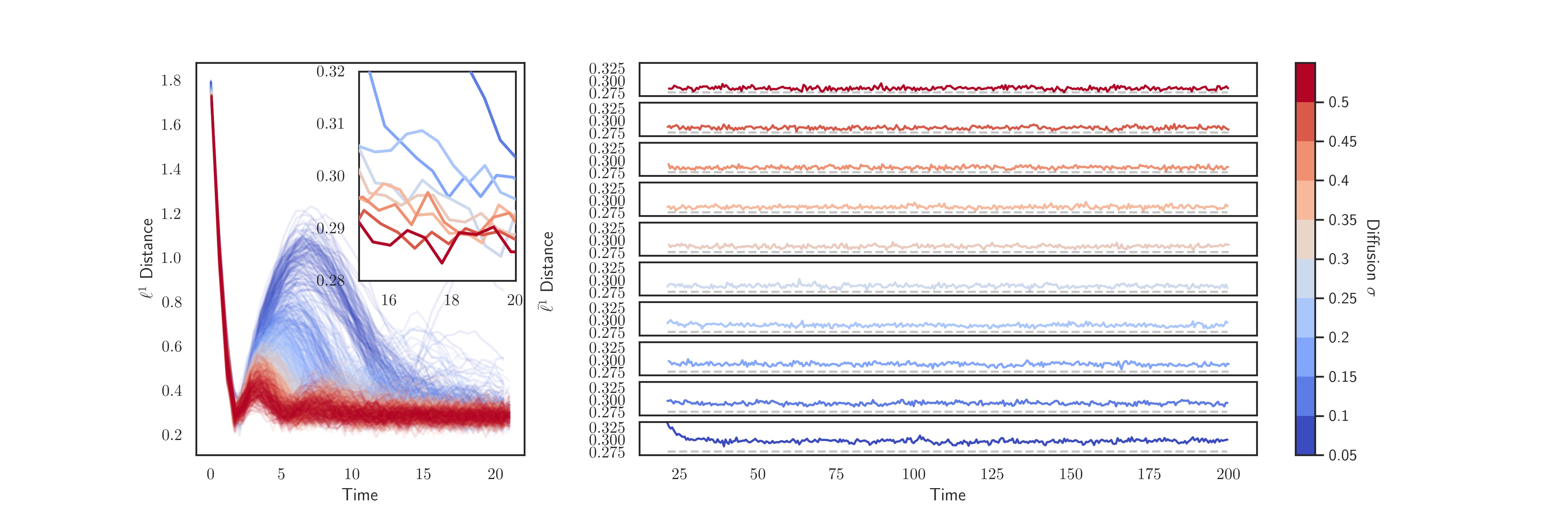}
	\end{minipage}
	\caption{{\bf LS IPS \eqref{parsys1}-\eqref{parsys2}: Effect of varying noise on space-mixing.} We fix the cutoff IF  as in \eqref{eq:phiIndicator} with  $\gamma = 0.05$ (the value which was giving most persistent non space homogeneous behaviour in  Figure \ref{fig:oneclustervarygamma}) and evolve  the dynamics   for values of the noise  $\sigma$, ranging between  0.05 and 0.5.   The initial condition for the particles is $\mathcal{N}(0.2,2)$ in velocity and one cluster in space \eqref{ic:PSonecluster} as in Figure \ref{fig:oneclustervarygamma}. The bottom panel shows the $\ell^1$ distance from uniform distribution for each realisation (left) and averaged across 100 realisations (right). The top panel shows the average velocity $\bar{M}^N(t)$ similarly. The timestep $\Delta t$ is chosen such that $\frac{\Delta t}{\sigma} = \frac{1}{10} $. {No simulations lie in the stability region \eqref{eq:localStabilityInequality}}.}
	\label{fig:oneclustervarynoise}
\end{figure}

\section{Simulation of the LS PDE \texorpdfstring{\eqref{nl}-\eqref{Mnl}}{of the PDE}} \label{sec:TimeDependentPDESim}
In this section we compare the behaviour of the LS PDE  \eqref{nl} with the behaviour of its approximating particle system \eqref{parsys1}-\eqref{parsys2} and with the behaviour of the GS PDE \eqref{nlGarnier}.

\begin{figure}
	\centering
	\begin{minipage}{0.4\linewidth}
		\centering
		\includegraphics[width=\linewidth]{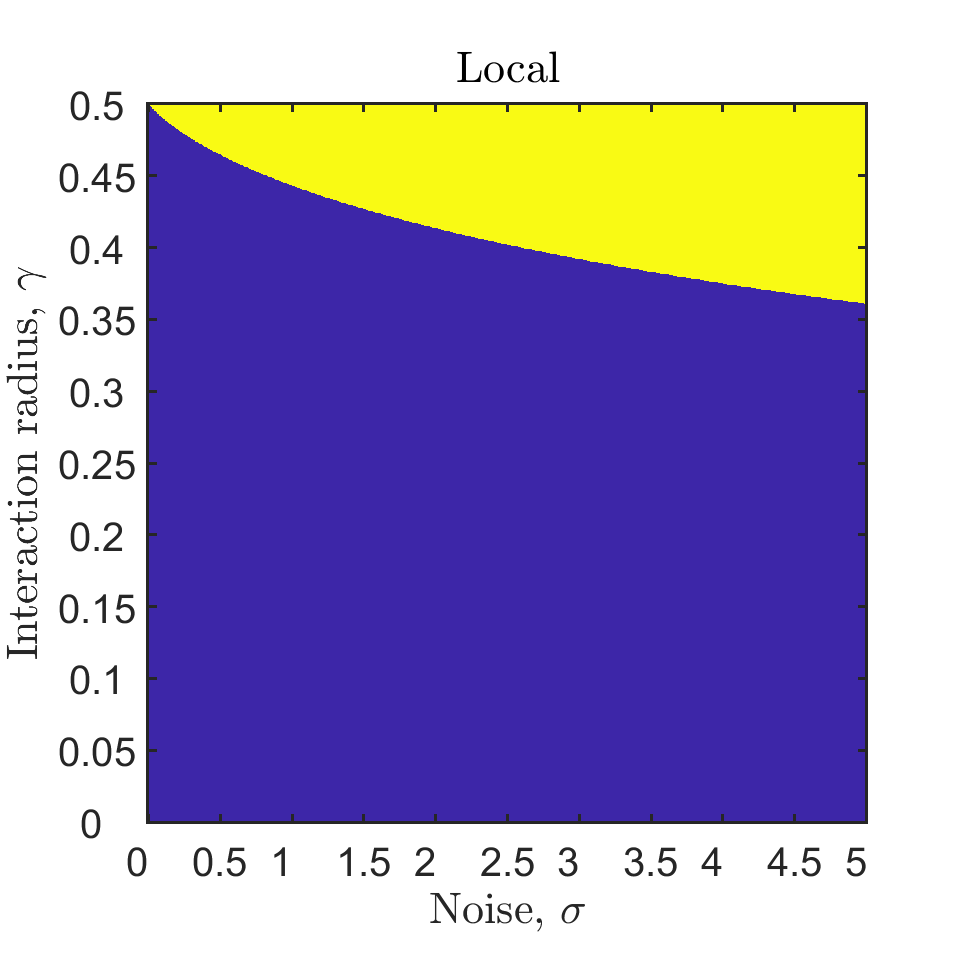}
	\end{minipage}%
	\begin{minipage}{0.4\linewidth}
		\centering
		\includegraphics[width=\linewidth]{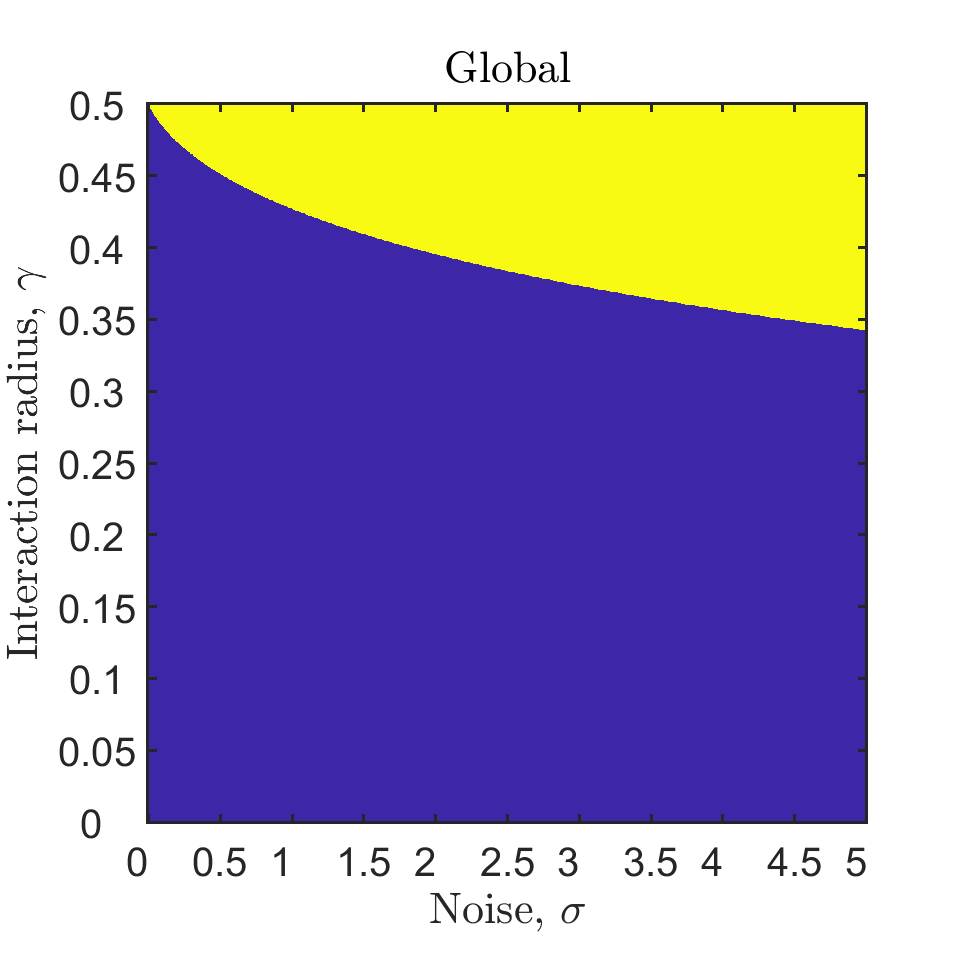}
	\end{minipage}
	\caption{{\bf Stability region for the LS and GS PDE.} For the IF \eqref{eq:phiIndicator} and herding function \(G\) \eqref{eq:Gsmooth} with \(\alpha = 1\), the stability condition for the LS PDE \eqref{eq:localStabilityInequality} (left) and GS PDE \eqref{eq:globalStabilityInequality} (right) are calculated for a variety of interaction radii and noise levels (at \(\xi=1\) with \(L=2\pi\)). Yellow indicates that the inequality is satisfied, thus the stationary state is stable, while blue indicates that the inequality is not satisfied, i.e. in the blue region the model could be either linearly stable or not.  }
	\label{figPDE:StabilityRegion}
\end{figure}

\subsection{\texorpdfstring{Comparison Between  the LS IPS \eqref{parsys1}-\eqref{parsys2}  and  the LS PDE \eqref{nl}-\eqref{Mnl}}{Comparison Between  Particle System and PDE}}

Figure \ref{figPDE:E1_VaryRatio_Local_lowNoise}  contains simulations of the PDE \eqref{nl}, showing the effect of changing the interaction radius for different levels of noise.  The parameter choices in Figure \ref{figPDE:E1_VaryRatio_Local_lowNoise}  are the same as those  in  Figure \ref{fig:oneclustervarygamma}, so that Figure \ref{figPDE:E1_VaryRatio_Local_lowNoise} can be viewed as the `PDE counterpart' of the particle simulations in Figure \ref{fig:oneclustervarygamma}. As opposed to what happens in the LS IPS, in  the LS PDE no non-uniformity is apparent in the position distribution, even at very low interaction radii. This is the only significant difference we notice between the LS IPS and the LS PDE. This is striking in view of the following observation: the simulations of the GS IPS produced in \cite{garnierMeanFieldModel2019} show that, outside stability region \eqref{eq:globalStabilityInequality}, the GS IPS may be unstable. In Figure \ref{figPDE:E1_VaryRatio_Global_lowNoise} we will simulate the GS PDE model for parameter values outside of its stability region and observe that the instability of the equilibria observed in particle simulations persists in the GS PDE (we will come back to this).  In Figure \ref{figPDE:E1_VaryRatio_Local_lowNoise} we simulate the LS PDE for various values of $\gamma$ and $\sigma$; all of these values (with the exception of $\gamma=0.5$) fall outside of the stability region. The uniform space distribution is non-stable in the particle system for such values, despite the invariant measure $\mu_+$ (similarly for $\mu_-$) always being stable for the LS PDE.

Aside from this difference, in the parameter regime in which space-mixing occurs the dynamics in Figure \ref{figPDE:E1_VaryRatio_Local_lowNoise} broadly agree with those of the IPS in Figure \ref{fig:oneclustervarygamma}. Indeed, regarding speed of convergence to equilibrium, we observe the following: speed of convergence to the uniform space- distribution in \(\ell^1\) distance  does not depend monotonically on the interaction radius;  instead there exists an optimum value, which seems to depend on the strength of the noise $\sigma$. As for the particle system, at lower noise levels ($\sigma=0.25$) the fastest convergence in \(\ell^1\) distance occurs when the radius of interaction is equal to the  width of the initial cluster (the simulation is started with an initial condition which is compactly supported in $x$, i.e. a `cluster' in position), i.e. for $\gamma=0.25$. On the contrary,  convergence in average velocity and velocity variance is monotonic in \(\gamma\), the fastest being the mean-field regime, irrespective of the value of $\sigma$.

\begin{figure}
	\makebox[\textwidth][c]{\includegraphics[width=1.1\linewidth]{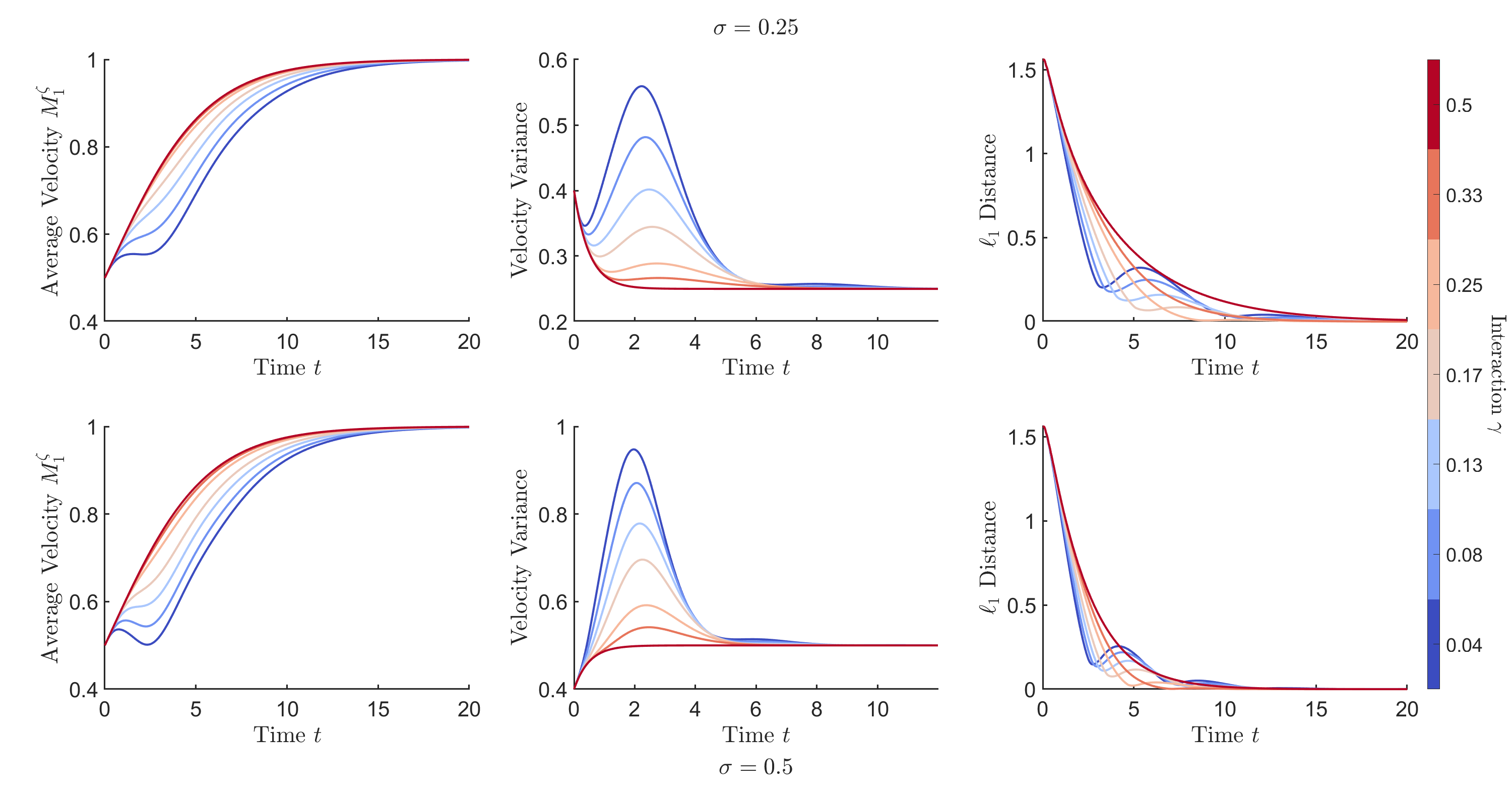}}
	\caption{{\bf LS PDE \eqref{nl}: effect of interaction radius on space-mixing.} All parameters as in Figure \ref{fig:oneclustervarygamma}: we simulate the LS PDE with IF \eqref{eq:phiIndicator}, initial distribution   \(\mathcal{N}(\frac{1}{2},0.4)\) in velocity and one cluster in space,  \eqref{ic:PDEflatcluster} with \(a=0, w = \frac{1}{4}\). The speed of convergence (in the sense of average velocity (\ref{C1}, left column), velocity variance (\ref{C2}, middle column) and \(\ell^1\) distance (\ref{C3}, right column) is shown. The interaction radii are chosen as in Figure \ref{fig:oneclustervarygamma}. Time axes are truncated after convergence to emphasise earlier dynamics. All simulations are outside of the stability region, except when $\gamma=0.5$.}
	\label{figPDE:E1_VaryRatio_Local_lowNoise}
\end{figure}

\subsection{Comparison Between  GS PDE and LS PDE}
In Figure \ref{figPDE:E1_VaryRatio_Global_lowNoise} we repeat the  experiments of Figure \ref{figPDE:E1_VaryRatio_Local_lowNoise},   this time simulating the GS PDE \eqref{nlGarnier}.
All the parameter regimes of  Figure \ref{figPDE:E1_VaryRatio_Global_lowNoise} fall outside of  the stability region for the GS PDE \eqref{eq:globalStabilityInequality}, except when $\gamma=0.5$.

{\bf Travelling wave solutions. } As we have already observed, outside of its stability region, the LS PDE remains stable, in the sense that  the dynamics always converges to either $\mu_{+}$ or $\mu_{-}$. As we can see from Figure \ref{figPDE:E1_VaryRatio_Global_lowNoise}, this is not the case for the GS PDE:  outside of  the  stability region \eqref{eq:globalStabilityInequality}, solutions of the GS PDE may not converge to  $\mu_{\pm}$.  Indeed in Figure \ref{figPDE:E1_VaryRatio_Global_lowNoise} we can see that, for small interaction radii,   the mean and variance of the dynamics do not converge to the mean and variance of either ($\mu_0$ or) $\mu_{\pm}$ and  the space distribution never converges to the space homogeneous configuration. It is important to recall that it is known analytically that $\mu_0$ and $\mu_{\pm}$ are the only stationary states of the GS PDE \eqref{nlGarnier} and,  because the solution of \eqref{nlGarnier} is tight, one expects that it should still converge somewhere.

In Figure \ref{figPDE:E3_marginal_waterfalls} we further investigate the GS PDE and observe  that, as for the GS IPS,  the instability of the  equilibria $\mu_{\pm}$ is due to an  instability of the space homogeneous distribution, and results in the formation of a travelling wave. Such a travelling wave is observed to be stable.  Indeed,  in Figure \ref{figPDE:E3_marginal_waterfalls} we simulate both the LS PDE (left column) and the GS PDE (right column);  in the bottom row we choose an initial datum which is stationary in velocity and a small perturbation of the uniform distribution in space, i.e. we start `very close' to the stationary state $\mu_+$. With this initial condition (and for parameter choices which are outside of the stability region of both models), we find that the initial small space inhomogeneity is amplified by the GS dynamics, which converges to a travelling wave. For the LS evolution, rather, the initial space inhomogeneity is forgotten and the solution converges to $\mu_+$. In the top row of Figure \ref{figPDE:E3_marginal_waterfalls}, the same experiment is repeated, this time starting with an initial condition further away from $\mu_+$, and again the GS PDE converges to a travelling wave (which, from simulation data,  seems to coincide with  the previously observed one), while the LS PDE converges to $\mu_+$.
Figure \ref{figPDE:E3_marginals} is produced with the same setup of the top row of Figure \ref{figPDE:E3_marginal_waterfalls} -- in particular with an initial distribution which is symmetric in position but bimodal in velocity, so that some `particles' start with positive velocity and some with negative velocity -- but allows a different visualization of the travelling wave, which informs the intuitive understanding and the heuristics which we have described in Note \ref{note:heuristics}.  In Figure \ref{figPDE:E3_marginals} we observe that under the GS PDE dynamics,   the initially symmetric position-cluster moves quicker than under the LS PDE,  whilst leaving behind a  `tail' of `slower particles' and hence losing immediately the initial symmetry;  under the LS PDE the position distribution  remains instead symmetric for all times and converges quickly to the uniform distribution.

\begin{figure}
	\makebox[\textwidth][c]{\includegraphics[width=1.2\linewidth]{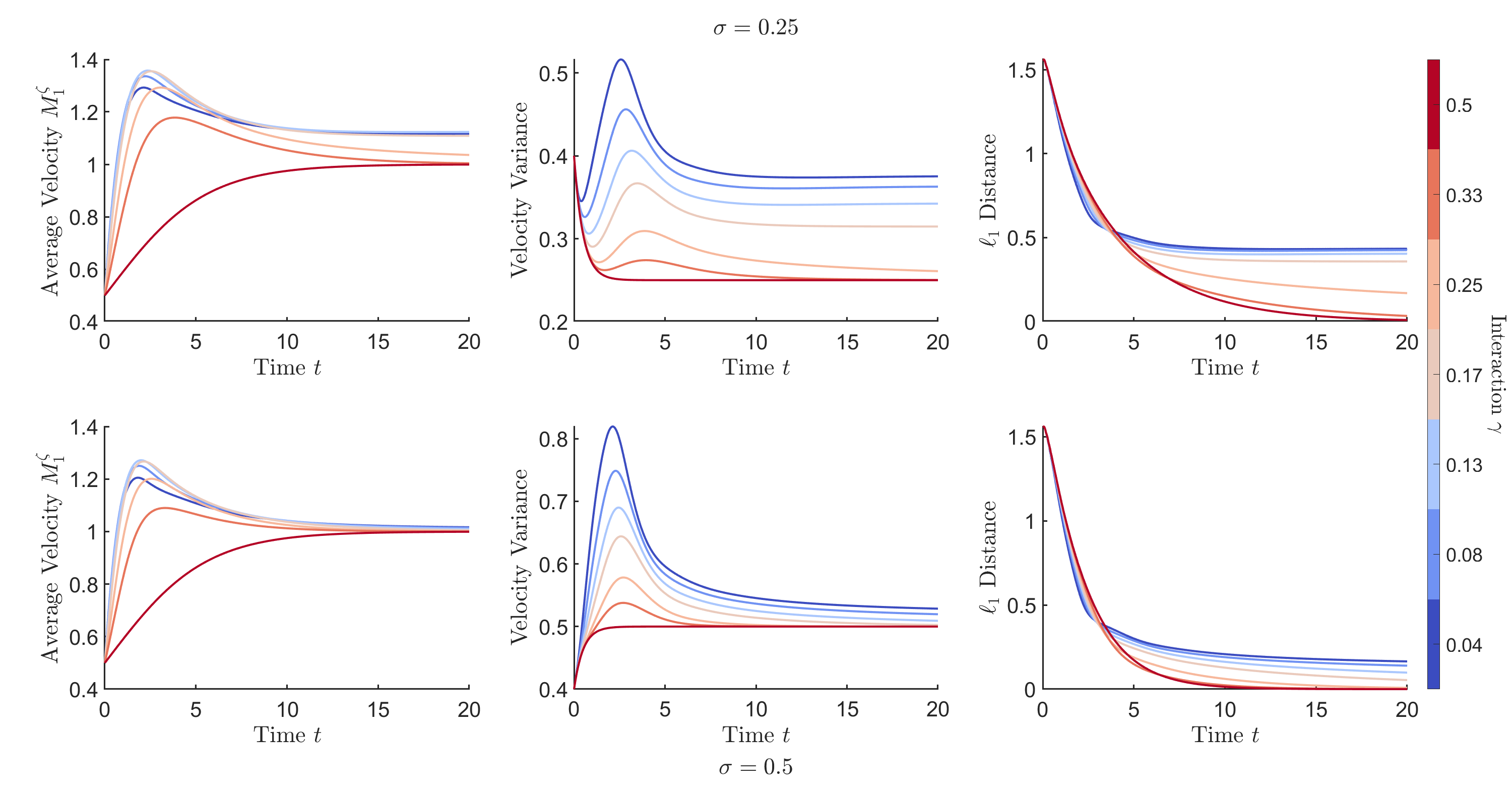}}
	\caption{{\bf GS PDE \eqref{nlGarnier}: effect of interaction radius on space-mixing.} We repeat the experiment of Figure \ref{figPDE:E1_VaryRatio_Local_lowNoise} for the dynamics \eqref{nlGarnier}-\eqref{MnlGarnier}. All  parameters are as described in Figure \ref{figPDE:E1_VaryRatio_Local_lowNoise}.  For low values of $\gamma$ the dynamics does not converge to  $\mu_{\pm}$ and the instability persists for long times (tested up to \(t=10000\), not shown). Except for $\gamma=0.5$, none of the parameter values are in the stability region \eqref{eq:globalStabilityInequality}.}
	\label{figPDE:E1_VaryRatio_Global_lowNoise}
\end{figure}

\begin{figure}
	\centering
	\begin{minipage}{0.5\linewidth}
		\centering
		\includegraphics[width=\linewidth]{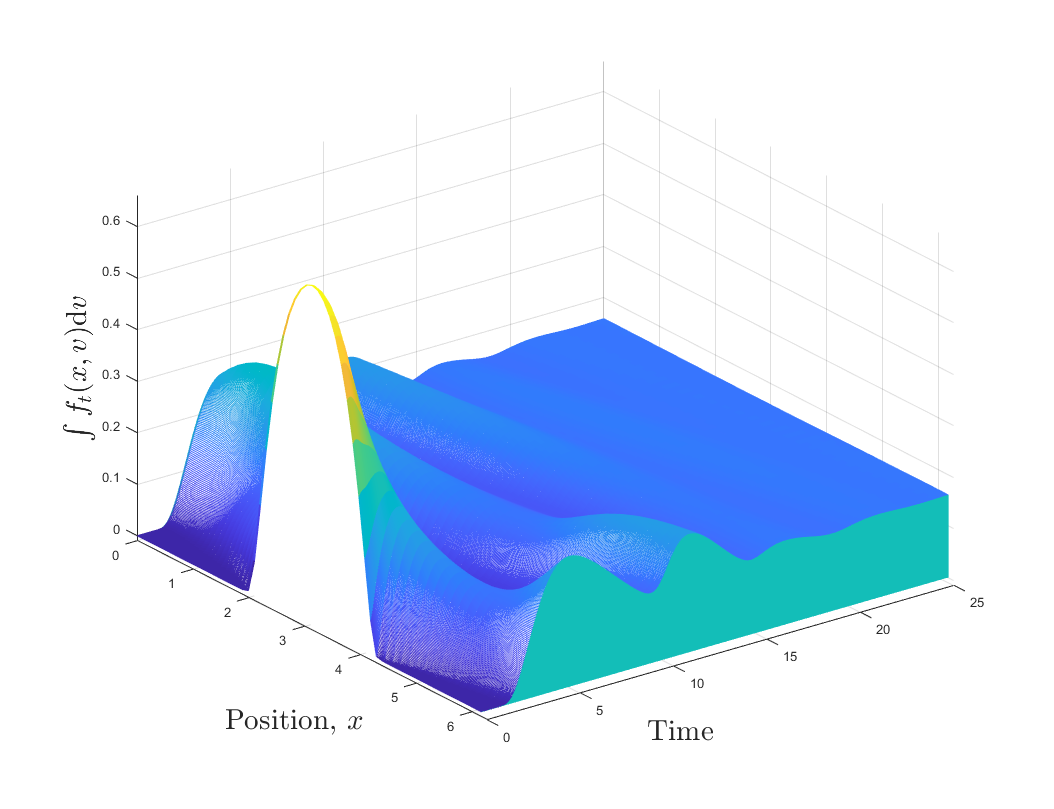}
		\subcaption{ }
	\end{minipage}%
	\begin{minipage}{0.5\linewidth}
		\centering
		\includegraphics[width=\linewidth]{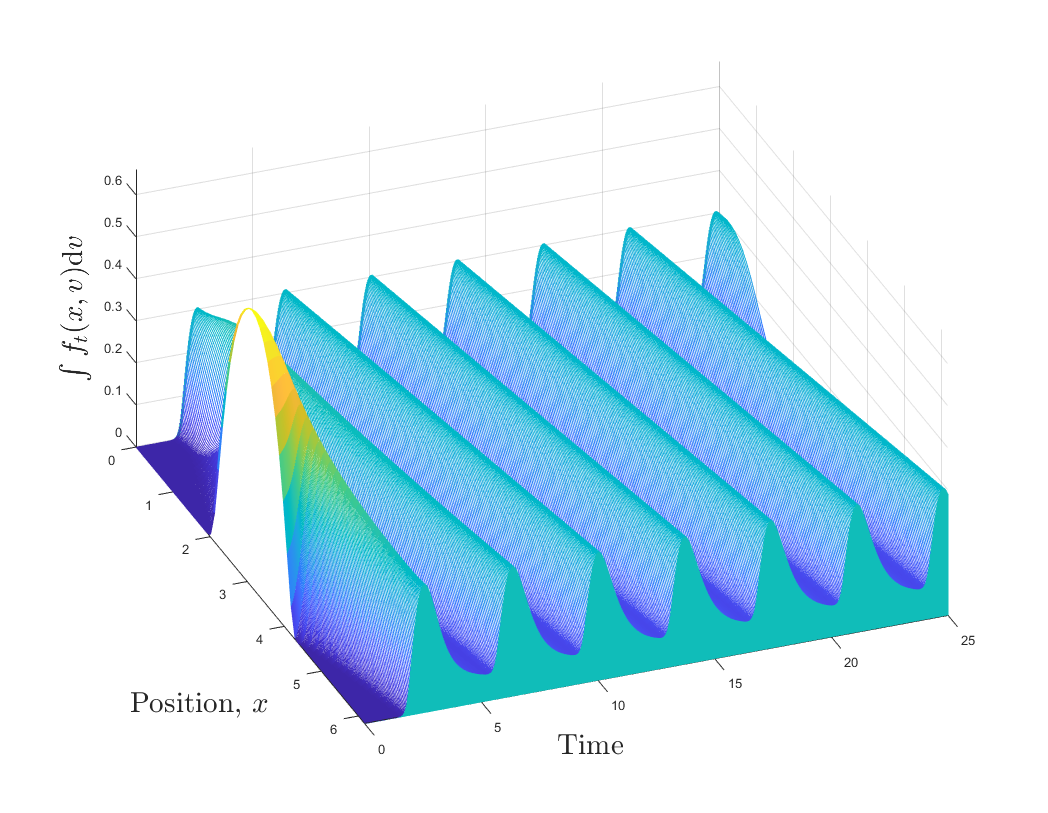}
		\subcaption{ }
	\end{minipage}\\
	\centering
	\begin{minipage}{0.5\linewidth}
		\centering
		\includegraphics[width=\linewidth]{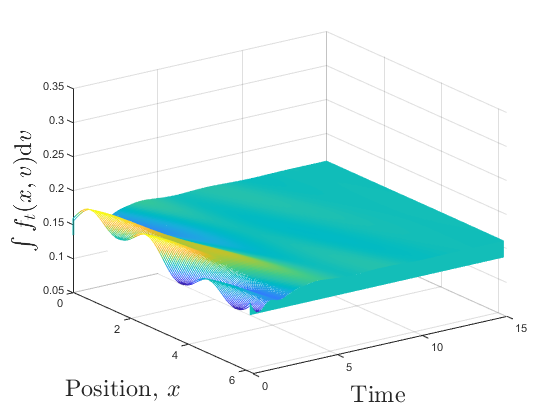}
		\subcaption{ }
	\end{minipage}%
	\begin{minipage}{0.5\linewidth}
		\centering
		\includegraphics[width=\linewidth]{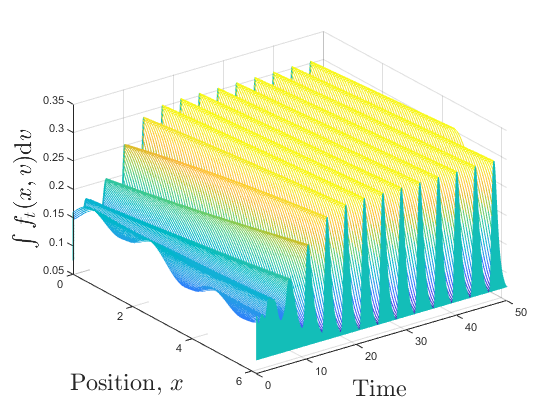}
		\subcaption{ }
	\end{minipage}
	\caption{{\bf Comparison between the GS and the LS PDE.} The dynamics are markedly different for the local scaling (left) compared to the global (right). The plots show the position marginal,  \(\int f_t(x,v) \, \dif v \),  evolved over time. The interaction function is given by \eqref{eq:phiIndicator} with \(\gamma = 0.01\) and \(\sigma=0.25\). We thus  lie outside of the stability region of both dynamics. Top and bottom row differ for choice of initial datum: in the top row the initial datum is a cluster,  given by \eqref{ic:PDEbump}  with $a=\pi, w=\frac{1}{5}$,   while the velocity distribution is a mixture of two Gaussians  \(\mathcal{N}(-0.4,0.3^2)\) and \(\mathcal{N}(0.6,0.3^2) \). In the bottom row, the position distribution is a perturbation around the uniform distribution, \(h(x) = (1+0.1\sin(x)+0.1\sin(3x))/2\pi\), while the velocity is \(\mathcal{N}(1,0.25)\) i.e. the velocity is in stationarity.}
	\label{figPDE:E3_marginal_waterfalls}
\end{figure}

{\bf Basin of attraction. }  In the space-homogeneous case (see \eqref{lin}-\eqref{meanasymp}), the sign of the initial average velocity determines the basin of attraction, i.e. if the initial configuration has positive/negative average velocity then the dynamics converges to $\mu_+$/$\mu_-$, respectively, see \eqref{meanasymp} and \cite{buttaNonlinearKineticModel2019}. Figure \ref{figPDE:E2_TwoClusters_SteepHerding} (left column) shows that, as expected,  this is not the case for the space-dependent case. Indeed, when initial configuration and parameters are appropriately chosen, initial data with positive average velocity can converge to $\mu_-$. This is true for both the GS and the LS PDE. Figure \ref{figPDE:E2_TwoClusters_SteepHerding} also shows that the basins of attraction of such models do not coincide.

{\bf Speed of convergence. }  We look again at Figure \ref{figPDE:E1_VaryRatio_Local_lowNoise} and Figure \ref{figPDE:E1_VaryRatio_Global_lowNoise}, this time focusing on the parameter values when convergence to the steady state does occur.  In the LS PDE the speed of convergence in average velocity and in velocity variance are monotone in $\gamma$, with the fastest case being the mean-field case,  while such a monotonicity is lost for the $\ell^1$ distance; in the GS PDE speed of convergence is instead observed to be monotone in $\gamma$ when measured using velocity variance and $\ell^1$ distance, and such a monotonicity is lost when we look at the average velocity instead. These features persisted for different values of $\sigma$ (not shown here).

\begin{figure}
	\makebox[\textwidth][c]{\includegraphics[width=1.2\linewidth]{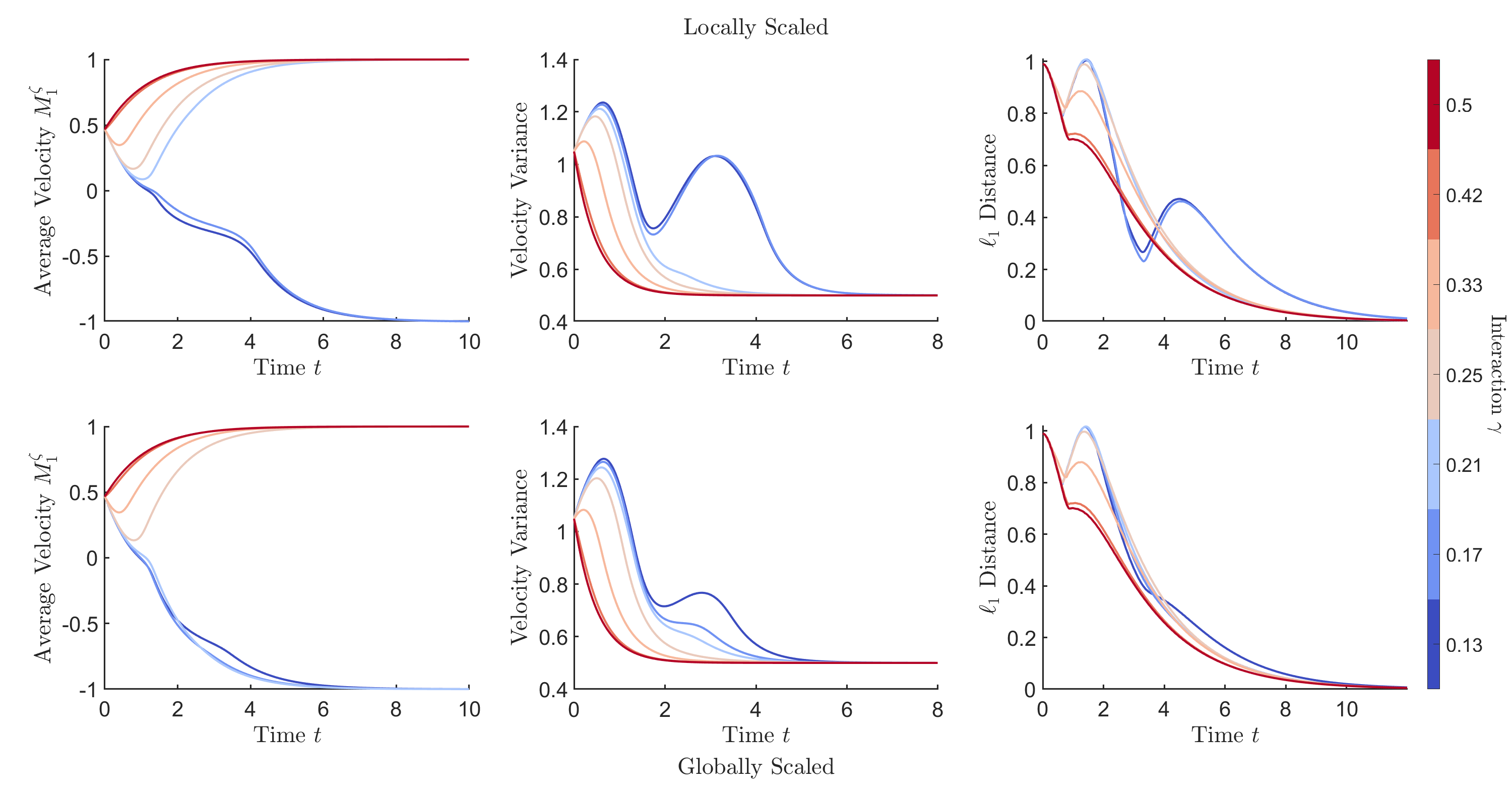}}
	\caption{{\bf GS PDE (bottom row) and LS PDE (top row): basin of attraction.} The basin of attraction of $\mu_{\pm}$ can differ  depending on the model, and is no longer governed by the initial average velocity like in the space homogeneous case. The herding function is \eqref{eq:Gsmooth} with \(\alpha = 10\). The top row shows \ref{C1}, \ref{C2}, \ref{C3} for the dynamics \eqref{nl} while the bottom row shows the same metrics for \eqref{nlGarnier}, both with \(\sigma = 0.5\).  The initial condition  is two clusters of uneven density travelling in opposite directions, \eqref{PDEic:twoclusters} }
	\label{figPDE:E2_TwoClusters_SteepHerding}
\end{figure}

\begin{figure}
	\centering
	\begin{minipage}{0.5\linewidth}
		\centering
		\includegraphics[width=\linewidth]{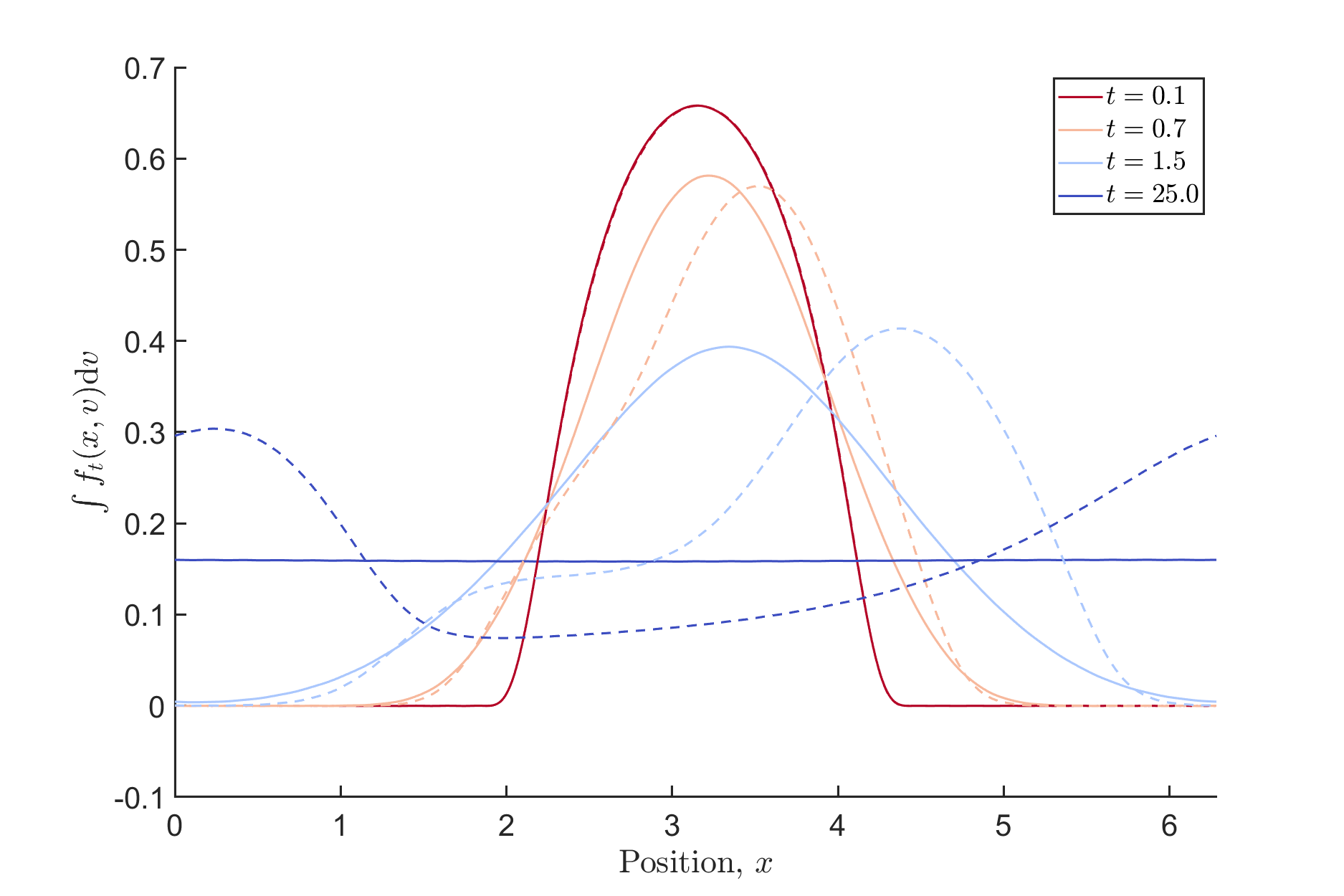}
		\subcaption{ }
	\end{minipage}%
	\begin{minipage}{0.5\linewidth}
		\centering
		\includegraphics[width=\linewidth]{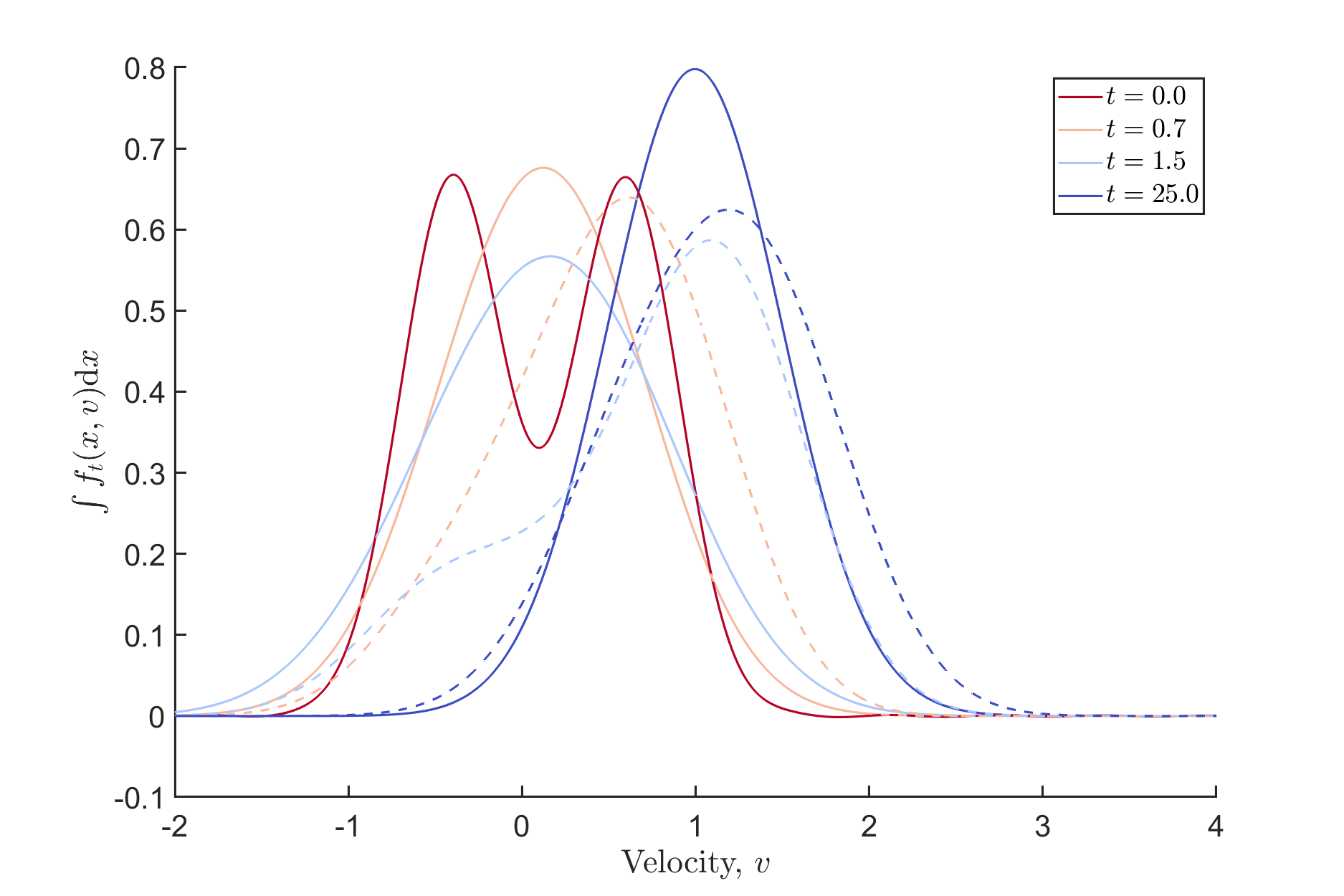}
		\subcaption{ }
	\end{minipage}
	\caption{{\bf GS and LS PDE: travelling wave versus space-mixing}. Starting from a single cluster in position,  \eqref{ic:PDEbump} with $a=\pi,w=\frac{1}{5}$ and a bimodal velocity distribution,  the dynamics are markedly different for the LS PDE (solid line) compare to the GS PDE (dashed line). The left plot (a) shows the position marginal \(\int f_t(x,v) \, \dif v \), the right plot shows the velocity marginal \(\int f_t(x,v) \, \dif x\) at representative time points. The IF is given by \eqref{eq:phiIndicator} with \(\gamma = 0.01\),  \(\sigma=0.25\). We thus are outside the stability region of both dynamics. The initial velocity distribution is a mixture of two Gaussians one with mean \(-0.4\), the other with \(0.6\), both with  variance \(0.3^2\).}
	\label{figPDE:E3_marginals}
\end{figure}

\section{Linear Stability  Analysis}\label{sec:linearstabilityanalysis}
Linearising the LS PDE \eqref{nl} around the equilibria $\mu_{\pm}$, which we express as
\be\label{eqn:equil}
\mu_{\pm}(x,v) = \mathcal U [0,L] \times  \Fx (v)
= \frac 1L  \Fx (v)\, , \qquad   \qquad \Fx(v) \frac{1}{\sqrt{\pi 2\sigma}}
e^{-\frac{(v-\xi)^2}{2\sigma}}\,,
\ee
where $\xi=\pm 1$, gives the following evolution:
\begin{equation}\label{blablu}
	\pa_t g = - v \pa_x g -\pa_v \left[(\xi - v) g\right]+ \frac{1}{L}G'(\xi) \left[\iint (\xi-w) \varphi(x-y) g(y,w) \dif w\, \dif y\right]\Fx'(v)+ {\sigma}\pa_{vv}g \,.
\end{equation}
We now assume $\varphi(x)=\varphi(\lv x \rv )$ and Fourier-transform the above equation in space. To this end, for every $k \in \mathbb Z$, let us set
$$
	e_k(x) = e^{-2 \pi k i x/L}
$$
and decompose $g$ in the Fourier basis $\{e_k\}_{k \in \mathbb Z}$:
\be\label{expansiong}
g(t, x,v) = \sum_{k \in \mathbb Z} g_k(t,v) e_k(x) \, ,
\ee
where  $g_k(t,v) = \frac{1}{L} \langle g, \bar{e}_k\rangle := \frac 1L \int_0^L
	g(t,x,v) e^{2 \pi k i x/L} \dif x$ (and $\bar{e}_k$ denotes complex conjugate). Then, for each $k \in \mathbb Z$, $g_k(t,v)$ satisfies
\be\label{eqnforgo}
\pa_t g_k = \frac{2 \pi ik}{L} v g_k-
\pa_v \left[ (\xi-v) g_k\right] +  G'(\xi) \varphi_k
\left[ \int_{\R} (\xi -w) g_k(t,w) \dif w\right]
\Fx'(v)+  \sigma \pa_v^2 g_k \,,
\ee
where $\varphi_k$ is the $k$-th Fourier mode of $\varphi$, i.e. $\varphi_k = \frac 1L\langle \varphi, \bar{e}_k\rangle$.
Considering now the Fourier transform of $g_k$ in the velocity variable, i.e.
$$
	\hat g_k(t, \eta):= \int_{\R} g_k(t,v)e^{-i\eta v} \dif v \, ,
$$
we find that $\hat g_k$ satisfies the following equation,
\be\label{eqn:fouriertrkthmode}
\pa_t \hat g_k+ \left( \frac{2\pi k}{L}+\eta \right)\pa_{\eta} \hat g_k=
\left( -i\xi \eta -\sigma\eta^2\right)\hat g_k
+ G'(\xi)\varphi_k \hat F_{\xi}(\eta) \eta \, \omega_k(t)
\ee
where
\be\label{eqn:omegak}
\omega_k(t):= \pa_{\eta}\hat g_k(t,0) + i \xi \hat g_k(t,0) \,.
\ee
Analogously to \cite{garnierMeanFieldModel2019}, we define the norm
$$
	\| \hat g_k(t, \cdot)\|_S:=
	\| \hat g_k(t, \cdot)\|_{L^1}+ \| \hat g_k(t, \cdot)\|_{L^{\infty}}+
	\| \pa_{\eta}\hat g_k(t, \cdot)\|_{L^{\infty}}
$$
and we define our notion of stability based on this norm. That is,  assuming the initial datum is such that
$\| \hat g_k(0, \cdot)\|_S< \infty$, we define the $k$-th order mode to be {\em stable} if
\be\label{stabilitycondition}
\sup_{t \geq 0} \| \hat g_k(t, \cdot)\|_S< \infty \,.
\ee
Noting  that the  equations \eqref{eqn:fouriertrkthmode} are decoupled in $k$,  we say that the system is linearly stable if all the modes are stable.

With this set up, we can now move on to the proof of Proposition \ref{lemma:modestability}.
\begin{proof}[Proof of Proposition \ref{lemma:modestability}]
	The proof of this Lemma follows very closely the strategy of \cite[Section 4]{garnierMeanFieldModel2019}, so we only highlight the main steps. We try to use similar notation to make comparison simpler. To begin with, equation \eqref{eqn:fouriertrkthmode} can be solved explicitly (with a similar method to \cite[Lemma 4.1]{garnierMeanFieldModel2019}) and the solution is given by
	\begin{align}\label{eqn:soleqnkthmode}
		\hat g_k(t, \eta) & =e^{-\gk(t, \eta)}
		\hat g_k(0, e^{-t}\eta -D_k(1-e^{-t})) \nonumber\\
		                  & +  G'(\xi) \varphi_k \int_0^t e^{-\gk(t-s, \eta)}\omega_k(s) H_{\xi}\left( e^{-(t-s)}\eta-D_k(1-e^{-(t-s)})\right)  \dif s \,,
	\end{align}
	where $H_{\xi}(\eta)=\eta \hat\Fx(\eta) =  \eta \hat{\Fx}(\eta) = \eta  e^{-i\eta \xi}e^{-\eta^2\sigma/2}$, $\omega_k$ has been defined in \eqref{eqn:omegak} and
	$$
		\gk(t,\eta):=\frac{\sigma}{2}(\eta+D_k)^2(1-e^{-2t}) + (i\xi - 2\sigma D_k)(\eta+D_k)(1-e^{-t})+
		(-i\xi D_k+\sigma D_k^2)t \,.
	$$

	{\em Step 1}. Noting that
	$\omega_0(t)=-im_1(t)+i \xi m_0(t)$ where
	$$
		m_0(t)= \int_{\R} g_0(t,v) \dif v, \quad m_1(t)= \int_{\R} v g_0(t,v) \dif v\, ,
	$$
	from \eqref{eqn:soleqnkthmode} with $k=0$, we obtain
	\begin{align*}
		\lv \hat g_0(t, \eta)\rv & \leq \|\hat g_0(0, \cdot)\|_{L^{\infty}}e^{-\sigma \eta^2 (1-e^{-2})/2}                                                  \\
		                         & +[\sup_{t\geq 0} (\lv m_0(t)\rv+\lv m_1(t)\rv)]  \lv G'(\xi) \varphi_0 \rv \left[\int_0^1 \lv H_{\xi}(\eta e^{-s})\rv ds
		+ \|\hat \Fx\|_{L^{\infty}} \lv \eta\rv e^{-\sigma \eta^2 (1-e^{-2})/2}  \right].
	\end{align*}
	Hence the $0-th$ mode is stable iff $\sup_{t\geq 0} (\lv m_0(t)\rv+\lv m_1(t)\rv)< \infty$. Using the equation for $g_0(t,v)$ (namely equation \eqref{eqnforgo} with $k=0$), we find
	$$
		\frac{\dif }{ \dif t}m_0(t) =0, \quad \frac{\dif}{\dif t}m_1(t) = \left(-1+ \varphi_0 G'(\xi)\right)m_1(t)+ \left(-1+  \varphi_0 G'(\xi)\right) \xi \,  m_0(t) \,
	$$
	hence the $0$-th mode is stable iff $ \varphi_0 G'(\xi)<1$.

	{\em Step 2}. We now turn to study the stability of the $k$-th mode for $k\neq 0$. Similarly to what we have done for the $0$-th mode, from the expression \eqref{eqn:soleqnkthmode} it is clear that if $\sup_{t\geq 0}\omega_k(t) <\infty$ then the $k$-th mode is stable. So we only need to study the time behaviour of $\omega_k(t)$. To this end, let us write a more explicit expression for $\omega_k$:
	$$
		\omega_k(t) = \psi_k(t)+\psi_k^*(t)+
		\int_0^t R_k(t-s)\omega_k(s) \dif s+ \int_0^t  R_k^*(t-s)\omega_k(s) \dif s \,
	$$
	where $\psi_k, R_k$ are defined in \cite[eqns (33)and (34)]{garnierMeanFieldModel2019} while
	\begin{align*}
		\psi^*_k(t) & := i \xi e^{-\gk(t,0)} \hat g_k(0,-D_k (1-e^{-t}))              \\
		R^*_k(t)    & :=  G'(\xi)\varphi_k  e^{-\gk (t,0)}H_{\xi}(-D_k(1-e^{-t})) \,.
	\end{align*}
	Because
	$$
		\sup_t \lv \omega(t) \rv \leq
		\sup_t \lv \psi_k(t)\rv +\sup_t \lv \psi_k^*(t)\rv+ \sup_t \lv \omega(t) \rv \int_0^{\infty} (\lv R_k(t)\rv  +\lv R_k^* \rv) \dif t \,,
	$$
	if $\sup_t \lv \psi_k(t)\rv<\infty$ and  $\sup_t \lv \psi_k^*(t)\rv<\infty$, $\int_0^{\infty}(\lv R_k(t)\rv  +\lv R_k^* \rv) dt <1$, then $\omega_t$ is bounded in time. The boundedness of $\psi_k(t)$ is proved in \cite[Lemma 4.2]{garnierMeanFieldModel2019}; the bound on $\psi^*_k(t)$ can be done analogously (and it is indeed quite straightforward just by the definition of $\psi_k^*$). So we turn to the last estimate. It is proved in \cite[Appendix A]{garnierMeanFieldModel2019} that the following estimate holds:
	$$
		\int_0^{\infty}\lv R_k(t)\rv \dif t \leq
		\lv G'(\xi) \varphi_k\rv
		\left( 1+ \frac{3 \sqrt{\pi}}{\sqrt{\sigma} \lv D_k\rv } + \frac{3 \lv \xi\rv}{\lv D_k\rv \sigma} + \frac{e^{-1}}{1+\lv D_k\rv \sigma }\right).
	$$
	One can do a similar computation to estimate the integral of $R_k^*(t)$; to this end first notice that the following equality holds
	$$
		\left\vert e^{-\gk (t,0)}H_{\xi}(-D_k(1-e^{-t})) \right \vert  = \lv D_k \rv (1-e^{-t}) e^{\sigma D_k^2 (1-e^{-t}-t)}.
	$$
	Now we use the following two estimates: we estimate $(1-e^{-t})$ from above by $t$ if $0\leq t \leq 1$ and by 1 if $t>1$; we estimate $e^{\sigma D_k^2 (1-e^{-t}-t)}$ from above by $e^{-\sigma D_k^2  t^2/4}$  if $0\leq t \leq 1$ and by  $e^{\sigma D_k^2  (1-t)} $ for $t>1$. Hence,

	\begin{align*}
		\int_0^{\infty}\lv R_k^*(t)\rv \dif t & \leq
		\lv G'(\xi)\varphi_k\rv\int_0^1 \lv D_k\rv  t e^{-{\sigma}D_k^2\frac{t^2}{4}} \dif t + \lv G'(\xi)\varphi_k\rv \int_1^{\infty} \lv D_k\rv
		e^{\sigma D_k^2(1-t)}\dif t \\
		                                      & \leq \lv G'(\xi)\varphi_k\rv \left(\frac{\sqrt{2\pi}}{\sigma \lv D_k\rv}\sqrt{\frac{2}{\pi }}+ \frac{1}{\sigma \lv D_k\rv}\right) = \lv G'(\xi)\varphi_k\rv \frac{3}{\sigma \lv D_k\rv}.
	\end{align*}
	This concludes the proof.
\end{proof}

\section{Formal calculation of the spectrum}\label{sec:spectrum}
In this section we consider the linearisation of both the GS PDE and LS PDE around the equilibria $\mu_{\pm}$ and we calculate the spectrum of the resulting operators. That is, linearising the GS PDE \eqref{nlGarnier} around the equilibria \eqref{eqn:equil} we obtain the equation
$$
	\pa_tf = -v \pa_x f - \pa_v\left[(\xi-v) f\right] - \frac{1}{L} G'(\xi) \left[\iint \varphi(x-y)f(y,w) w \, \dif y \dif w\right] \Fx'(v) +  \sigma \pa_{vv} f \,,
$$
where $\xi=\pm 1$.
In Section \ref{subs:spectrum GS} 	we look for $\lambda \in \mathbb C$ and functions $f$ (in a space which will be defined and motivated later, see Note \ref{note:spectrumchoiceofspace}) such that $\cL f = \lambda f$, where $\cL$ is the operator on the RHS of the above, i.e. the integro-differential operator which is defined (on smooth functions, which decay fast enough at infinity) as
$$
	\cL f : = -v \pa_x f - \pa_v\left[(\xi-v) f\right] - \frac{1}{L} G'(\xi) \left[\iint \varphi(x-y)f(y,w) w \, \dif y \dif w\right] \Fx'(v) +  \sigma \pa_{vv} f\,.
$$
In Section \ref{subsec:spectrum LS} we look to solve the analogous eigenvalue problem for the operator $\mathcal L_{loc}$	obtained by linearising the LS PDE around $\mu_{\pm}$.
We will show that in both cases if $\lambda \in \mathbb C$ solves an appropriate fixed point problem (FPP) (see Proposition \ref{prop:spectrum} and Proposition \ref{prop:spectrumlocal}) then it is an eigenvalue for $\cL$ ($\ccL$, respectively). For the globally scaled dynamics, we show that for $\sigma$ small enough such a FPP has at least one solution with positive real part, hence proving that the operator $\cL$ has at least one eigenvalue with positive real part, see Proposition \ref{prop:positive-eigenvalues-GLOB}. For the locally scaled dynamics our study of the FPP is not as conclusive as in the globally scaled case, as we resort to approximation (for $\sigma$ small) of the relevant FPP, see Section \ref{subsec:spectrum LS} for details.

\subsection{Spectrum for the globally scaled model}\label{subs:spectrum GS}
As in the previous section, we decompose the function $f$ in the Fourier basis $\{e_k\}_{k \in \mathbb Z}$:
\be\label{expansionf}
f(x,v) = \sum_{k \in \mathbb Z} f_k(v) e_k(x) \, ,
\ee
where again $f_k(v) = \frac 1L\langle f, \bar{e_k}\rangle := \frac 1L \int_0^L
	f(x,v) e^{2 \pi k i x/L} \dif x$ . By imposing $\cL f = \lambda f$, one can see that the equation satisfied by each one of the modes $f_k(v)$ is given by
$$
	\frac{2 \pi i k}{L} v f_k(v) - \pa_v \left[(\xi-v)f_k(v)\right]
	-  G'(\xi)\varphi_k \left(\int \dif v f_k(v) v\right)\Fx'(v) + \sigma \pa_{vv} f_k(v) = \lambda f_k(v) \,,
$$
where we assume that $\varphi$ is such that $\varphi_k \in \mathbb R$ for every $k \in \mathbb Z$ ($\varphi_k$ has been defined just after \eqref{eqnforgo}). Note that the equations for the modes $f_k$ are decoupled. If we Fourier-transform the above equation, we find that the Fourier transform of $f_k$, namely
$$
	\hat f_k(\eta) : = \int_{\R} f_k(v) e^{-i\eta v} \dif v \, ,
$$
satisfies the following equation
$$
	\left(- \frac{2\pi k }{L} - \eta \right)\pa_{\eta} \hat f_k
	- \left( i \xi \eta + {\sigma \eta^2}+\lambda\right)\hat f_k =-\frac{1}{i} u_k \widehat{\Fx'}(\eta)
$$
i.e.
\be\label{link1}
\left(- \frac{2\pi k }{L} - \eta \right)\pa_{\eta} \hat f_k
- \left( i \xi \eta + {\sigma \eta^2}+\lambda\right)\hat f_k = u_k \,  \eta\,  e^{-i\eta \xi}e^{-\eta^2\sigma/2}\, ,
\ee
(we recall $\hat{\Fx}(\eta) = e^{-i\eta \xi}e^{-\eta^2\sigma/2}$ and hence $\widehat{\Fx'}(\eta) = i \eta  \hat{\Fx}(\eta)$),
having set
$$
	u_k:= -i G'(\xi)\varphi_k \left(\int \dif v\,  f_k(v) v\right)  =  G'(\xi)\varphi_k \left(\frac{\dif}{\dif \eta}\hat f_k(\eta)\Big\vert_{\eta=0}\right),
$$
where the second equality comes from  $\int \dif v\,  f_k(v)v = -\frac 1i  \left(\frac{\dif}{\dif \eta}\hat f_k(\eta)\Big\vert_{\eta=0}\right)$.

Still using the notation $D_k=2\pi k /L$, for $\eta \neq -D_k$ the problem \eqref{link1} can then be rewritten as
\be\label{eqn:E}
\pa_{\eta} \hat f_k + a_k \hat f_k = b_k\,,
\ee
where
\[
	a_k(\eta) = \frac{ \left( i \xi \eta + {\sigma \eta^2}+\lambda\right)}{\left(D_k+ \eta \right)}\,, \qquad b_k(\eta) = \frac{\eta}{\left(D_k+ \eta \right)}u_k  e^{-i\eta \xi}e^{-\eta^2\sigma/2}\,, \qquad u_k =  G'(\xi)\varphi_k\hat f_k'(0)\,,
\]
and we are using interchangeably the notation $\pa_{\eta}\hat f_k(\eta)$ and $\hat f_k'(\eta)$.
Let us also define, for any $k\ne 0$,
$$
	A_k(\eta) = i\xi \eta + {\sigma}\left(\frac{\eta^2}{2} - D_k \eta\right)+
	\left({\lambda+{\sigma D_k^2} - i\xi D_k}\right)\log\frac{\lv\eta + D_k\rv}{\lv D_k\rv} \,,
$$
so that $A_k'(\eta) = a_k(\eta)$ and $A_k(0)=0$.

Before moving on to further investigate \eqref{eqn:E}, let us make some comments.
\begin{note}\label{note:spectrumchoiceofspace}
	Because the equations for the  $\hat f_k$'s   are decoupled, if $\hat f_k$ satisfies \eqref{eqn:E} for a given $\lambda$, then $f_k(v) e_k(x)$ is an eigenvector for the operator $\cL$, associated with the eigenvalue $\lambda$. Hence by solving \eqref{eqn:E} we determine a set of eigenvalues and eigenvectors for $\cL$. Compatibly with the stability notion given in Section \ref{sec:linearstabilityanalysis}, see equation \eqref{stabilitycondition}, we will look for eigenvectors $f_k$ such that $\hat f_k \in L^1 \cap L^{\infty}$ and $\pa_{\eta} \hat f_k \in L^{\infty}$ and denote by $S$ the space of functions $g=g(v)$ such that $\hat g \in L^1 \cap L^ {\infty}$,  $\pa_{\eta} \hat g \in L^{\infty}$ and both $\hat g$ and $\pa_{\eta}\hat{g}$ are continuous. The requirement that  $\hat{f}_k(\eta), \,\pa_{\eta}\hat{f}_k(\eta)$ should be continuous is in order for equation \eqref{eqn:E} to make sense   (note indeed that $u_k$ is defined via $\hat f_k'(0)$). It is important to emphasize that we are not working in an $L^2$ or other Hilbert space;  even if we were, the operator $\cL$ is not self-adjoint, hence  standard spectral theorems do not apply and a priori one does not know whether there exists a basis (of $L^2$) of eigenvectors for $\cL$. In particular the analysis of this section is partially formal because it does not guarantee that we are finding {\em all} the eigenvectors (and hence all the eigenvalues) of $\cL$. The rigorous study of the spectrum of $\cL$ ( and $\ccL$) is a problem in its own right and we don't tackle it in this paper. However, for the globally scaled dynamics our objective is to show that there exists at least one eigenvalue with positive real part. Because among the set of eigenvalues we determine through our analysis there is always at least one with positive real part, the results we produce for the globally scaled dynamics are conclusive. The same is not true of the results we present in Section \ref{subsec:spectrum LS} both because we will consider an approximation of the FPP solved by the eigenvalues of $\ccL$ and also because, even if we were to solve the original FPP and prove that all the solutions of such a problem have negative real part, the set of the solutions of the FPP is still only a subset of all the eigenvalues of $\ccL$. \hfill{$\Box$}
\end{note}
With these clarifications, if for a given  $\lambda \in \mathbb C$ one can find a solution $\hat f_k \in S$ of \eqref{eqn:E} (for some $k$), then $\lambda$ is an eigenvalue for the operator $\cL$. For expository purposes it is convenient to fix an arbitrary $k$ and, for that $k$, check for which $\lambda$'s in $\mathbb C$ equation \eqref{eqn:E} admits solutions $\hat f_k \in S$ (clearly, $\hat f_k$ depends on $\lambda$ as well, but we don't include this dependence explicitly in the notation).

\begin{proposition}\label{prop:spectrum} With the notation introduced so far, the following holds:

	\noindent
	i) When $k=0$, if $\lambda =0$ or $\lambda = G'(\xi)-1$  then equation \eqref{eqn:E} admits a solution $\hat f_0 \in S$. \\
	ii) When $k\neq 0$, if  $\lambda$  is such that $\Re(\lambda)< -\sigma D_k^2$, then  there exists a solution $\hat f_k$ of equation \eqref{eqn:E} in S. \\
	iii) When $k\neq 0$, if  $\lambda$  is such that $\Re(\lambda)> -\sigma D_k^2$ then a solution $\hat f_k$ of \eqref{eqn:E} exists in $S$ if and only if the following two conditions are satisfied
	\begin{align}
		  & {\lambda G'(\xi) \varphi_k } \int_0^1\! \dif z\, e^{+\sigma D_k^2 z} z (1-z)^{\lambda - i\xi D_k + {\sigma D_k^2} -1} = 1, \label{condlam1GLOB}                                                                        \\
		  & {\lambda G'(\xi) \varphi_k } \int_0^1\! \dif z\, e^{-\sigma D_k^2 z} (2-z) (1-z)^{\lambda - i\xi D_k + {\sigma D_k^2} -1}  =  \frac{\hat f_k(-2D_k)}{\hat f_k(0)} e^{-2i\xi D_k + 2\sigma D_k^2}. \label{condlam2GLOB}
	\end{align}
\end{proposition}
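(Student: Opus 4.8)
The plan is to use the spatial Fourier decomposition already set up before the statement, which reduces the eigenvalue equation $\cL f=\lambda f$ to the family of decoupled scalar ODEs \eqref{eqn:E} for the modes $\hat f_k$. Each \eqref{eqn:E} is a first-order linear ODE, so I would solve it by variation of parameters with the integrating factor $\mathrm{e}^{A_k}$: away from the singular point of $a_k$ — which is $\eta=-D_k$ when $k\neq0$ and $\eta=0$ when $k=0$ — the general solution reads $\hat f_k(\eta)=\mathrm{e}^{-A_k(\eta)}\big[\,c+\int^{\eta}\mathrm{e}^{A_k(s)}b_k(s)\,\dif s\,\big]$. Two features then drive everything. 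First, $b_k$ is not a given function but contains $u_k=G'(\xi)\varphi_k\hat f_k'(0)$, and evaluating \eqref{eqn:E} at $\eta=0$ forces $\hat f_k'(0)=-(\lambda/D_k)\hat f_k(0)$ for $k\neq0$; since $\hat f_k(0)=0$ is easily seen to give only the trivial solution, one may normalise $\hat f_k(0)=1$, and then the solution on the branch containing $\eta=0$ is uniquely determined. Second, membership in $S$ must be enforced by hand at the singular point, and this is exactly where the dichotomy in (ii)--(iii) appears: the residue of $a_k$ at $-D_k$ is $\rho:=\lambda-i\xi D_k+\sigma D_k^2$, so $\mathrm{e}^{-A_k(\eta)}$ behaves like $|\eta+D_k|^{-\rho}$, while the factor $\mathrm{e}^{-\sigma\eta^2/2}$ makes $\mathrm{e}^{-A_k}$ rapidly decaying at $\pm\infty$; whether $\Re\rho$ is positive or negative (equivalently $\Re\lambda\gtrless-\sigma D_k^2$) decides whether this homogeneous mode is admissible at $-D_k$.

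For part (i), $k=0$ and the singular point is $\eta=0$ itself. Writing \eqref{eqn:E} in its non-divided form and evaluating at $\eta=0$ forces $\lambda\hat f_0(0)=0$. If $\lambda=0$, the ODE integrates explicitly (the integrating factor $\mathrm{e}^{i\xi\eta+\sigma\eta^2/2}$ cancels the Gaussian in the forcing up to a factor $\eta$) to $\hat f_0(\eta)=(C+u_0\eta)\mathrm{e}^{-i\xi\eta-\sigma\eta^2/2}$, which lies in $S$ for every $C$; the self-consistency relation $u_0=G'(\xi)\hat f_0'(0)$ (recall $\varphi_0=1$) is then satisfiable with $C\neq0$, so $\lambda=0$ is an eigenvalue. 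If $\lambda\neq0$ then $\hat f_0(0)=0$, and differentiating the ODE once at $\eta=0$ yields $(\lambda+1-G'(\xi))\hat f_0'(0)=0$, so a nonzero solution requires $\lambda=G'(\xi)-1$; for that value the forcing still vanishes at $\eta=0$ and the ODE integrates to a genuine element of $S$.

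For part (iii), $k\neq0$ and $\Re\lambda>-\sigma D_k^2$, so $\Re\rho>0$: now $\mathrm{e}^{-A_k(\eta)}$ blows up at $-D_k$, while the Duhamel integral $\int_0^{\eta}\mathrm{e}^{A_k(s)}b_k(s)\,\dif s$ converges as $\eta\to-D_k$; boundedness of $\hat f_k$ there therefore forces the bracket to vanish at $\eta=-D_k$. The substitution $\eta=-D_kz$ — under which $\eta+D_k=D_k(1-z)$, the logarithmic part of $A_k$ becomes $(1-z)^{\rho}$, and the two Gaussian factors in $\mathrm{e}^{A_k(s)}b_k(s)$ combine to $\mathrm{e}^{+\sigma D_k^2z}$ — turns this vanishing condition, on the branch through $\eta=0$, into precisely \eqref{condlam1GLOB}. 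On the other branch $(-\infty,-D_k)$, parametrising the solution by its value at $\eta=-2D_k$ (a point where $A_k$ is real), the same computation produces \eqref{condlam2GLOB}, with the ratio $\hat f_k(-2D_k)/\hat f_k(0)$ coming from the two normalisations and the exponential prefactor coming from the values of $A_k$ at $0$ and at $-2D_k$. Conversely, if both conditions hold, the one-sided limits of $\hat f_k$ at $-D_k$ coincide — they equal the ``regular'' value $\beta/\rho$, where $\beta$ is the residue of $b_k$ at $-D_k$, which is the same from both sides since $u_k$ is — a short additional estimate gives continuity and boundedness of $\pa_\eta\hat f_k$ at $-D_k$, and the Gaussian factor gives decay at $\pm\infty$; so $\hat f_k\in S$, proving the ``if''. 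When $\Re\lambda<-\sigma D_k^2$, i.e. $\Re\rho<0$ (part (ii)), $\mathrm{e}^{-A_k}$ instead vanishes at $-D_k$, and using this together with its rapid decay at $\pm\infty$ one constructs a nonzero $\hat f_k\in S$ by an appropriate pasting across the two branches.

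The step I expect to be the main obstacle is the local analysis at the pole $\eta=-D_k$: extracting the singular part $|\eta+D_k|^{-\rho}$ from the Duhamel integral, tracking the constants it generates, and checking that under \eqref{condlam1GLOB}--\eqref{condlam2GLOB} the resulting function is genuinely $C^1$ with bounded derivative across $-D_k$ (rather than merely bounded on each side), as well as carrying out the change of variables carefully enough to reproduce exactly the $\int_0^1\dif z$ integrals in the statement. The rest — the Fourier reduction, the explicit integration of \eqref{eqn:E}, and the estimates at $\pm\infty$ — is routine, and the whole argument parallels the global-scaling stability analysis of \cite{garnierMeanFieldModel2019}.
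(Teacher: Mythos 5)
Your proposal is correct and follows essentially the same route as the paper's proof: variation of parameters for the decoupled mode equations \eqref{eqn:E}, determination of $u_k$ from evaluating the equation at $\eta=0$, the dichotomy at the singular point $\eta=-D_k$ governed by the sign of $\Re(\lambda+\sigma D_k^2)$, and the substitutions $q=-D_k z$, $q=D_k z-2D_k$ yielding \eqref{condlam1GLOB}--\eqref{condlam2GLOB}, with the $k=0$ case handled by explicit integration for $\lambda=0$ and evaluation/differentiation at $\eta=0$ giving $\lambda=G'(\xi)-1$ otherwise. The points you flag as delicate (matching the one-sided limits at $-D_k$ and the membership in $S$) are exactly where the paper's argument also works, via the limit computation $\lim_{\eta\to b}(b-\eta)^{\eps}\int_a^{\eta}g(q)(b-q)^{-1-\eps}\,\dif q=-g(b)/\eps$ for $\Re(\eps)>0$.
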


As a consequence of the points $i)$ and $ii)$ of the above proposition, the operator $\cL$ will always admit eigenvalues with negative real part.
Because we are interested in determining whether it is possible for $\cL$ to have eigenvalues with positive real part, after proving Proposition \ref{prop:spectrum} we will look more closely at condition   \eqref{condlam1GLOB} and \eqref{condlam2GLOB}, the former being the most relevant. Indeed observe that  equation \eqref{condlam1GLOB}  can be used to  find the eigenvalues $\lambda$; once the eigenvalues are (at least in principle) determined, substituting such values of $\lambda$ in \eqref{condlam2GLOB} one can  get  a condition on the ratio $\hat f_k(-2D_k) / \hat f_k(0)$, so that, as expected, one actually needs only one of the constants $\hat f_k(-2D_k)$ and  $\hat f_k(0)$ in order to determine the solution $\hat f_k$ of the ODE \eqref{eqn:E} (the latter comment will possibly be more clear in view of the proof of the above proposition).
\begin{proof}[Proof of Proposition \ref{prop:spectrum}]
	We study separately the case $k=0$ and the case $k\neq 0$.

	\medskip
	\noindent
	$\bullet$ Case $k=0$. The equation reads
	\begin{equation}\label{eqnk=0}
		\pa_{\eta} \hat f_0(\eta) + \frac{ \left( i \xi \eta + {\sigma \eta^2}+\lambda\right)}\eta \hat f_0(\eta) =  G'(\xi)\, \hat f_0'(0) \, e^{-i\eta \xi}e^{-\eta^2\sigma/2}\,,
	\end{equation}
	(recall that under our scaling $\varphi_0 =1$).  Let us first look for eigenvalues  $\lambda \ne 0$, we will check separately whether $\lambda=0$ is an eigenvalue.  Under the condition $\lambda\neq 0$,   the existence of $\hat f_0'(0)$ requires $\hat f_0(0) = 0$.
	Indeed, from \eqref{eqnk=0}, we have
	$$
		\hat f_0'(0) + i \xi \hat f_0(0) + \lambda \lim_{\eta \rightarrow 0}\frac{\hat f_0(\eta)}{\eta}  = G'(\xi)\hat f_0'(0) \,,
	$$
	hence
	$$
		(1-G'(\xi))\hat f_0'(0)+ i \xi \hat f_0(0)= - \lambda\lim_{\eta \rightarrow 0} \frac{\hat f_0(\eta)}{\eta} \,.
	$$
	Expanding $\hat f_0(\eta) = \hat f_0(0)+ \hat f_0'(0) \eta + {\eta}^2 \hat f_0''(0)/2+ \tilde{\eta}^2 \hat f_0'''(0)/6$, one can see that  $\hat f_0(0) = 0$ is always needed in order for the limit on the RHS to exist and the limit is always equal to $\hat f_0'(0)$ .
	In the case $\hat f_0'(0)=0$  the only solution of \eqref{eqnk=0} is $\hat f_0(\eta)=0$. So, to recap, when  $k=0$ we have:
	\begin{itemize}
		\item if $\hat f_0'(0)=0$ then  $\hat f_0(\eta)=0$, which is not an allowed eigenvector, hence in this case no eigenvalues are found;
		\item if $\hat f_0'(0)\neq 0$ then we find the eigenvalue $\lambda = G'(\xi)-1$, which is always negative because $\lv G'(\xi)\rv<1$.
	\end{itemize}

	Let us now check whether $\lambda =0$ is an eigenvalue. If $\lambda =0$  equation \eqref{eqnk=0} becomes
	\be\label{lam0k0}
	\pa_{\eta} \hat f_0(\eta) + \left( i \xi + \frac{\sigma^2 \eta}{2} \right) \hat f_0(\eta) =   G'(\xi)\, \hat f_0'(0) \, e^{-i\eta \xi}e^{-\eta^2\sigma/2}\,,
	\ee
	the solution of which is given by
	\[
		\hat f_0(\eta) =  \big[\hat f_0(0) +  G'(\xi)\hat f_0'(0) \eta \big] e^{-i\eta \xi}e^{-\eta^2\sigma/2}\,.
	\]
	Evaluating \eqref{lam0k0} at $\eta=0$ we then find
	\[
		\hat f_0'(0)= -\frac{i\xi}{(1-G'(\xi))} \hat f_0(0)
	\]
	hence
	\[
		\hat f_0(\eta) =  \hat f_0(0) \left(1 - {G'(\xi)}\frac{i\xi \eta} {1 - G'(\xi)} \right) e^{-i\eta \xi}e^{-\eta^2\sigma/2}\,,
	\]
	which is an admissible eigenfunction, in the sense that it belongs to the space $S$.

	\bigskip
	\noindent
	$\bullet $	Case $k\ne 0$. Evaluating  equation \eqref{eqn:E} at $\eta=0$ we find

	\be\label{de}
	\hat f_k'(0) + \frac{\lambda}{D_k} \hat f_k(0) = 0\,,
	\ee
	hence
	\be\label{ukloc}
	u_k = -\hat f_k(0) \frac{\lambda G'(\xi) \varphi_k }{D_k}\,.
	\ee

	To fix ideas take $k>0$ (the case $k<0$ is similar and we will explain why below). If $k>0$ then $0\in (-D_k,\infty)$, so the equation can be recast in integral form as follows:
	\[
		\begin{split}
			\hat f_k(\eta) & = \hat f_k(0) e^{-A_k(\eta)} + e^{-A_k(\eta)} \int_0^\eta\! \dif q\, e^{A_k(q)} b_k(q) \quad \text{for } \eta> -D_k\,, \\ \hat f_k(\eta) & = \hat f_k(-2D_k) e^{A_k(-2D_k)-A_k(\eta)} + e^{-A_k(\eta)} \int_{-2D_k}^\eta\! \dif q\, e^{A_k(q)} b_k(q) \quad
			\text{for } \eta < -D_k\,\,.
		\end{split}
	\]
	Using the expression \eqref{ukloc} for $u_k$, we then get
	\begin{align}
		\hat f_k(\eta) & = \hat f_k(0) e^{-A_k(\eta)} \left( 1 - \frac{\lambda G'(\xi) \varphi_k }{D_k}  \int_0^\eta\! \dif q\, e^{A_k(q)} \frac{q}{\left(D_k+q\right)}  e^{-iq \xi}e^{-q^2\sigma/2} \right) \nonumber                                                      \\
		               & = \hat f_k(0) \exp\left(-i \xi \eta - \frac{\eta^2\sigma}{2} + {\sigma D_k \eta} \right) \left(\frac{D_k}{\eta+D_k}\right)^{\lambda - i\xi D_k + \sigma D_k^2} \nonumber                                                                           \\
		               & \quad \times \left(1 - \frac{\lambda G'(\xi) \varphi_k }{D_k}  \int_0^\eta\! \dif q\, e^{-\sigma D_k q} \frac{q}{q+D_k} \left(\frac{q+D_k}{D_k}\right)^{\lambda - i\xi D_k + \sigma D_k^2 } \right) \quad \text{for } \eta > -D_k\,,\label{kposg1}
	\end{align}
	and, noting that $A_k(-2D_k) = -2i\xi D_k + 4 \sigma D_k^2$,
	\begin{align}
		\hat f_k(\eta) & = e^{-A_k(\eta)} \left(\hat f_k(-2D_k) e^{-2i\xi D_k + 4 \sigma D_k^2}  - \hat f_k(0) \frac{\lambda G'(\xi) \varphi_k }{D_k}  \int_{-2D_k}^\eta\! \dif q\, e^{A_k(q)} \frac{q}{\left(D_k+q\right)}  e^{-iq \xi}e^{-q^2\sigma/2} \right) \nonumber   \\
		               & = \exp\left(-i \xi \eta - \frac{\eta^2\sigma}{2} + {\sigma D_k \eta} \right) \left|\frac{D_k}{\eta+D_k}\right|^{\lambda - i\xi D_k + \sigma D_k^2} \bigg(\hat f_k(-2D_k) e^{-2i\xi D_k + 4 \sigma D_k^2} \nonumber                                  \\
		               & \quad - \hat f_k(0) \frac{\lambda G'(\xi) \varphi_k }{D_k}  \int_{-2D_k}^\eta\! \dif q\, e^{-\sigma D_k q} \frac{q}{q+D_k} \left|\frac{q+D_k}{D_k}\right|^{\lambda - i\xi D_k + \sigma D_k^2} \bigg) \quad \text{for } \eta < -D_k\,.\label{kposg2}
	\end{align}

	Let us compute the limits $\hat f_k(-D_k^\pm)$ of $\hat f_k(\eta)$ as $\eta \to - D_k^\pm$ in the case $\Re(\lambda) < -\sigma D_k^2$:

	\[
		\begin{split}
			\hat f_k(-D_k^+) & = - \hat f_k(0) \exp\left(i \xi D_k - \frac{3D_k^2\sigma}2\right) \frac{\lambda G'(\xi) \varphi_k }{D_k}  \\ & \quad \times \lim_{\eta\to -D_k^+}\left(\frac{D_k}{\eta+D_k}\right)^{\lambda - i\xi D_k + \sigma D_k^2} \int_0^\eta\! \dif q\, \frac{e^{-\sigma D_k q}q}{q+D_k} \left(\frac{q+D_k}{D_k}\right)^{\lambda - i\xi D_k + \sigma D_k^2}
		\end{split}
	\]
	\[
		\begin{split}
			\hat f_k(-D_k^-) & = - \hat f_k(0) \exp\left(i \xi D_k - \frac{3D_k^2\sigma}2\right) \frac{\lambda G'(\xi) \varphi_k }{D_k}  \\ & \quad \times \lim_{\eta\to -D_k^-}\left|\frac{D_k}{\eta+D_k}\right|^{\lambda - i\xi D_k + \sigma D_k^2} \int_{-2D_k}^\eta\! \dif q\, \frac{e^{-\sigma D_k q}q}{q+D_k} \left|\frac{q+D_k}{D_k}\right|^{\lambda - i\xi D_k + \sigma D_k^2}
		\end{split}
	\]
	Since the integrals are diverging for $\eta\to -D_k^\pm$ and the functions under integration are the same,  $\hat f_k(-D_k^+) = \hat f_k(-D_k^-)$ (regardless of the values of the constants $\hat f(0)$ and $\hat f(-2D_k)$). Indeed, if $g$ is any smooth function, $b \in \mathbb R$ and $\epsilon \in \mathbb C$ is such that $\Re(\epsilon)>0$, we have
	\[
		\begin{split}
			\lim_{\eta\to b} (b-\eta)^\eps \int_a^\eta\! \dif q\, \frac{g(q)}{(b-q)^{1+\eps}} & = \lim_{\eta\to b} (b-\eta)^\eps \int_a^\eta\! \dif  q\, \frac{g(q)-g(b)}{(b-q)^{1+\eps}} + g(b) \lim_{\eta\to b} (b-\eta)^\eps \int_a^\eta\!\frac{\dif q}{(b-q)^{1+\eps}} \\ & = 0 - \frac{g(b)}{\eps}\lim_{\eta\to b} (b-\eta)^\eps \left[\frac{1}{(b-\eta)^\eps} - \frac{1}{(b-a)^\eps} \right] = - \frac{g(b)}{\eps}.
		\end{split}
	\]
	Therefore
	\[
		\hat f_k(-D_k^+) = \hat f_k(-D_k^-) = - \hat f_k(0) \exp\left(i \xi D_k - \frac{D_k^2\sigma }2 \right) \frac{\lambda G'(\xi) \varphi_k }{(\lambda - i\xi D_k + \sigma D_k^2)}.
	\]

	Consider now the case $\Re(\lambda) > -\sigma D_k^2$. In this case $\Re( \lambda - i\xi D_k + {\sigma D_k^2} )>0
	$
	so that the integrals are now converging for $\eta\to -D_k^\pm$. With the change of variable $q=-D_k z$ in the first one and $q=D_k z-2D_k$ in the second one, they read
	\[
		\begin{split}
			\int_0^{-D_k}\! \dif q\, e^{-\sigma D_k q} \frac{q}{q+D_k} \left(\frac{q+D_k}{D_k}\right)^{\lambda - i\xi D_k + \sigma D_k^2} & =D_k \int_0^1\! \dif z\, e^{\sigma D_k^2 z} z (1-z)^{\eps-1+i\gamma}, \\ \int_{-2D_k}^{-D_k}\! \dif q\, e^{-\sigma D_k q} \frac{q}{q+D_k} \left|\frac{q+D_k}{D_k}\right|^{\lambda - i\xi D_k + \sigma D_k^2} & = e^{2\sigma D_k^2} D_k \int_0^1\! \dif z\, e^{-\sigma D_k^2 z} (2-z) (1-z)^{\eps-1+i\gamma}.
		\end{split}
	\]
	Therefore, a necessary and sufficient condition to have non singular limits of $\hat g(\eta)$ as $\eta\to -D_k^\pm$ is that \eqref{condlam1GLOB} and \eqref{condlam2GLOB} hold.
	{When  $k<0$ the analogous of \eqref{kposg1} and \eqref{kposg2} are
	\[
		\begin{split}
			\hat f_k(\eta) & = \hat f_k(0) e^{-A_k(\eta)} \left( 1 - \frac{\lambda G'(\xi) \varphi_k }{D_k}  \int_0^\eta\! \dif q\, e^{A_k(q)} \frac{q}{\left(D_k+q\right)}  e^{-iq \xi}e^{-q^2\sigma/2} \right) \\  & = \hat f_k(0) \exp\left(-i \xi \eta - \frac{\eta^2\sigma}2 + {\sigma D_k \eta} \right) \left(\frac{D_k}{\eta+D_k}\right)^{\lambda - i\xi D_k + \sigma D_k^2} \\ & \quad \times \left(1 - \frac{\lambda G'(\xi) \varphi_k }{D_k}  \int_0^\eta\! \dif q\, e^{-\sigma D_k q} \frac{q}{q+D_k} \left(\frac{q+D_k}{D_k}\right)^{\lambda - i\xi D_k + \sigma D_k^2} \right) \quad \text{for } \eta < -D_k\,,
		\end{split}
	\]
	and
	\[
		\begin{split}
			\hat f_k(\eta) & = e^{-A_k(\eta)} \left(\hat f_k(-2D_k) e^{-2i\xi D_k + 4 \sigma D_k^2}  - \hat f_k(0) \frac{\lambda G'(\xi) \varphi_k }{D_k}  \int_{-2D_k}^\eta\! \dif q\, e^{A_k(q)} \frac{q}{\left(D_k+q\right)}  e^{-iq \xi}e^{-q^2\sigma/2} \right) \\  & = \exp\left(-i \xi \eta - \frac{\eta^2\sigma}2 + {\sigma D_k \eta} \right) \left|\frac{D_k}{\eta+D_k}\right|^{\lambda - i\xi D_k + \sigma D_k^2} \bigg(\hat f_k(-2D_k) e^{-2i\xi D_k + 4 \sigma D_k^2} \\ & \quad - \hat f_k(0) \frac{\lambda G'(\xi) \varphi_k }{D_k}  \int_{-2D_k}^\eta\! \dif q\, e^{-\sigma D_k q} \frac{q}{q+D_k} \left|\frac{q+D_k}{D_k}\right|^{\lambda - i\xi D_k + \sigma D_k^2} \bigg) \quad \text{for } \eta > -D_k\,,
		\end{split}
	\]
	}
	respectively. With these expressions one can follow the same reasoning we have done above for the case $k<0$, from which we conclude that the necessary and sufficient conditions \eqref{condlam1GLOB} and \eqref{condlam2GLOB} remain unaltered also in the case $k<0$.
\end{proof}

{\bf The fixed point problem \eqref{condlam1GLOB}}. As explained after Proposition \ref{prop:spectrum}, in order to understand whether the operator $\cL$ has eigenvalues with positive real part we can focus on equation \eqref{condlam1GLOB}. Such an equation can be seen as a fixed point problem for $\lambda$. For $\sigma >0$ solving this equation explicitly seems difficult;  trying to solve it numerically has  also proved more challenging than expected as all the (elementary and less elementary) numerical methods we have used to solve it have proved to be quite unstable. Nonetheless one can prove the following.
\begin{proposition}\label{prop:positive-eigenvalues-GLOB}
	For $\sigma$ small enough, equation \eqref{condlam1GLOB} admits (at least one) solution $\lambda$ such that the real part of $\lambda$ is positive.  Hence the operator $\cL$ has eigenvalues with positive real part.
\end{proposition}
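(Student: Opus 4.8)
The plan is to view \eqref{condlam1GLOB} as a $\sigma$-dependent fixed point problem and perturb off the explicitly solvable case $\sigma=0$. Fix a mode $k\ne 0$ with $\varphi_k\ne 0$ (for the compactly supported IF \eqref{eq:phiIndicator}, $\varphi_k^\gamma=\operatorname{sinc}(2\pi\gamma k)\ne 0$ for all but countably many $\gamma$), recall that $\varphi_k\in\R$ and $G'(\xi)\ne 0$ for the herding functions considered, and set
\[
  F(\lambda,\sigma):=\lambda\,G'(\xi)\varphi_k\int_0^1\!\dif z\, e^{\sigma D_k^2 z}\,z\,(1-z)^{\lambda-i\xi D_k+\sigma D_k^2-1}-1 .
\]
For each fixed small $\sigma\ge 0$ this is holomorphic in $\lambda$ on $\{\Re(\lambda)>-\sigma D_k^2\}$ (the integrand is integrable near $z=1$ precisely there) and jointly continuous in $(\lambda,\sigma)$; solutions of \eqref{condlam1GLOB} in that half-plane are exactly the zeros of $F(\cdot,\sigma)$.

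First I would treat $\sigma=0$. Using the Beta integral $\int_0^1 z(1-z)^{\mu-1}\dif z=\frac{1}{\mu(\mu+1)}$ (valid for $\Re(\mu)>0$) with $\mu:=\lambda-i\xi D_k$, the equation $F(\lambda,0)=0$ is equivalent, for $\Re(\lambda)>0$, to $\lambda G'(\xi)\varphi_k=(\lambda-i\xi D_k)(\lambda-i\xi D_k+1)$, i.e. to
\[
  \mu^2+\bigl(1-G'(\xi)\varphi_k\bigr)\mu-i\xi D_k\,G'(\xi)\varphi_k=0 .
\]
Writing $c:=G'(\xi)\varphi_k\in\R$, the two roots $\mu_1,\mu_2$ obey $\mu_1+\mu_2=c-1\in\R$ and $\mu_1\mu_2=-i\xi D_k c$, which is purely imaginary and nonzero (here $k\ne 0$, $\xi=\pm1$, $\varphi_k\ne 0$, $G'(\xi)\ne 0$). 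If $\mu_j=a_j+ib_j$, reality of the sum gives $b_2=-b_1$, and then $\Re(\mu_1\mu_2)=a_1a_2+b_1^2=0$; since the product is nonzero, $b_1\ne 0$, hence $a_1a_2=-b_1^2<0$. So exactly one root $\mu_0$ has $\Re(\mu_0)>0$, and $\lambda_0:=\mu_0+i\xi D_k$ then has $\Re(\lambda_0)=\Re(\mu_0)>0$ and satisfies $F(\lambda_0,0)=0$ (note $\mu_0\notin\{0,-1\}$, so there is no cancellation issue). Moreover the quadratic has no double root — a double root would be real, whereas its square equals $-i\xi D_k c\ne 0$, which is not real — so $\lambda_0$ is a simple zero of $F(\cdot,0)$, and it is not one of the poles $i\xi D_k,\, i\xi D_k-1$ of $F(\cdot,0)$, which have nonpositive real part.

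Finally I would conclude by a holomorphic perturbation argument. Choose $r>0$ small enough that $\overline{D}(\lambda_0,r)\subset\{\Re(\lambda)>0\}$ and contains no zero of $F(\cdot,0)$ other than $\lambda_0$, and put $m:=\min_{|\lambda-\lambda_0|=r}|F(\lambda,0)|>0$. By joint continuity of $F$ on the compact set $\{|\lambda-\lambda_0|=r\}\times[0,\sigma_0]$, there is $\sigma_1\in(0,\sigma_0]$ with $|F(\lambda,\sigma)-F(\lambda,0)|<m$ on the circle whenever $\sigma\le\sigma_1$; Rouché's theorem then yields a (unique) zero $\lambda(\sigma)$ of $F(\cdot,\sigma)$ in $D(\lambda_0,r)$, necessarily with $\Re(\lambda(\sigma))>0>-\sigma D_k^2$. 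Thus \eqref{condlam1GLOB} has a solution of positive real part for all small $\sigma$; taking $\hat f_k(0)=1$ and fixing $\hat f_k(-2D_k)$ by \eqref{condlam2GLOB}, Proposition \ref{prop:spectrum}(iii) produces $\hat f_k\in S$, so $f_k(v)e_k(x)$ is an eigenvector of $\cL$ with eigenvalue $\lambda(\sigma)$, proving the claim. The genuinely delicate step is the $\sigma=0$ analysis, and in particular the sign computation forcing $\Re(\mu_0)>0$; the convergence of the Beta integral, the holomorphy of $F$, and the Rouché step are routine, though one should keep track of the fact that $\Re(\lambda_0)>0$ keeps $\Re(\lambda(\sigma))+\sigma D_k^2>0$ throughout the perturbation so that the integrand remains integrable near $z=1$.
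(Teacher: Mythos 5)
Your argument is correct, and it follows the same overall strategy as the paper: set $\sigma=0$, evaluate the integral in \eqref{condlam1GLOB} as a Beta function so that the fixed point problem becomes the quadratic $\mu^2+(1-G'(\xi)\varphi_k)\mu-i\xi D_k G'(\xi)\varphi_k=0$ with $\mu=\lambda-i\xi D_k$, exhibit a root with positive real part, and then perturb to small $\sigma>0$. The differences are in the execution of both steps, and they are genuine improvements in economy. For the $\sigma=0$ step, the paper solves explicitly for $\Re(\lambda)$ and $\Im(\lambda)$ and inspects the resulting radical expressions \eqref{negeigglob}--\eqref{poseigglob} to read off the signs, whereas your Vieta argument (real sum, nonzero purely imaginary product, hence $a_1a_2=-b_1^2<0$) gets the sign dichotomy, and the simplicity of the root, without any explicit formulas. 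For the perturbation step, the paper invokes the implicit function theorem and must then verify $\partial_\lambda F\neq 0$ at the positive root, which is the most delicate part of its proof (it requires an explicit computation of $\partial_\lambda F\vert_{\sigma=0}$ and ultimately only secures non-degeneracy for at least one mode $k$); your Rouch\'e argument replaces this by joint continuity of $F$ on a compact circle plus the fact that $\lambda_0$ is an isolated zero, so no derivative condition is needed and the conclusion holds for every fixed $k\neq 0$ with $G'(\xi)\varphi_k\neq 0$. You also correctly track the integrability constraint $\Re(\lambda)+\sigma D_k^2>0$, the exclusion of $\mu\in\{0,-1\}$, and the role of \eqref{condlam2GLOB} as merely fixing the ratio $\hat f_k(-2D_k)/\hat f_k(0)$, exactly as the paper does when passing from the fixed point problem back to an eigenvector in $S$ via Proposition \ref{prop:spectrum}(iii). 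In short: same skeleton, but your two technical devices (Vieta plus Rouch\'e) avoid the paper's explicit root formulas and its non-degeneracy check, at no loss of generality.
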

\begin{proof}[Proof of Proposition \ref{prop:positive-eigenvalues-GLOB}]
	For $\sigma =0$ equation   \eqref{condlam1GLOB} becomes
	\be\label{sigma=0fixedpointglob}
	{\lambda G'(\xi) \varphi_k } \int_0^1\! \dif z\, z (1-z)^{\lambda - i\xi D_k  -1} = 1 \,. \ee
	The above integral can now be calculated so,
	setting  $\dk:= G'(\xi)\varphi_k$ and taking   $\xi=1$ to fix ideas (analogous calculations can be done for $\xi=-1$), we have
	$$
		\lambda \dk \int_0^1 \dif z \, z (1-z)^{\lambda-1-iD_k} =  \frac{\lambda \dk}{( - iD_k + \lambda)(1 -iD_k + \lambda)} = 1 \,, \quad \Re(\lambda)>0\,.
	$$
	The above then boils down to the following simple system for the real and imaginary part of $\lambda$:
	\begin{align*}
		  & D_k^2-\rel +\dk \, \rel - (\rel)^2-2 D_k \iml+(\iml)^2=0 \\
		  & D_k+ 2 D_k \rel-\iml +\dk \,\iml-2\rel\iml=0 \,.
	\end{align*}
	There are  two families of real solutions of the above system, namely
	\begin{align}
		\Re(\lambda_k)         & = \frac{1}{2} \left((\dk -1)- \frac{1}{\sqrt 2}\sqrt{\rk+\sqrt{\rk^2+16 \dk^2 D_k^2}}\right)\label{negeigglob} \\
		\mathrm{Im}(\lambda_k) & =  \frac{1}{4\dk D_k}
		(4\dk D_k^2 + r_k - 2 \dk r_k+ \dk^2r_k-r_k^3) \nonumber
	\end{align}
	and
	\begin{align}
		\Re(\lambda_k)         & =\frac{1}{2} \left((\dk-1) + \frac{1}{\sqrt 2}\sqrt{\rk+\sqrt{\rk^2+16 \dk^2 D_k^2}}\right) \label{poseigglob} \\
		\mathrm{Im}(\lambda_k) & =  \frac{1}{4\dk D_k}
		(4\dk D_k^2 - r_k + 2 \dk r_k- \dk^2 r_k+r_k^3),  \nonumber
	\end{align}
	where $r_k:=(d_k-1)^2$ (and it is understood that the above solutions are valid only for those $k$'s such that $d_k$ is non zero).
	It is now easy to see that the solutions \eqref{negeigglob} are always negative, as
	$$
		\dk -1 = {G'(1)\varphi_k} -1 <0 \, ,
	$$
	(because $0<G'(1)<1$ and, by definition, $\lv \varphi_k\rv <1$) while the family of solutions \eqref{poseigglob} are always positive. Hence, in conclusion, when $\sigma =0$ there exist solutions $\lambda$ of \eqref{condlam1GLOB} such that the real part of lambda is positive. By the implicit function theorem we then have that for sigma small enough there will still be solutions of the same equation with positive real part. Such a theorem can be applied to the (analytic) function $F(\lambda, \sigma) = {\lambda d_k } \int_0^1\! \dif z\, e^{+\sigma D_k^2 z} z (1-z)^{\lambda - i\xi D_k + {\sigma D_k^2} -1} - 1 $  once we show that the complex derivative $\pa_{\lambda}F$ calculated at $\lambda$ as in \eqref{poseigglob} and $\sigma=0$ is different from zero. More precisely,  we need this derivative to be non-zero for at least one of the $\lambda_k$'s in \eqref{poseigglob}, not necessarily for all of them. To this end,
	$$
		\pa_{\lambda}F = {d_k } \int_0^1\! \dif z\, e^{+\sigma D_k^2 z} z (1-z)^{\lambda - i\xi D_k + {\sigma D_k^2} -1} +{ \lambda d_k } \int_0^1\! \dif z\, \log (1-z) e^{+\sigma D_k^2 z} z (1-z)^{\lambda - i\xi D_k + {\sigma D_k^2} -1} \,.
	$$
	When $\sigma=0$, both of the above integrals can be calculated explicitly, so we have (again taking $\xi=1$ just to fix ideas)
	\begin{align*}
		\pa_{\lambda}F\vert_{\sigma=0} & = \frac{ \dk}{( - iD_k + \lambda)(1 -iD_k + \lambda)} + {d_k \lambda}\left(-\frac{1}{(\lambda - iD_k)^2} + \frac{1}{(\lambda+1-iD_k)^2}\right) \\
		                               & = d_k \,  \frac{-\lambda^2+ (iD_k-1)iD_k}{(\lambda-iD_k)^2 (\lambda+1 - iD_k)^2} \,.
	\end{align*}
	Because $d_k$ can't be zero for every $k$,we just need to focus on the numerator of the above, the imaginary part of which is clearly non-zero, for any $k \in \mathbb{Z}\setminus \{0\}$. As for the real part of the numerator, when calculated at $\lambda_k$ as in \eqref{poseigglob}, it is given by
	$$
		-\lambda_k^2- D_k^2 = - D_k^2 +
		\frac{((d_k-2) (d_k-1)^4 + 4D_k^2)^2}{16 D_k^2} - \frac 14
		\left[  d_k-1 + \frac{1}{\sqrt 2}
			\sqrt{(d_k-1)^2 + \sqrt{(d-1)^4+ 16d_k^2 D_k^2}}
			\right]^2
	$$
	It is now easy to see that there exists always at least one $k$ (possibly dependent on $L$ and $\varphi$) such that the above is non-vanishing.
\end{proof}

To gain some more insight one can also carry out the following heuristic calculation: for   $\sigma$  positive but  small,  we  approximate the  integral appearing in \eqref{condlam1GLOB}  by approximating $e^{\sigma k^2} \simeq 1$ (clearly this approximation is only valid for $\sigma$ and $k$ small enough) and consider the simpler problem
\be\label{approxprobglobsigma}
{\lambda G'(\xi) \varphi_k } \int_0^1\! \dif z\,  z (1-z)^{\lambda - i\xi D_k + {\sigma D_k^2} -1} = 1 \,.
\ee
The integral on the LHS of the above is explicitly computable and, as in the proof of Proposition \ref{prop:positive-eigenvalues-GLOB},  one can reduce the above problem to a simple system of two equations for the real and imaginary part of $\lambda$.  Setting   $\ck:=D_k^2\sigma ,  \,  \dk:= G'(\xi)\varphi_k$ and taking again $\xi=1$ to fix ideas,  \eqref{approxprobglobsigma} becomes
$$
	\lambda \dk \int_0^1 \dif z \, z (1-z)^{\lambda-1-iD_k+\ck} =  \frac{\lambda \dk}{(\ck - iD_k + \lambda)(1+\ck -iD_k + \lambda)} = 1 \,, \quad \Re(\ck+\lambda)>0\,.
$$
The above equation then boils down to the following simple system for the real and imaginary part of $\lambda$:
\begin{align*}
	  & D_k+ 2\ck D_k + 2D_k \rel -\iml - 2\ck \iml+ \dk \iml -2\rel \iml=0           \\
	  & -\ck -\ck^2+D_k^2-\rel - 2\ck \rel +\dk \rel -\rel^2 - 2D_k\iml+\iml^2 =0 \,.
\end{align*}
\newcommand{\ak}{a_k}
Setting  this time $\rk:= \dk^2-4\ck\dk - 2\dk +1$ and $\ak: = \dk-2\ck-1$, the only real solutions of the above system are
\begin{align}
	\Re{(\lambda_k)} & = \frac{1}{2} \left(\ak\pm \frac{1}{\sqrt 2}\sqrt{\rk+\sqrt{\rk^2+16 \dk^2 D_k^2}}\right)\label{negeigglob1}
\end{align}
(we don't report the corresponding value of the imaginary part as the formula is lengthy and all we are interested in is the real part of the eigenvalue.)
Now note that the solution \eqref{negeigglob1} with the minus is always negative, as $\ak$ is always negative, irrespective of $k$; this is due to the fact that $\ck>0$ by definition.  As for the solution with the plus sign, this can be positive or negative depending on the value of $\sigma$. Indeed, start by noting that $\ak= -\sqrt{\rk+ 4\ck (1+\ck)}$. Hence, checking whether the solution \eqref{negeigglob1} with a plus can be positive boils down to checking whether
$$
	\frac{1}{\sqrt 2}\sqrt{\rk+\sqrt{\rk^2+16 \dk^2 D_k^2}}>\sqrt{\rk+4\ck(1+\ck)}.
$$
This is in turn equivalent to
$$
	{\sqrt{\rk^2+16 \dk^2 D_k^2}}>{\rk+8\ck(1+\ck)}\,.
$$
Because the constant $\ck$ is directly proportional to $\sigma$, for $\sigma$ small enough the above is satisfied, hence the presence of positive eigenvalues.

\subsection{Spectrum of the locally scaled model}\label{subsec:spectrum LS}
We now linearise the LS PDE \eqref{nl} around the equilibria \eqref{eqn:equil}, and obtain
$$
	\pa_t f = - v \pa_x f -\pa_v \left[(\xi - v) f\right]+ \frac{1}{L}G'(\xi) \left[\iint (\xi-w) \varphi(x-y) f(y,w) \dif w\, \dif y\right]\Fx'(v)+ \sigma\pa_{vv}f \,.
$$
As before, we consider the eigenvalue problem $\ccL f=\lambda f$, where $\ccL$ is the operator on the RHS of the above, and expand $f$ as in \eqref{expansionf}.
Then the equation satisfied by $f_k(v)$ is
$$
	\frac{2 \pi i k}{L} v f_k(v) - \pa_v \left[(\xi-v)f_k(v)\right]
	+  G'(\xi)\varphi_k \left( \xi\int \dif v f_k(v)  - \int \dif v f_k(v) v\right)\Fx'(v) + \sigma \pa_{vv} f_k(v) = \lambda f_k(v) \,,
$$
and the Fourier transform of $f_k$ satisfies
\be\label{link2}
\left(- \frac{2\pi k }{L} - \eta \right)\pa_{\eta} \hat f_k
- \left(i \xi \eta + {\sigma \eta^2}+\lambda\right)\hat f_k =  u_k \eta e^{-i\eta \xi}e^{-\eta^2\sigma/2}\, ,
\ee
where this time
\be\label{uklocal}
u_k =  i G'(\xi) \varphi_k \left(\xi \hat f_k(0) +
\frac 1i \frac{\dif}{\dif \eta} \hat f_k(\eta) \Big\vert_{\eta=0}\right)
\ee

One can carry out calculations completely analogous to those done in the proof of  Proposition \ref{prop:spectrum}, leading this time to the following result

\begin{proposition}\label{prop:spectrumlocal} With the notation introduced so far, the following holds:

	\noindent
	i) When $k=0$, if $\lambda =0$ or $\lambda = G'(\xi)-1$  then equation \eqref{link2} admits a solution $\hat f_0 \in S$. \\
	ii) When $k\neq 0$, if  $\lambda$  is such that $\Re(\lambda)< -\sigma D_k^2$, then  there exists a solution $\hat f_k$ of equation \eqref{link2} in S. \\
	iii) When $k\neq 0$, if  $\lambda$  is such that $\Re(\lambda)> -\sigma D_k^2$ then a solution $\hat f_k$ of \eqref{link2} exists in $S$ if and only if the following two conditions are satisfied
	\begin{align}
		  & -D_k \left[{G'(\xi)\varphi_k}\left(i\xi-\frac{\lambda}{D_k}\right)\right]
		\int_0^1 \dif z e^{\sigma D_k^2 z}
		z(1-z)^{\lambda - i\xi D_k + {\sigma D_k^2} -1}=1 \label{condlam1LOC}\\
		  & D_k\left[{G'(\xi)\varphi_k}\left(i\xi-\frac{\lambda}{D_k}\right)\right]
		\int_0^1 \dif z e^{-\sigma D_k^2 z}
		(2-z) (1-z)^{\lambda - i\xi D_k + {\sigma D_k^2} -1} =- \frac{\hat f(-2D_k)}{\hat{f}_k(0)}
		e^{-2i\xi D_k+2\sigma D_k^2} \,. \label{condlam2LOC}
	\end{align}
\end{proposition}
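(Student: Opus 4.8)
The plan is to run, essentially line by line, the argument used for Proposition~\ref{prop:spectrum}, since at the level of Fourier modes the globally and locally scaled linearised operators differ only in the value of the constant $u_k$. First I would decompose $f$ as in \eqref{expansionf}, impose $\ccL f = \lambda f$, and Fourier-transform in the velocity variable to arrive at \eqref{link2} with $u_k$ given by \eqref{uklocal}. The point to keep in mind is that $u_k = G'(\xi)\varphi_k(i\xi\hat f_k(0) + \hat f_k'(0))$, so that, compared with the globally scaled case where $u_k = G'(\xi)\varphi_k \hat f_k'(0)$, the combination $\hat f_k'(0)$ is replaced by $i\xi\hat f_k(0) + \hat f_k'(0)$ (the same combination that appeared in \eqref{eqn:omegak}); every other ingredient of \eqref{link2}, in particular the homogeneous part of the ODE and hence the integrating factor $A_k$ and the exponents $\lambda - i\xi D_k + \sigma D_k^2$, is unchanged.

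For $k = 0$ I would evaluate \eqref{link2} at $\eta = 0$. When $\lambda \neq 0$, dividing by $\eta$ first and requiring the resulting equation to make sense at the origin forces $\hat f_0(0) = 0$, exactly as in the globally scaled proof; with $\hat f_0(0) = 0$ the constant $u_0$ collapses to $G'(\xi)\hat f_0'(0)$, the $k = 0$ equation becomes literally the one solved in the proof of Proposition~\ref{prop:spectrum}, and one recovers the eigenvalue $\lambda = G'(\xi) - 1$. The case $\lambda = 0$ is treated separately: the equation is then a first-order linear ODE which I would solve explicitly, evaluation at $\eta = 0$ now forces $\hat f_0'(0) = -i\xi\hat f_0(0)$ (equivalently $u_0 = 0$, using $G'(\xi) \neq -1$), and the resulting $\hat f_0(\eta) = \hat f_0(0)e^{-i\xi\eta}e^{-\sigma\eta^2/2}$ is visibly in $S$.

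For $k \neq 0$ I would again evaluate \eqref{link2} at $\eta = 0$, obtaining $\hat f_k'(0) = -(\lambda/D_k)\hat f_k(0)$ and hence $u_k = G'(\xi)\varphi_k(i\xi - \lambda/D_k)\hat f_k(0)$. This is precisely the globally scaled expression \eqref{ukloc} with the scalar $\lambda$ multiplying $\hat f_k(0)$ replaced by $\lambda - i\xi D_k$, a replacement which does not touch the exponents coming from $A_k$. One then recasts \eqref{link2} as \eqref{eqn:E}, writes the solution in integral form on $(-D_k, \infty)$ and $(-\infty, -D_k)$ as in \eqref{kposg1}--\eqref{kposg2} (with the prefactor of the integral modified by the above replacement), and computes the one-sided limits as $\eta \to -D_k^{\pm}$. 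When $\Re(\lambda) < -\sigma D_k^2$ the exponent $\lambda - i\xi D_k + \sigma D_k^2$ has negative real part, the two defining integrals diverge with the same singular profile, and the same algebraic identity used in the proof of Proposition~\ref{prop:spectrum} shows that the two one-sided limits coincide irrespective of the constants $\hat f_k(0)$ and $\hat f_k(-2D_k)$, so a solution in $S$ always exists. When $\Re(\lambda) > -\sigma D_k^2$ the integrals converge; performing the substitutions $q = -D_k z$ and $q = D_k z - 2D_k$ and demanding that the limits at $-D_k^{\pm}$ be finite produces exactly the two conditions \eqref{condlam1LOC} and \eqref{condlam2LOC}, the local analogues of \eqref{condlam1GLOB}--\eqref{condlam2GLOB}. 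The case $k < 0$ is handled by writing the analogues of \eqref{kposg1}--\eqref{kposg2} with $|\cdot|$ in place of the complex powers and repeating the argument, exactly as in Proposition~\ref{prop:spectrum}.

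\textbf{Main obstacle.} Since this is a transcription of a computation already carried out, there is no conceptual difficulty, and the main risk is purely bookkeeping: one must track meticulously where $u_k$ enters the solution formulas (and is modified by the replacement $-\lambda/D_k \mapsto i\xi - \lambda/D_k$) versus where the exponents inherited from the integrating factor $e^{-A_k}$ enter (and are \emph{not} modified), since mixing the two would make \eqref{condlam1LOC}--\eqref{condlam2LOC} come out wrong. The second point requiring genuine care, rather than just quoting, is the $k = 0$, $\lambda = 0$ case: here the extra term $i\xi\hat f_0(0)$ in $u_0$ does not vanish a priori, so the $\eta = 0$ relation is different from the globally scaled one and membership in $S$ must be checked by hand. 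Finally, it is worth verifying explicitly the one-line identity $-D_k\,G'(\xi)\varphi_k(i\xi - \lambda/D_k) = G'(\xi)\varphi_k(\lambda - i\xi D_k)$, which is what makes the prefactor in \eqref{condlam1LOC} agree with the outcome of the limit computation.
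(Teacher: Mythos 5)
Your proposal is correct and follows exactly the route the paper intends: the paper gives no separate proof for this proposition, stating only that the calculations are "completely analogous" to those of Proposition \ref{prop:spectrum}, and your write-up is precisely that analogy carried out, with the sole genuine modification (the replacement of $\hat f_k'(0)$ by $i\xi\hat f_k(0)+\hat f_k'(0)$ in $u_k$, hence of the prefactor $\lambda$ by $\lambda-i\xi D_k$ while the exponents from $A_k$ are untouched) correctly identified, and the $k=0$, $\lambda=0$ case treated with the required extra care.
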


With a reasoning similar to the one done before we can focus on equation \eqref{condlam1LOC}. The integral in \eqref{condlam1LOC} is convergent only for $\rel > -\sigma D_k^2$ hence,  if we set $\sigma=0$, it is only convergent for $\rel >0$. With this in mind, setting $\sigma=0$ in \eqref{condlam1LOC} and acting analogously to what we have done for the globally scaled case, we find that the only solution is given by $\rel = G'(\xi) \varphi_k-1 <0, \iml=D_k$. As such a solution has negative real part it needs to be discarded (the integral diverges so it cannot be a solution), leading to the conclusion that for $\sigma=0$ the problem \eqref{condlam1LOC} admits no solutions with positive real part (and indeed no solutions at all).

Of course in this case this does not imply anything on what happens for $\sigma$ small.
Again, for $\sigma$ small enough, we can again heuristically approximate the problem   \eqref{condlam1LOC} with the simpler
$$
	-D_k \left[{G'(\xi)\varphi_k}\left(i\xi-\frac{\lambda}{D_k}\right)\right]
	\int_0^1 \dif z \,
	z(1-z)^{\lambda - i\xi D_k + {\sigma D_k^2} -1}=1 \,.
$$
We act as in the global case and find the following four pairs of solutions (this time we need to report the formulas for the real and imaginary part)
\begin{align*}
	  & \Re(\lambda_1)=\frac{\ak}{2}, \qquad \qquad \quad\mathrm{Im}(\lambda_1)= \frac 12 (2D_k-\sqrt{-\rk})   \\
	  & \Re(\lambda_2)=\frac{\ak}{2},  \qquad \qquad \quad \mathrm{Im}(\lambda_2)= \frac 12 (2D_k+\sqrt{-\rk}) \\
	  & \Re(\lambda_3)=\frac{\ak-\sqrt{\rk}}{2}, \qquad \mathrm{Im}(\lambda_3)= D_k                            \\
	  & \Re(\lambda_4)=\frac{\ak-\sqrt{\rk}}{2}, \qquad \mathrm{Im}(\lambda_4)= D_k \,.
\end{align*}
While $a_k<0$, $r_k$ does not have a sign. So, if $r_k$ is positive then the first two solutions must be discarded (as the real and imaginary parts need to be real numbers), leaving us with the third and fourth solutions, both of which have negative real part. Vice versa, if $r_k<0$, then the third and fourth solutions need to be discarded, leaving us with the first two, both of which, again, have negative real part.

\appendix
\section{Auxiliary results}\label{app:PSMeanZero}

  Consider the stationary problem associated with \eqref{nl}-\eqref{Mnl}, i.e.  the solutions of the equation
\begin{equation}\label{statpb}
    v\pa_x f(x,v)+ G(M_f(x)) \pa_vf(x,v) = \pa_v{(vf(x,v))}+\sigma \pa_{vv}f(x,v),   \quad (x,v) \in \Tt\times\R
\end{equation}
where $M_f(x)$ is defined in \eqref{Mnl}.
\begin{lemma}\label{lem:whatweknow}
Let $f(x,v)$ be a smooth solution of \eqref{statpb}-\eqref{Mnl} which decays fast at infinity (together with its derivatives). Then
\begin{description}
\item[i)]  $\int \dif v \, v f(x,v) =c$, where $c$ is some constant independent of $x$;
\item[ii)] if $f(x,v)$ is independent of $x$, then $f$ can only be one of the three densities $\mu_{\pm}, \mu_0$ in \eqref{invmeas}. 
\end{description}
\end{lemma}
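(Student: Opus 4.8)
The plan is to obtain (i) by integrating the stationary equation in velocity, and (ii) by noting that $x$-independent profiles solve an explicit ODE whose solutions, combined with a self-consistency condition, are exactly $\mu_\pm$ and $\mu_0$. For (i), integrate \eqref{statpb} over $v\in\R$. Since $f$ and its derivatives decay fast at infinity, the terms $\int_\R G(M_f(x))\,\pa_v f(x,v)\,\dif v$ (the factor $G(M_f(x))$ is finite and independent of $v$, $G$ being continuous and $M_f(x)$ finite), $\int_\R \pa_v\big(v f(x,v)\big)\,\dif v$, and $\sigma\int_\R \pa_{vv} f(x,v)\,\dif v$ all vanish. The only surviving term is $\int_\R v\,\pa_x f(x,v)\,\dif v = \pa_x\!\int_\R v f(x,v)\,\dif v$, so $\pa_x\!\int_\R v f(x,v)\,\dif v = 0$ on $\Tt$, i.e.\ $\int_\R v f(x,v)\,\dif v \equiv c$ for some constant $c$, which is (i).

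\textbf{Reduction to an ODE for (ii).} Suppose now $f=f(v)$ is independent of $x$. Using the normalisation $\frac1L\int_\Tt\varphi = 1$, both the numerator and denominator of \eqref{Mnl} factor, the $y$-integrals of $\varphi$ over the torus cancel, and $M_f(x)$ reduces to the constant mean velocity $\bar M := \big(\int_\R w f(w)\,\dif w\big)\big/\big(\int_\R f(w)\,\dif w\big)$. With $\pa_x f\equiv 0$, equation \eqref{statpb} becomes $\sigma f''(v) + \pa_v\big[(v-G(\bar M))f(v)\big] = 0$, i.e.\ $\pa_v\big[\sigma f'(v) + (v-G(\bar M))f(v)\big]=0$. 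Integrating once in $v$ and letting $v\to\pm\infty$ (fast decay of $f$ and $f'$ kills both the constant of integration and the boundary contribution of $v f(v)$) gives $\sigma f'(v) = (G(\bar M)-v)f(v)$. Since $f$ is a nonzero solution of this linear first-order ODE it is everywhere positive, and solving it yields $f(v) = C\,\exp\!\big(-(v-G(\bar M))^2/(2\sigma)\big)$ for some $C>0$: a Gaussian of variance $\sigma$ centred at $G(\bar M)$.

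\textbf{Closing the self-consistency relation.} The mean of this Gaussian is $G(\bar M)$; on the other hand, by its very definition it equals $\bar M$. Hence $\bar M$ is a fixed point of $G$, and by assumption \eqref{herdingG} (together with the smoothing convention under which $0$ is also a fixed point) we get $\bar M\in\{-1,0,1\}$. Finally, normalising so that $\iint_{\Tt\times\R} f = 1$ forces $C = 1/(L\sqrt{2\pi\sigma})$, so that $f(x,v) = \mathcal{U}_{\mathbb{T}}(x)\otimes \tfrac{1}{\sqrt{2\pi\sigma}}e^{-(v-\bar M)^2/(2\sigma)}$; comparing with \eqref{invmeas} this is $\mu_+$, $\mu_0$ or $\mu_-$ according as $\bar M = 1$, $0$ or $-1$, which proves (ii).

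\textbf{Main obstacle.} The computations are routine; the two points needing care are (a) the well-definedness and $v$-independence of $M_f$ in the $x$-homogeneous case, which follows immediately from $\frac1L\int_\Tt\varphi=1$, and more substantively (b) the closure step, where one must observe that ``$f$ Gaussian with mean $G(\bar M)$'' together with ``$\bar M$ is the mean of $f$'' is precisely what collapses an a priori free parameter onto the fixed-point equation $\bar M = G(\bar M)$. I would also flag that (ii) is stated for $f$ a probability density: since \eqref{statpb}--\eqref{Mnl} is invariant under $f\mapsto cf$, the normalisation $\iint f=1$ is what lets us identify $f$ with $\mu_\pm,\mu_0$ rather than with positive multiples of them.
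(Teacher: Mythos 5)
Your proof is correct and follows essentially the same route as the paper: point (i) by integrating the stationary equation in $v$, and point (ii) by reducing to the $x$-independent ODE $\sigma f'+(v-G(\bar M))f=0$, which the paper merely calls ``explicitly solvable''. You simply spell out the step the paper leaves implicit, namely the Gaussian solution and the self-consistency fixed-point relation $\bar M=G(\bar M)$ that forces $\bar M\in\{-1,0,1\}$, which is a welcome clarification rather than a deviation.
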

\begin{proof}
The claim i) is obtained by simply integrating the whole equation in the velocity variable; point ii) is obtained by simply observing that if $f$ is independent of $x$ then \eqref{statpb} boils down to 
$$
G(M_f(x))\partial_v f(x,v) = \partial_v (vf(x,v)) + \sigma\partial_{vv} f(x,v), \quad M(t) = \int \dif v \,v f(x,v)
$$  
which is explicitly solvable.
\end{proof}

\noindent 
{\bf Invariant measure of Particle System \texorpdfstring{\eqref{parsys1}-\eqref{parsys2}}{}}
Consider the evolution described by \eqref{parsys2}, with $\varphi\equiv 1$, that is the space homogeneous particle system with uniform interaction:
\begin{align}
    \mathrm{d} {v}_t^{i,N} & = - v_t^{i,N} \mathrm{d} t + G\left(\frac{ \sum_{j=1}^N  v_t^{j,N}}{N}\right) \mathrm{d} t + \sqrt{2\sigma} \mathrm{d} B_t^{i}, \quad i = 1,\dots N.
    \label{eq:spacehomPS}
\end{align}
Notice that this is equivalent to the globally scaled particle model of \cite{garnierMeanFieldModel2019} when $\varphi\equiv1$. We will show that this equation has an invariant measure with mean zero. To this end, let $G$ be smooth and define $\bar{v}_t = N^{-1} \sum_{i=1}^N  v_t^{i,N}$. Then, by summing over $i$ in \eqref{eq:spacehomPS}, $\bar{v}_t$ solves
\begin{equation}\label{eq:avgvelPS}
    \mathrm{d}\bar{v}_t = \left[-\bar{v}_t+G(\bar{v}_t)\right]\mathrm{d}t + \sqrt{2\sigma}\mathrm{d}B_t, \quad \bar{v}_t \in \mathbb{R}.
\end{equation}
Let $V(x) = \frac{x^2}{2} + \tilde{V}(x)$, where $\tilde{V}(x)$ is such that $G(x) = -\tilde{V}'(x)$ so that
\[
    -V'(\bar{v}) = -\bar{v} + G(\bar{v}).
\]
Equation \eqref{eq:avgvelPS} can be rewritten as
\begin{equation}\label{eq:avgvelPSrewrote}
 \mathrm{d}\bar{v}_t = -V'(\bar{v})\mathrm{d}t + \sqrt{2\sigma}\mathrm{d}B_t, \quad \bar{v}_t \in \mathbb{R}.
\end{equation}
Now $\tilde{V}(\bar{v}) = - \int_{-\infty}^{\bar{v}} G(u) \mathrm{d} u $ by definition, so that if $G$ is bounded then $\tilde{V}(x)$ grows at most linearly. Hence, $V(\bar{v}) \to +\infty$ as $|\bar{v}|\to \infty$ so that $\mathrm{e}^{-V(\bar{v})}$ is integrable. Thus \eqref{eq:avgvelPSrewrote} admits an invariant measure, which is 
\[
    \rho(\bar{v}) = \frac{1}{Z}\mathrm{e}^{-V(\bar{v})},
\]
where $Z$ is a normalising constant that we neglect from now on. Note that $\tilde{V}(\bar{v})$ is an even function as $G(\bar{v})$ is odd. This implies that
\[
    V(\bar{v})=\frac{\bar{v}^2}{2}+ \tilde{V}(\bar{v})
\]
is even, and so 
\[
    \int_{\mathbb{R}} \bar{v}\mathrm{e}^{-V(\bar{v})} \mathrm{d}\bar{v}  = 0.
\]
That is, as $t\to \infty$, $\mathbb{E}[\bar{v}_t] \to 0$.

\section{Computational Method for the PDE}\label{app:PDEComp}
When solving the PDE models \eqref{nl},\eqref{nlGarnier}, we have chosen a pseudospectral method, in particular the scheme described in \cite{noldPseudospectralMethodsDensity2017}. This has been implemented in MATLAB and made freely available \cite{goddard2DChebClass2017}. The only adjustments we have made are the introduction of 2D periodic-Chebyshev domains, and the associated use of a different mapping from the computational domain to the physical space. In this appendix we describe further our modifications and and justify our choices of parameters of the scheme using convergence plots. Throughout we will follow the notation of \cite{noldPseudospectralMethodsDensity2017} where applicable. For brevity, we will not define all objects mentioned and instead refer to \cite{noldPseudospectralMethodsDensity2017}. 

As mentioned in Section \ref{sec:preliminaries}, when simulating \eqref{nl} there are two main difficulties: the hypoellipticity of the diffusion term and the nonlinearity \eqref{Mnl}. Many numerical schemes can introduce artificial diffusion when applied to such problems \cite{mortonNumericalSolutionPartial2005}. If such a scheme was used here, any clusters potentially forming in space would be dispersed. Figure \ref{fig:ArtificialDiffusion} shows the dispersal seen using a pseudospectral method compared to an explicit Lax-Wendroff (finite difference) scheme. 
\begin{figure}
\begin{minipage}{\linewidth}
    \begin{minipage}{0.45\linewidth}
        \centering
        \includegraphics[width=\linewidth]{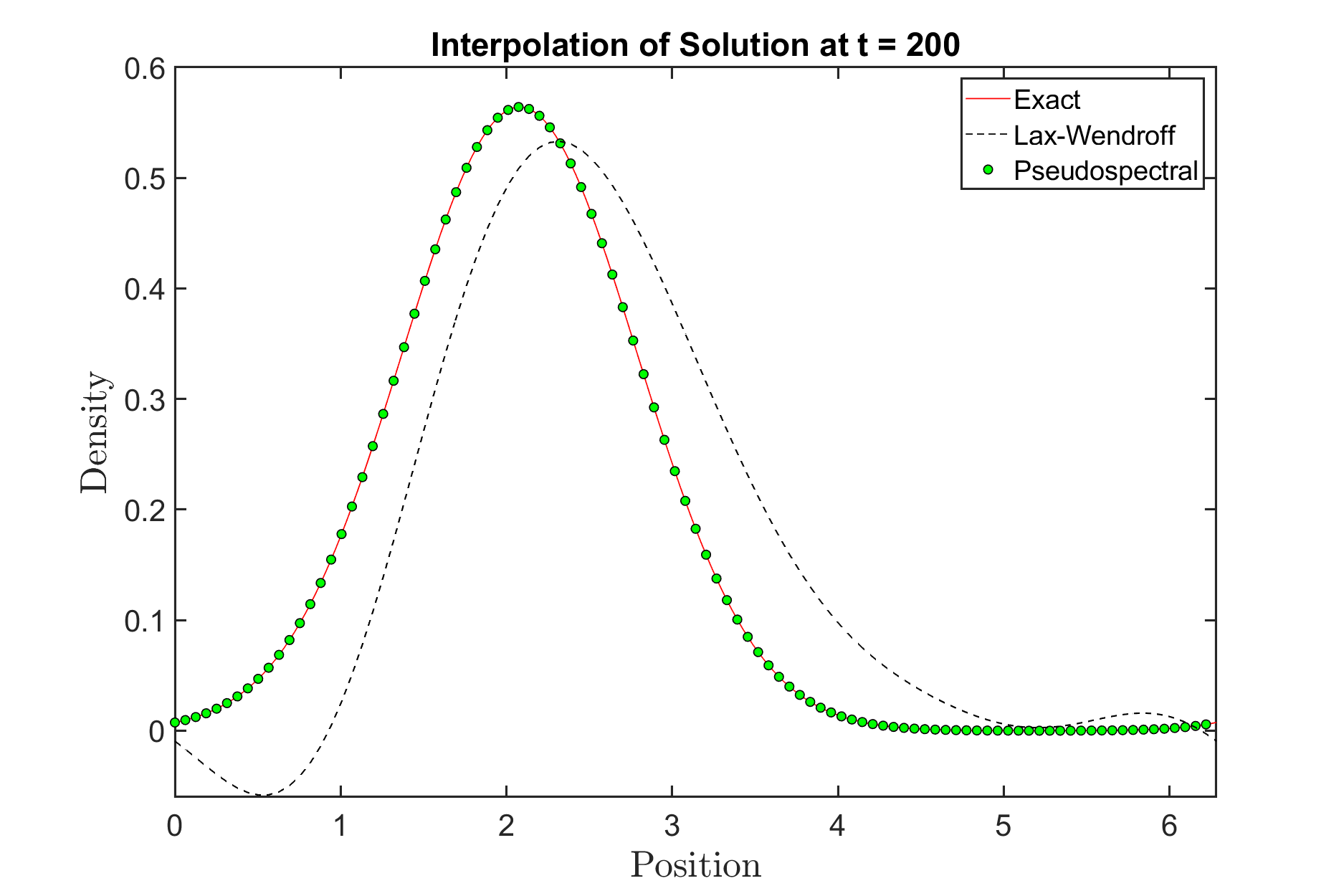}
    \end{minipage}%
    \begin{minipage}{0.45\linewidth}
        \centering
        \includegraphics[width=\linewidth]{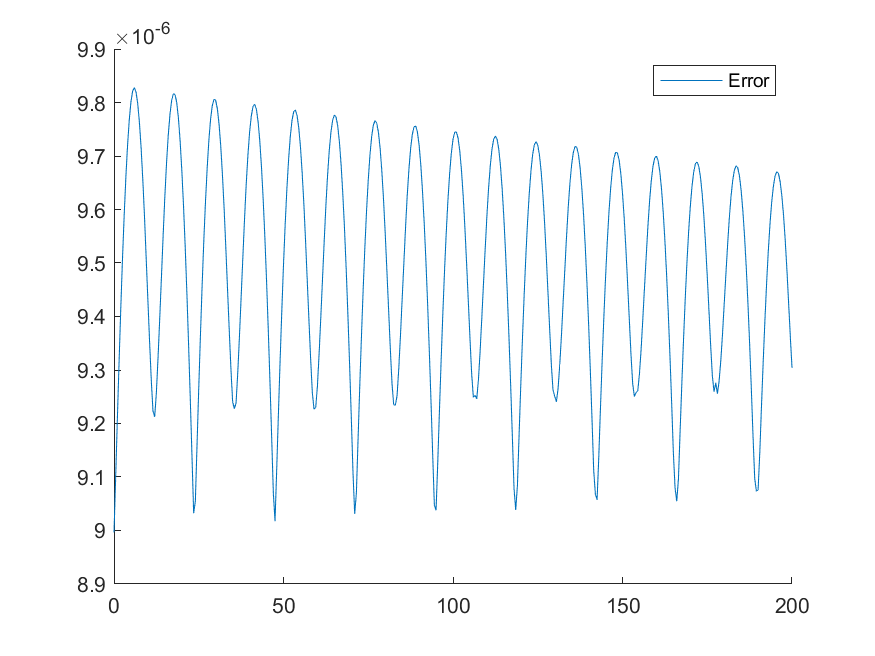}
        
    \end{minipage}\end{minipage}
    \caption{Solving the advection equation \(f_t = -f_x\) with \(f(0,x) = \exp(-(x-\pi)^2)\) on a periodic domain \([0,2\pi)\) using the pseudospectral method described here (green circles) and using a Lax-Wendroff scheme (dashed line). Both use 100 grid points in space. The left shows the exact and approximate solutions at $t=200$ while the right shows the \(\ell^1\) difference between the approximate pseudospectral and exact solution, \(\int_{[0,2\pi)} \|f^{\app}(t,x) - f(0,x+t) \| \dif x\). The Lax-Wendroff scheme shows a greater error even with a smooth initial profile. To get similar accuracy requires many more grid points, a problem which is only compounded in higher dimensions. The Lax-Wendroff approximation also becomes negative in certain regions, which is an impossibility for the probability density we study here.}
    \label{fig:ArtificialDiffusion}
\end{figure}

We will describe the scheme separately in each dimension (in this case position and velocity) as combining them into the 2D problem is straightforward -- one can simply take the Kronecker tensor product of the 1D domains, and create the associated operators as described in~\cite{noldPseudospectralMethodsDensity2017}. A key choice in this approach regards the mapping from the `computational' space (typically $[0,1]$ or $[-1,1]$) to the `physical' space of the problem; this choice has a large impact on the accuracy of the scheme. In position, the solution, \(f\) lies on the periodic domain \(\mathbb{T}\). The function \(f\) is mapped to the computational space \([0,1]\) using the obvious linear map. It is then represented by its values at the collocation points \(x_n\), which here form an equally spaced grid, and an interpolating Fourier series. This is a standard technique, see \cite{trefethenSpectralMethodsMATLAB2000}.

As the velocity domain is not periodic, we instead use Gauss-Lobatto-Chebyshev collocation points and interpolate using Chebyshev polynomials \cite{trefethenSpectralMethodsMATLAB2000}. Doing so clusters the points at the boundary thereby avoiding the Runge phenomenon. The computational space here is \([-1,1]\). The problem \ref{nl} is on an unbounded domain (\(\R\)), which can be represented using a square root map or similar -- see for example \cite{noldPseudospectralMethodsDensity2017}. However, in this case the map was found to cause inaccuracies as the solution is very small far from the centre of the domain. As a consequence, calculating advective terms or moments involving the product of \(v\) and \(f_t(x,v)\) caused oscillations to appear around the expected mean. It was thus decided to truncate the domain to \([-L_v,L_v]\) for some \(L_v\) large enough that the majority of the mass is contained within the interval. The standard approach for solving on the truncated domain uses the obvious linear map, preserving the spacing of the Gauss-Lobatto-Chebyshev points. As we expect the solution to decay quickly and the mass to be in the centre of the domain, this is not ideal in our case. To correct this, the tangent mapping of \cite{baylissMappingsAccuracyChebyshev1992} is introduced.  

\begin{align}\label{eq:fulltanmap}
&\mathcal{A}:[-1,1] \to [-L_v, L_v], & v\mapsto -L_v + L_v\left(s_0 + \frac{\atan(\alpha_1(v-\alpha_2))}{\lambda})\right)
\end{align}
where
\begin{align*}
    & s_0 = \frac{\kappa-1}{\kappa+1}, & \kappa = \frac{\atan(\alpha_1(1+\alpha_2))}{\atan(\alpha_1(1-\alpha_2))},  &&\lambda = \frac{\atan(\alpha_1(1-\alpha_2))}{1-s_0}.&\\
\end{align*}
Using this mapping shifts the points to be clustered around \(\alpha_2\), similar to the clustering seen around the origin when using the square root map in an unbounded domain. The tangent mapping provides the best of both worlds, giving high accuracy in the centre like the square root map but without the spurious oscillations. The parameter \(\alpha_2\) is set to zero so as not to bias the solution towards either \(\pm 1\). If, after solving, a more accurate solution is desired, one could rerun the scheme with \(\alpha_2\) at the observed final mean of the previous run. It could also be adaptively changed over time, although the computational cost incurred in recalculating the differentiation matrices is likely to outweigh any accuracy benefit. In practice, we found sufficient accuracy and reasonable computation time without changing \(\alpha_2\). Setting \(\alpha_2 =0\) simplifies the above map \eqref{eq:fulltanmap} considerably to 
\begin{align}\label{eq:tanmap}
    &\mathcal{A}:[-1,1] \to [-L_v, L_v], & v\mapsto -L_v + L_v\left( \frac{\atan(\alpha_1 v)}{\atan(\alpha_1)}\right)
\end{align}

Using a truncated domain introduces a boundary to the problem, and thus requiring an artificial boundary condition. Here the most natural boundary condition is a Dirichlet zero condition as the solution is expected to be rapidly-decaying. For large enough \(L_v\), the mass loss by introducing the zero condition will be below the accuracy of the numerical scheme. For example, the mass contained below \(-8\) for a Gaussian distribution with mean \(-1\) and variance \(1\) is \(6.2210\times 10^{-16}\). The imposition of the boundary condition transforms the ODE into a differential-algebraic equation (DAE), necessitating the use of a specialised DAE solver, such as MATLAB's \emph{ode15s}. 

The convolution term \eqref{Mnl} is similar to that seen in many DDFT equations and schemes such as \emph{2DChebClass} have been shown to accurately compute them \cite{noldPseudospectralMethodsDensity2017}. The non-smoothness of the interaction function \eqref{eq:phiIndicator} does not cause an issue in these calculations as it is not interpolated globally -- instead, the solution \(f\) is interpolated on the region in which \eqref{eq:phiIndicator} is nonzero and the convolution calculated there.  An analogous approach (in 1D) has been used in recent work for models in opinion dynamics, which also feature convolution kernels with finite support~\cite{goddardNoisyBoundedConfidence2022}.

We now explain our choice of free parameters, given in Table \ref{tab:schemeparameters}, which are the contents of the vector \(\zeta\) used in Section \ref{subs:simmetrics}. 
\begin{table}
\centering
\begin{tabular}{c|c}
    \textbf{Name} & \textbf{Description}\\
    \hline 
     \(N_1\) & Number of grid points in \(v\) \\
     \(N_2\) & Number of grid points in \(x\) \\
     \(L_v\) & Truncation point in the velocity domain\\
     \(\alpha_1\) & Concentration of points around \(\alpha_2 = 0\)\\
     \emph{Rel. Tol.} & Relative Tolerance of the underlying ODE solver\\
     \emph{Abs. Tol.} & Absolute Tolerance of the underlying ODE solver\\
\end{tabular}
\caption{ Free parameters in the numerical scheme, collectively named \(\zeta\).}
\label{tab:schemeparameters}
\end{table}
The tolerances of the ODE solver (MATLAB's \emph{ode15s}) are set at \( 10^{-9}\). This value was chosen as a balance between computation time and accuracy. The remaining parameters depend on the initial condition and the magnitude of \(\sigma\) in \eqref{nl}. The number of grid points in velocity \(N_1\) must be sufficient to interpolate, differentiate, and integrate accurately the initial condition and the expected stationary state (a Gaussian centred at \(\pm1\)). As the velocity domain is the one with the tangent mapping and the truncation, the concentration parameter \(\alpha_1\) and the truncation point \(L_v\) are inherently linked to the choice of \(N_1\). Figure \ref{fig:NvsLPlots} shows the error incurred when integrating the stationary distribution \(\mu_-\) with \(\sigma = 0.5\) for a variety of values of \(N_1, L_v\) and \(\alpha_1\). From these plots \(L_v\) was set to be 8 and \(\alpha_1 = 8\). As we expect the velocity distributions to decay like a Gaussian, this value of \(L_v\) allows the domain to contain the vast majority of the mass. In general, for convenience we set \(\alpha_1 = L_v\). Figure \ref{fig:ConvergenceNTan} shows how the error changes with this choice of \(L_v\) and \(\alpha_1\) for increasing \(N_1\). The tangent mapping allows for at least an order of magnitude more accurate computation than the standard linear mapping, while not affecting computation time. From these plots an \(N_1\) value of 50 was chosen. This gives errors two orders of magnitude below that of the ODE solver, while maintaining a reasonable computation time. In the very low noise regime, reducing the truncation parameter \(L_v\) was preferred to increasing the number of grid points. Figure \ref{fig:DynamicNvsLPlots} shows the error in the final time step after evolving the dynamics \eqref{nl}. This shows that the parameters chosen from the static integration are still good choices for the dynamic problem. In all of these figures, the position distribution was uniform. For problems where the initial condition was not uniform, the number of points was chosen on an \emph{ad hoc} basis, ensuring the initial condition was well interpolated. If after evolving the dynamics the solution was not accurate (if for example there were points at which it was negative), \(N_2\) was increased and the simulation repeated; a value of \(N_2=50\) was sufficient for most simulations. 

\begin{figure}
    \centering
    \begin{minipage}{0.3\linewidth}
    \centering
    \includegraphics[width=\linewidth]{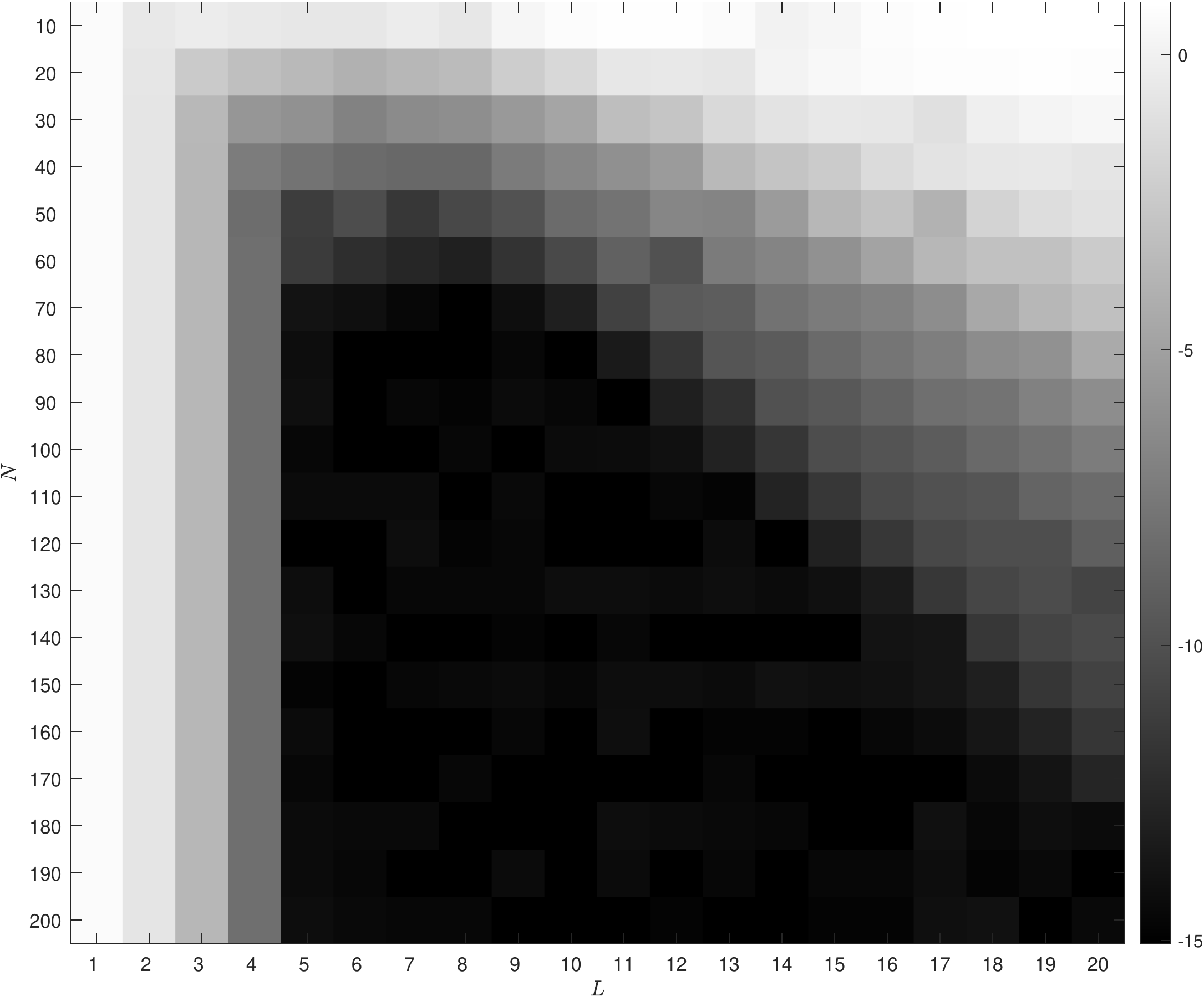}
    \subcaption{ }
    \end{minipage}%
    \begin{minipage}{0.3\linewidth}
    \centering
    \includegraphics[width=\linewidth]{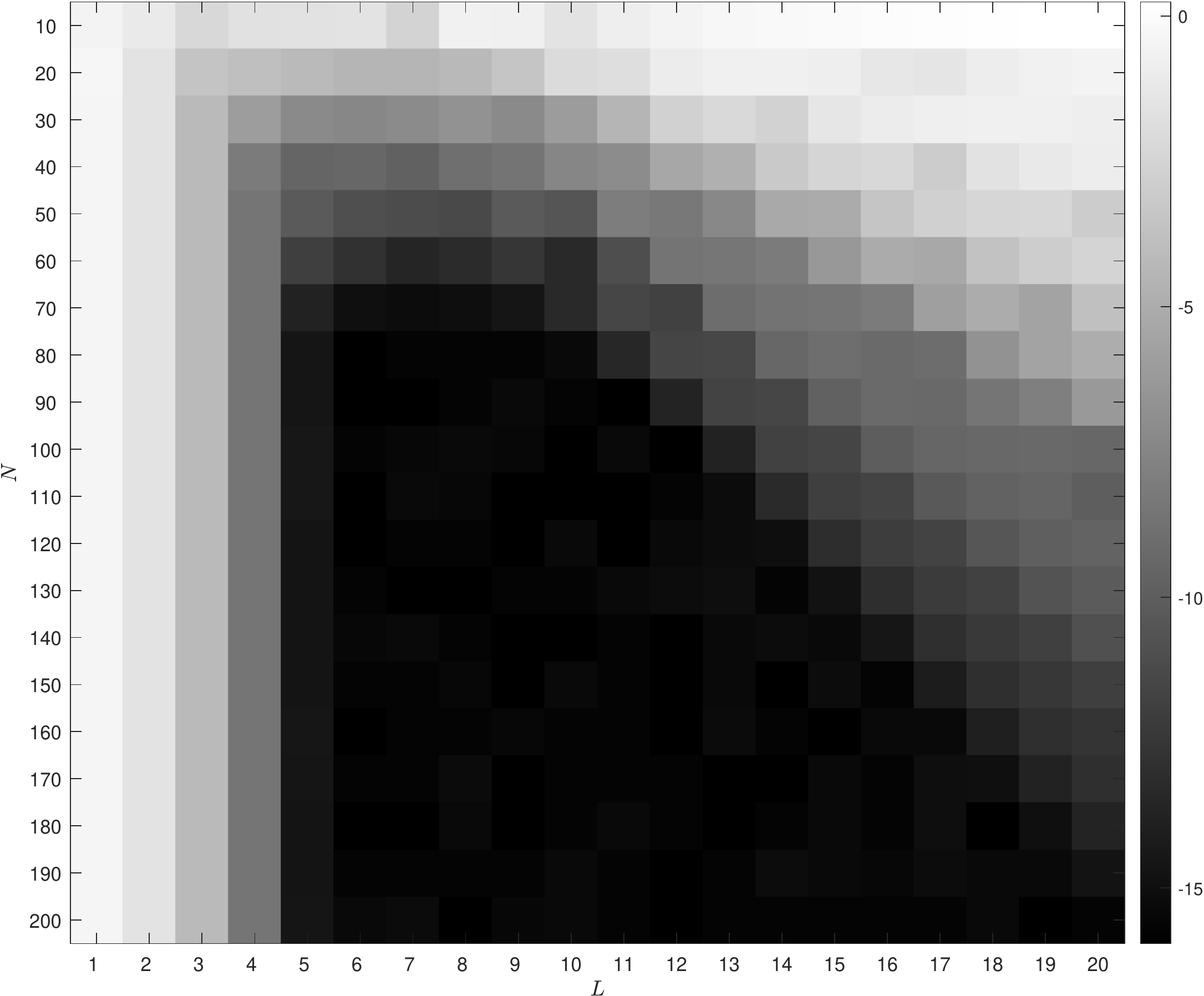}
    \subcaption{ }
    \end{minipage}%
    \begin{minipage}{0.3\linewidth}
    \includegraphics[width=\linewidth]{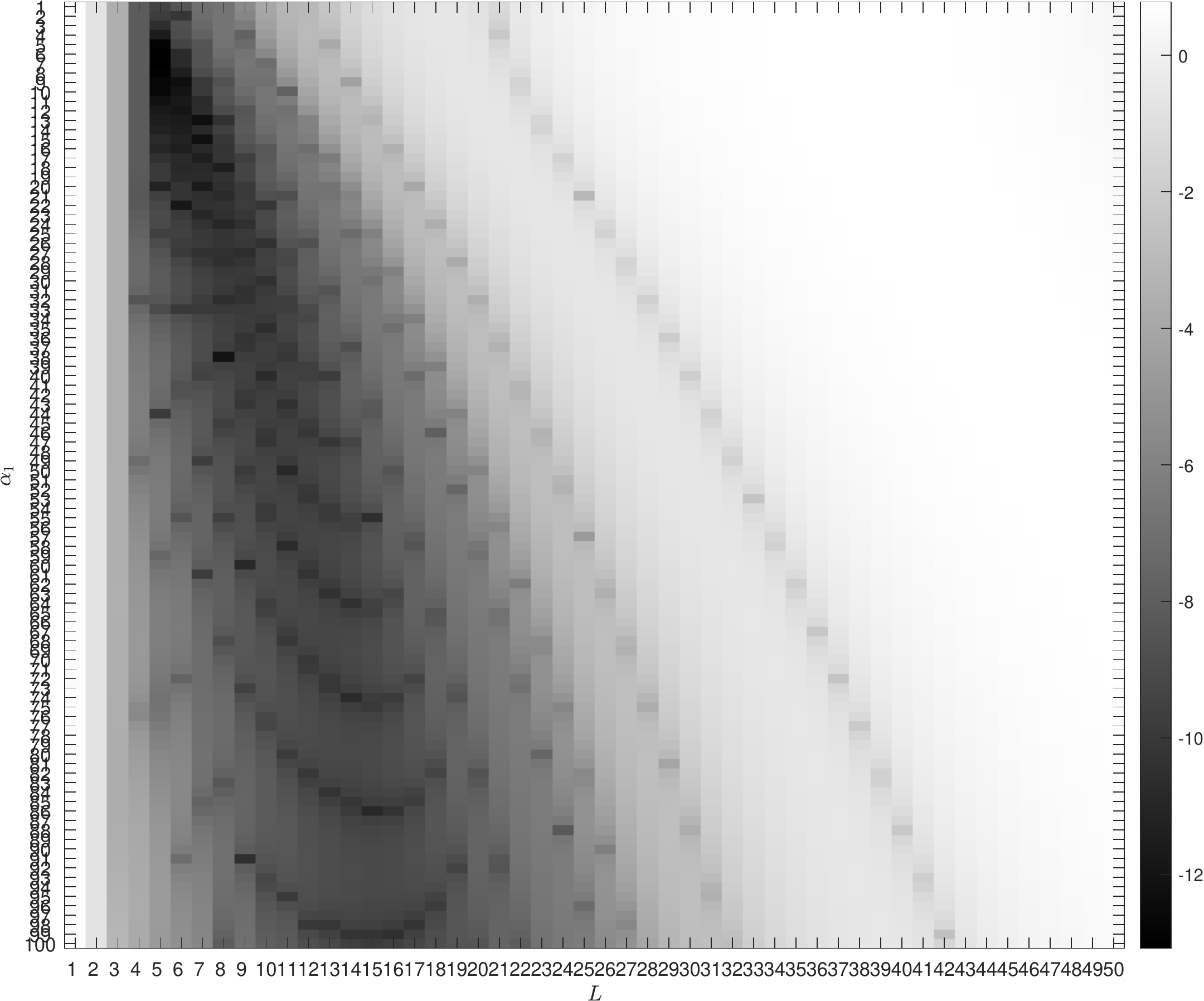}
    \subcaption{ } 
    \end{minipage}
    \caption{Behaviour of the base 10 logarithm of the relative error for the integral of a possible stationary solution of \eqref{nl}, namely \(f^{\zeta}_t(x,v) = \frac{1}{(2\pi)^{3/2}\sigma}\exp\left(-\frac{(v-\mu)^2}{2\sigma^2}\right)\) for \(\mu = -1,\sigma^2 = 0.5\) when discretising \(\mathbb{T}\times\R\) on a bounded domain using the tangent map as a function of \(L_v\) and points in the velocity domain \(N_1\). The number of points in the periodic domain is fixed at \(N_2 = 20\). The error ranges from \(10^0\) in white, to \(10^{-15}\) in black. Beyond a certain value of \(L_v\), the error begins to increase. This is because there are not sufficient points around \(-1\) to interpolate the Gaussian accurately. There is also a cutoff for low \(L_v\) where the discretised domain contains very little of the mass of the density. The left-most plot (a) shows the error in the mass, \(\left|\iint f_t(x,v) \dif x\dif v - 1\right|\) while the centre plot (b) show the error in the first      moment,\(\left|\frac{\iint v f^{\zeta}_t(x,v) \dif x \dif v}{\iint f^{\zeta}_t(x,v) \dif x \dif v} - \mu\right|\). We normalise by the mass in the system so as not to include the error incurred in (a). From these plots, a value \(N_1=50\) was chosen for further simulations. The right-most plot (c) varies \(\alpha_1\) and \(L_v\), fixing \(N_1 = 50\). The concentration parameter \(\alpha_1\) is less sensitive than \(L\) and a range of values show very low errors. For ease, we fix \(\alpha_1 = L_v = 8\). }
    \label{fig:NvsLPlots}
\end{figure}

\begin{figure}
    \centering
    \begin{minipage}{0.47\linewidth}
    \centering
    \includegraphics[width=\linewidth]{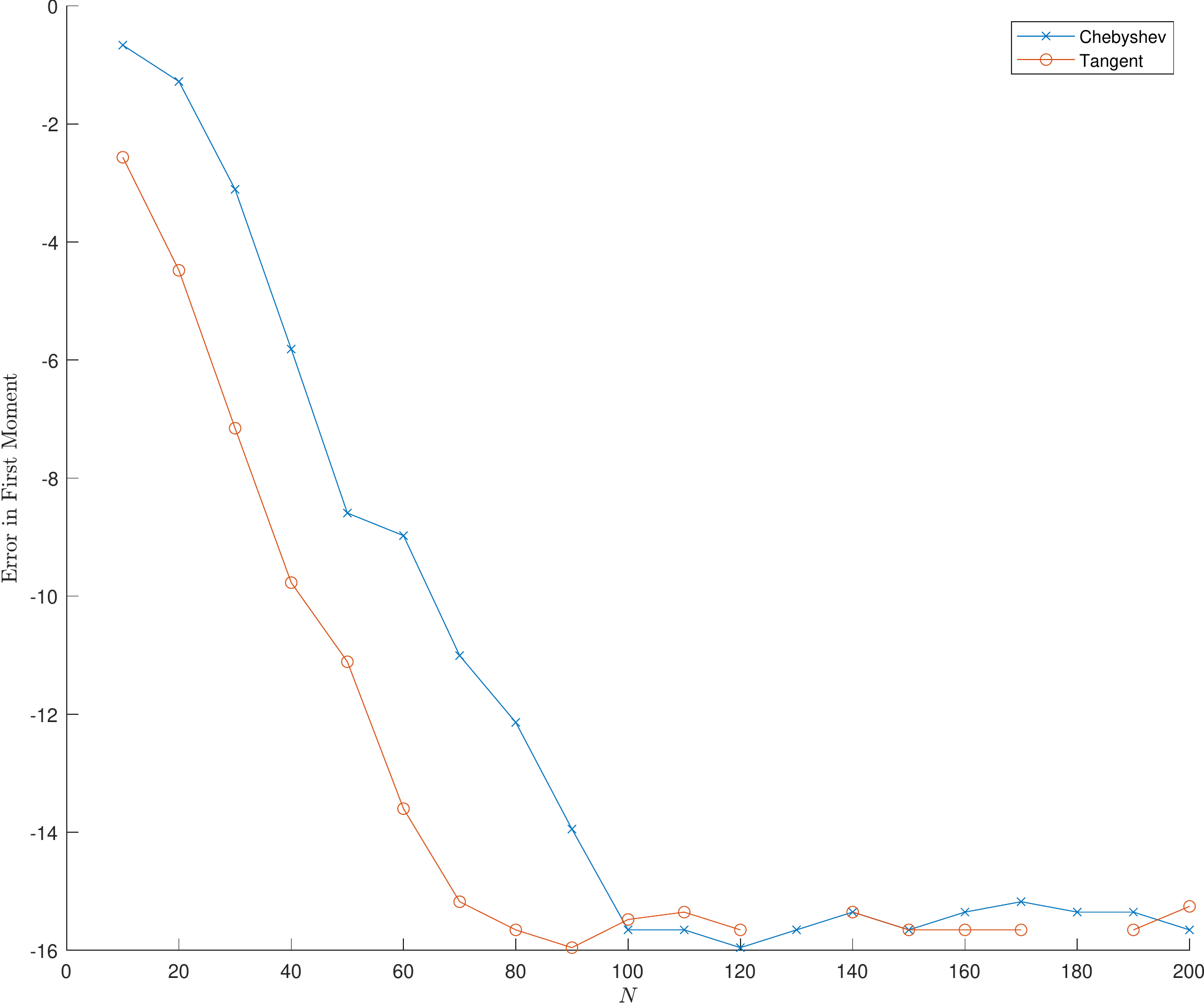}
    \subcaption{ }
    \end{minipage}%
    \begin{minipage}{0.47\linewidth}
    \centering
    \includegraphics[width=\linewidth]{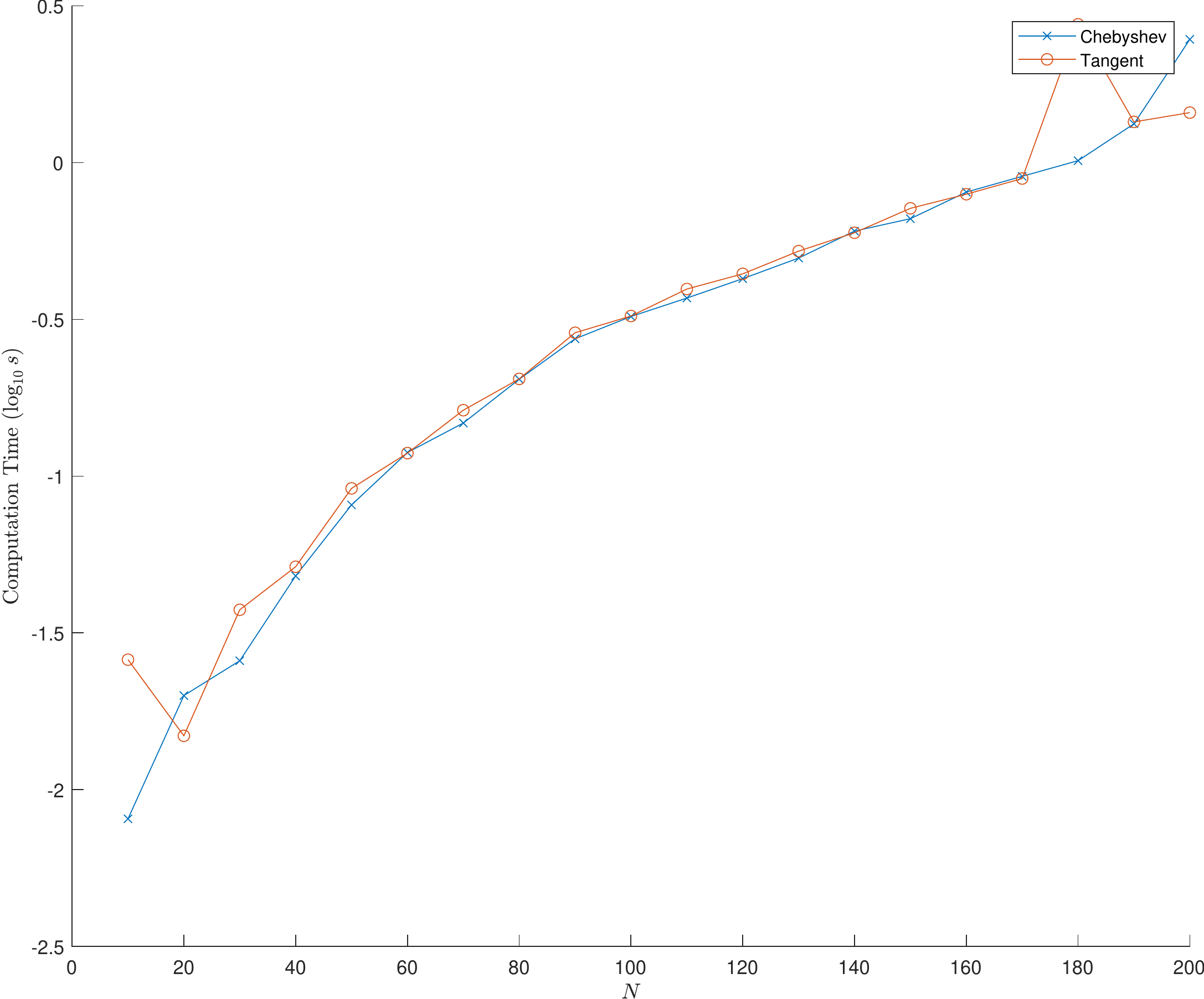}
    \subcaption{ }
    \end{minipage}
    \caption{A comparison of the standard linear mapping of the Gauss-Lobatto-Chebyshev points and the tangent mapping. Using the tangent mapping has little effect on computation time, scaling very similarly to the standard mapping (b). However, the left plot (a) shows that it is capable of decreasing the error by an order of magnitude for fixed \(N_1\) compared to the standard mapping. Here we plot the \(L_1\) norm of the error in the first moment when numerically integrating \(\mu_-\) with \(\sigma^2 = 0.5\), \(\left|\frac{\iint v f^{\zeta}_t(x,v) \dif x \dif v}{\iint f^{\zeta}_t(x,v) \dif x \dif v} - (-1)\right|\). Other parameters were \(\alpha_1 = 8, L_v = 8, N_2=40\). }
    \label{fig:ConvergenceNTan}
\end{figure}

\begin{figure}
    \centering
    \begin{minipage}{0.45\linewidth}
    \centering
    \includegraphics[width=\linewidth]{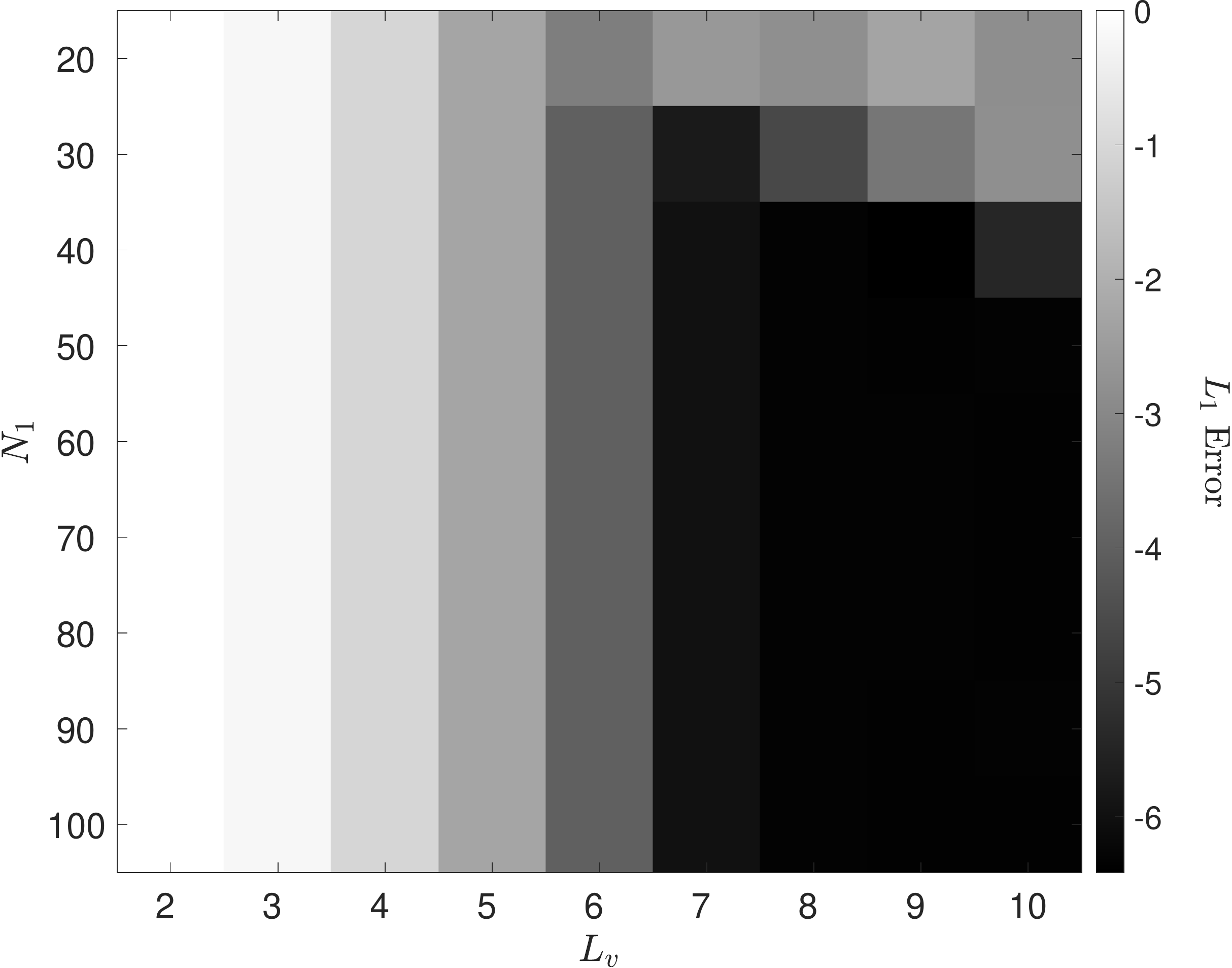}
    \subcaption{ }
    \end{minipage}%
    \begin{minipage}{0.45\linewidth}
    \centering
    \includegraphics[width=\linewidth]{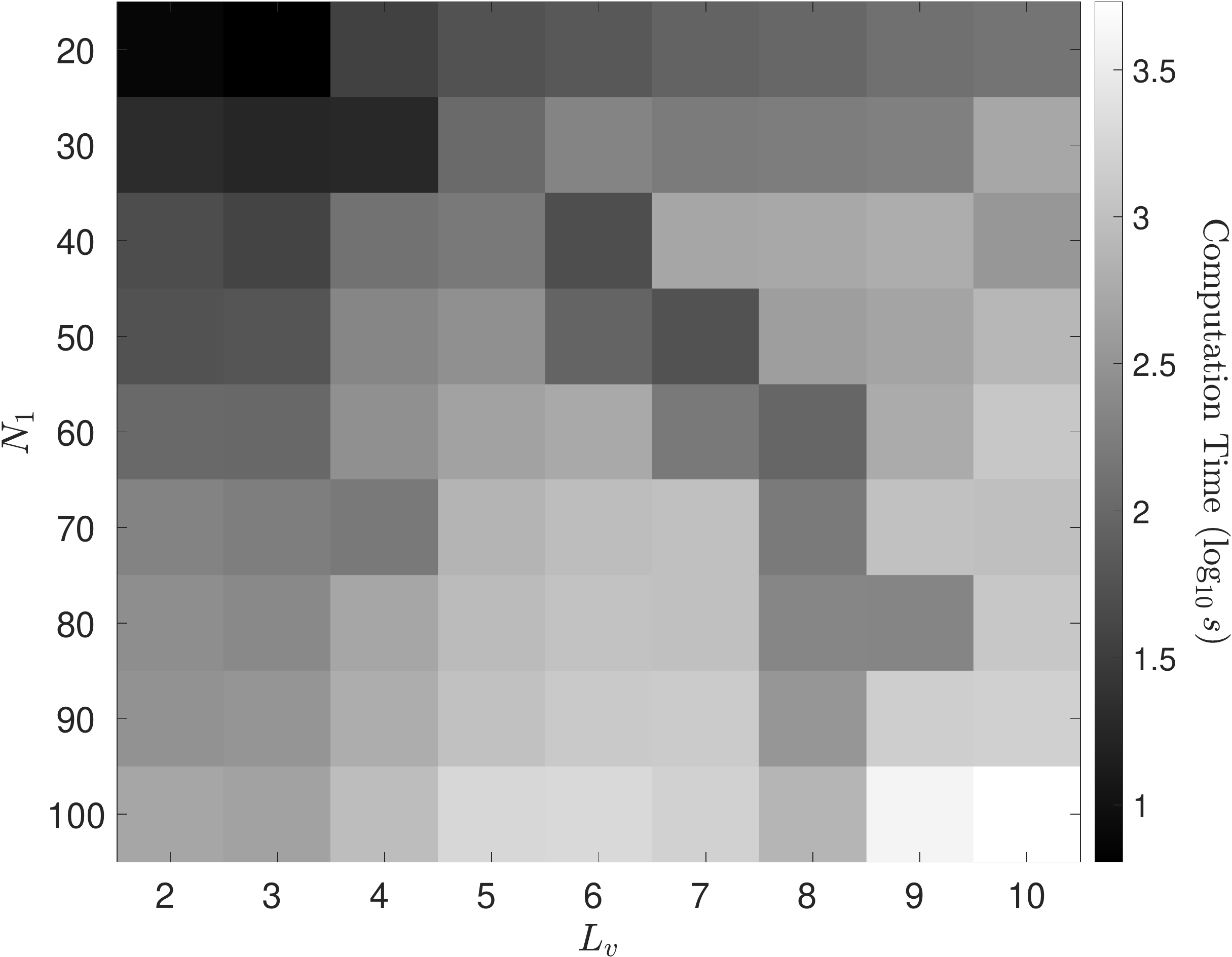}
    \subcaption{ }
    \end{minipage}
    \caption{The dynamics \eqref{nl} were simulated with IF \(\varphi\equiv1\), a smooth herding function \(G(u) = \frac{\atan(u)}{\atan(1)}\) and \(\sigma = 1\). The initial condition is uniform in position and Gaussian in velocity, with mean \(-0.5\) and variance \(0.5\). The number of position  points was fixed at \(N_2 = 40\) while the truncation point \(L_v\) was varied between 1 and 10. The number of velocity points \(N_1\) was varied from 20 to 100. The left plot (a) shows the difference between the simulated mean and the true mean of the stationary distribution. As the system is started out of stationarity, the maximum in the error was taken on the interval \(\lbrack 40,60 \rbrack\), after the moments have apparently converged. The right plot (b) shows the computation time for each \(N_1\) and \(L_v\) value. These plots confirm our earlier choice of \(L_v = 8, N=50\) -- increasing either parameter beyond this point will slow computation time without increasing accuracy a significant amount. The system was integrated using \emph{Abs. Tol.} \(=\) \emph{Rel. Tol.} \( = 10^{-9}\).  }
    \label{fig:DynamicNvsLPlots}
\end{figure}


\section*{Acknowledgements}
TMH was supported by The Maxwell Institute Graduate School in Analysis
and its Applications, a Centre for Doctoral Training funded by the EPSRC
(EP/L016508/01), the Scottish Funding Council, Heriot-Watt University and
the University of Edinburgh.
KJP acknowledges departmental funding through the `MIUR-Dipartimento di Eccellenza' programme.

\printbibliography
\end{document}